\title{A Physical Origin for\\ Singular Support Conditions in\\ Geometric Langlands Theory}
\author{Chris Elliott and Philsang Yoo }
\date{\today}
\DeclareMathOperator{\EOM}{EOM}
\newcommand{\vac}{\mc{V} \text{ac}}
\newcommand{\del}{\partial}
\def\d{{\rm d}}
\newcommand{\IC}{\mathrm{IndCoh}}
\renewcommand{\flat}{\mathrm{Flat}}
\newcommand{\arth}{\mathrm{Arth}}
\newcommand{\sing}{\mathrm{Sing}}
\newcommand{\dR}{\mathrm{dR}}
\begin{document}
\maketitle

\begin{abstract}
We explain how the nilpotent singular support condition introduced into the geometric Langlands conjecture by Arinkin and Gaitsgory arises naturally from the point of view of 4-dimensional $\mc N=4$ supersymmetric gauge theory.  We define what it means in topological quantum field theory to restrict a category of boundary conditions to the full subcategory of objects compatible with a fixed choice of vacuum, both in functorial field theory and in the language of factorization algebras.  For B-twisted $\mc N=4$ gauge theory with gauge group $G$, the moduli space of vacua is equivalent to $\hh^*/W$, and the nilpotent singular support condition arises by restricting to the vacuum $0 \in \hh^*/W$.  We then investigate the categories obtained by restricting to points in larger strata, and conjecture that these categories are equivalent to the geometric Langlands categories with gauge symmetry broken to a Levi subgroup, and furthermore that by assembling such for the groups $\GL_n$ with ${n\geq 1}$ one finds a hidden factorization algebra structure for the geometric Langlands theory.
\end{abstract}

\begingroup
\def\addvspace#1{}
\tableofcontents
\endgroup

\section{Introduction} \label{introduction}

\subsection{Goals} \label{goals_section}
In this paper we will use ideas from derived algebraic geometry to describe a novel structure associated to topological quantum field theories: the derived moduli space of vacua.  Our main motivation for studying this structure comes from one family of examples, namely the 4-dimensional topologically twisted field theories introduced by Kapustin and Witten \cite{KW} in order to describe a physical origin for the geometric Langlands conjecture.  In Kapustin and Witten's approach the categories appearing in the geometric Langlands conjecture arise as categories of boundary conditions in these topologically twisted field theories.  However, this analysis does not explain the essential refinement to the conjecture that was introduced by Arinkin and Gaitsgory \cite{ArinkinGaitsgory}.  We will explain how Kapustin and Witten's categories of boundary conditions fiber over the derived moduli space of vacua, so that the Arinkin--Gaitsgory refined category arises by imposing a natural support condition.

As this summary suggests, the goals of this paper are two-fold: to introduce and motivate the idea of the derived moduli space of vacua and the relationship between this structure and categories of boundary conditions, and to explain why applying this general notion to examples coming from Kapustin and Witten's work explains the refinement of Arinkin and Gaitsgory.  In order to set up these goals we will discuss the context for quantum field theory with which we will work, and explain why derived algebraic geometry will play an essential role.

\begin{remark}
We'll use a number of definitions and ideas from the theory of derived algebraic geometry and homotopical algebra in this paper.  In order to make this paper accessible to readers who aren't familiar with these fields we will endeavor to briefly introduce these ideas when they first appear.  We will refer the reader to the relevant literature for more complete and comprehensive explanations; for overall references we recommend the excellent survey articles of To\"en \cite{ToenOverview,ToenSurvey}, and the book of Gaitsgory and Rozenblyum \cite{GRvol1, GRvol2}.
\end{remark}

To begin with, we should emphasize that there is not yet a satisfactory mathematical formalism for quantum field theory that includes the examples we are most interested in (topologically twisted gauge theories such as those described by Kapustin and Witten) and models the categories of boundary conditions in these theories.  We will work with a mathematical model for a \emph{perturbative} topological quantum field theory along with a single dg category modelling boundary conditions.  More explicitly we will use the following framework.
\begin{itemize}
 \item There is a very general formalism available for the study of \emph{perturbative} quantum field theory on $\RR^n$ given by the theory of factorization algebras as developed by Costello and Gwilliam \cite{CostelloGwilliam1, CostelloGwilliam2}.  We will introduce this formalism in Section \ref{fact_alg_background_section}, but in brief we assign a cochain complex $\obs(U)$ of local observables to each open subset $U \sub \RR^n$, along with a linear operator product map $\obs(U_1) \otimes \cdots \obs(U_k) \to \obs(V)$ to every configuration of disjoint open sets $U_1 \sqcup \cdots U_k \inj V$.  We'll be particularly interested in a special class of factorization algebra, where the local observables associated to homotopy equivalent open sets are quasi-isomorphic: these are called $\bb E_n$-algebras, and are associated to \emph{topological} field theories.
 \item In addition to this perturbative data, we'll specify a dg category modelling \emph{boundary conditions} for the theory on the manifold $M^{n-2} \times \RR \times \RR_{\ge 0}$.  The idea that boundary conditions in a topological field theory should be represented by a dg category is explained, for example, by Kapustin \cite{KapustinNote}; in brief the idea is that possible boundary conditions are represented by objects of the category, and for any pair of objects $\mc B_1$ and $\mc B_2$ the Hilbert space of the theory on $M^{n-2} \times \RR \times [0,t]$ with the two given boundary conditions is represented by the hom-space $\mr{Hom}(\mc B_1, \mc B_2)$.  In our model we'll additionally specify an action of the $\bb E_n$-algebra $\mc A$ of local observables on the category $\mc B$ of boundary conditions.  By this we mean an action of the monoidal \footnote{The monoidal structure we'll use depends on a choice of plane $\RR^2 \sub \RR^n$.  We'll explain this in Section \ref{restriction_background_section}.} dg category $\mc A\text{-mod}$ on the category $\mc B$. 
 Why is this sort of action of the local observables natural from a field-theoretic point of view?  There are two ways of motivating this action: there's a construction on the level of dg categories that passes through the dg category of line operators, and there's a construction on the level of algebras using an equivalent characterization of an $\mc A\text{-mod}$ action on $\mc B$ via a universal property.  We defer an explanation until Section \ref{construction_intro_section}.
\end{itemize}
It should be possible to associate the data of an $\bb E_n$-algebra acting on a dg category to any choice of a topological quantum field theory and an $(n-2)$-manifold $M$.  What's more this data describes a non-trivial part of the topological field theory on $M^{n-2} \times \RR \times \RR_{\ge 0}$: the complete structure of the perturbative theory in the bulk, as well as the possible boundary conditions as a categorical ``module'' for these perturbative local observables, suggesting that this model is not purely a contrivance but a useful model for understanding aspects of topological field theory.  Let's discuss what this structure looks like in some examples.

\begin{examples}
\begin{enumerate}
 \item The simplest family of examples that fit into this formalism are 2d fully extended TQFTs in the functorial sense (so, in particular, $M$ is a point).  In 2d TQFTs there is a single dg category $\mc C$ of boundary conditions, and the algebra of local observables is modelled by the algebra of Hochschild cochains of $\mc C$.  The Hochschild cochains of a dg category $\mc C$ have a canonical $\bb E_2$ structure and a canonical action on $\mc C$ (for more details see Section \ref{restriction_background_section}).  For example, let $X$ be a smooth projective variety, and consider the dg category $\coh(X)$ of coherent complexes of sheaves on $X$.  This is the category of boundary conditions in the 2d topological B-model with target space $X$.  The algebra of local observables in this theory (or rather, its cohomology) can be described as the Dolbeault cohomology $\mr H^{0,\bullet}(X; \wedge^\bullet T^{1,0}_X)$ of $X$ with coefficients in the bundle of holomorphic polyvector fields (see e.g. \cite[Section 4.3]{BBZBDN}).
\item Another 2d example is provided by the Landau--Ginzburg B-model with target $\CC^n$ and superpotential $W \colon \CC^n \to \CC$.  The category of boundary conditions in this theory can be described as the category $\mr{MF}(W)$ of matrix factorizations of $W$ \cite{KapustinLi}. According to a theorem of Dyckerhoff \cite{Dyckerhoff}, the algebra of local observables (or rather, its cohomology) is equivalent to the Jacobian ring $\mr{Jac}(W)$ of $W$, i.e. to the quotient of the polynomial ring $\CC[z_1, \ldots, z_n]$ by the partial derivatives of $W$.
\item To give a more informal example, 2d A-models are expected to have, as their categories of boundary conditions, versions of the Fukaya category of the target.  For example, Ganatra \cite{GanatraThesis} studied the Hochschild cohomology (so the algebra of local observables in an A-model type theory) of wrapped Fukaya categories associated to an exact symplectic manifold $X$.  He proved that the Hochschild cohomology was equivalent to an invariant known as the symplectic cohomology of $X$.
\item Higher dimensional examples are typically more difficult to define rigorously.  As well as the four-dimensional theory that is the main focus of this paper, let us briefly describe a three-dimensional example, that of Rozansky--Witten theory.  The algebra of local observables of this theory (or rather, its cohomology) is discussed at length in \cite[Section 5]{BBZBDN}.  The category of boundary conditions on the circle has also been described in the physics literature by Kapustin, Rozansky, and Saulina \cite{KapustinRozanskySaulinaI, KapustinRozanskyII}, though to our knowledge the action of the algebra of observables on the category of boundary conditions has not yet been studied.
\end{enumerate}
\vspace{-10pt}
\end{examples}

The main goal of this paper is to study this structure -- of the dg category of boundary conditions and the action of local observables -- for a particular topological field theory: the 4d B-twisted gauge theory relevant for the geometric Langlands program.  We'll argue that by considering the action of the algebra of local observables on Kapustin and Witten's dg category we can explain, from the point of view of quantum field theory, an important and previously physically unexplained structure in the geometric Langlands program: the condition of nilpotent singular support.  In order to explain our approach we'll first review the highlights of the geometric Langlands program, the theories of Kapustin and Witten and our model for the geometric Langlands twists using derived algebraic geometry from \cite{EY1}.

\begin{remark}
Let us briefly remark on the categorical context in which we'll be working.  By a ``category'', in this paper, unless otherwise specified, we will always mean an $\infty$-category. In particular we can view dg categories over the complex numbers as stable $\infty$-categories, for instance as discussed in \cite[I.1.10]{GRvol1}: all of the main results in this paper will take place in that context.  Likewise, when we talk about associative algebras, or more generally algebras over any operad, we'll always mean algebras in the category of cochain complexes over $\CC$. 
\end{remark}

\subsection{Algebraic Moduli Spaces in Kapustin--Witten Gauge Theory} \label{KW_intro_section}
In this subsection let us recall some aspects of Kapustin and Witten's approach, and explain the context in which we interpreted Kapustin and Witten's theory in our previous work \cite{EY1}.  We'll start with the idea of a topological twist.  Recall that topological twisting is a procedure that extracts a topological field theory from a quantum field theory with an odd symmetry $Q$ that squares to zero (usually obtained from the action of a supersymmetry group).  This formalism can be interpreted using the language of derived algebraic geometry: this interpretation is the main object of study of \cite{EY1}, based on insights of Costello \cite{CostelloSH}.

\begin{remark} \label{derived_EOM_remark}
Let us remark on the role played by derived geometry in field theory, from the point of view of this paper.  When we talk about, for instance, the derived space of solutions to the equations of motion (or the derived critical locus of the action functional) we mean the following.  The critical locus of the action functional $S$ in the space of fields $\Phi$ of a classical field theory can be thought of as the intersection in the cotangent space $T^*\Phi$ of the zero section $\Phi$ with the graph $\Gamma_{\d S}$ of the one-form $\d S$ -- in other words the locus in $\Phi$ where $\d S = 0$.  The \emph{derived} critical locus is an enhancement of this usual intersection: geometrically one can interpret the derived intersection $X_1 \cap^h X_2$ as modelling intersection points after a small perturbation of $X_1$ that makes the intersection transverse, but where we remember the choice of perturbation as part of the data.  More concretely, if $X_1 = \spec R_1 $ and $X_2 = \spec R_2$ are affine subschemes of $Y = \spec S$, the derived intersection is obtained from the derived tensor product:
\[\spec R_1 \cap^h \spec R_2 = \spec (R_1 \otimes^{\bb L}_S R_2),\]
meaning that the $S$-module $R_1$, say, is replaced by an appropriate resolution.  In classical field theory this derived intersection is described locally near a fixed classical solution by the classical BV-BRST formalism.
\end{remark}

We applied the formalism for twisting in terms of derived algebraic geometry to a family of examples first studied by Kapustin and Witten.  Let $G_c$ be a compact connected Lie group with complexification $G$.  Kapustin and Witten described a $\bb{CP}^1$-family of topological twists of a gauge theory with gauge group $G_c$ and $\mc N=4$ supersymmetry.  We can summarize some of the most important properties of this family as described in \cite{KW} as follows.
\vspace{-10pt}
\begin{itemize}
\item There is a $\bb{CP}^1$ parameter denoted $\Psi$, a rational combination of the choice of twisting supercharge and the coupling constant, and the twisted quantum field theory depends only on $\Psi$.  These theories have the property that S-duality interchanges the twisted theory with gauge group $G_c$ at the parameter $\Psi$ with the twisted theory with Langlands dual gauge group $G_c^\vee$ at the antipodal parameter $-1/\Psi$.
\item Compactify the twisted theory with gauge group $G_c$ at $\Psi = 0$ on a Riemann surface $C$. The resulting 2d theory is equivalent to the A-model with target $\higgs_G(C)$ -- the moduli stack of $G$-Higgs bundles.  The category of boundary conditions in this theory is equivalent to the category $\text{D}(\bun_G(C))$ of D-modules on $\bun_G(C)$ -- the moduli stack of holomorphic $G$-bundles.
\item Compactify the twisted theory with gauge group $G_c^\vee$ at $\Psi = \infty$ on a Riemann surface $C$. The resulting 2d theory is equivalent to the B-model with the target $\loc_{G^\vee}(C)$ -- the moduli stack of $G^\vee$ local systems.  The category of boundary conditions in this theory is equivalent to the category $\QC(\loc_{G^\vee}(C))$ of quasi-coherent sheaves on $\loc_{G^\vee}(C)$.
\item S-duality induces an equivalence between these categories of boundary conditions, and therefore an equivalence
\[\text{D}(\bun_G(C)) \iso \QC(\loc_{G^\vee}(C)).\]
\end{itemize}

There is an obstacle that arises when we try to compare this story to the geometric Langlands story as it appears in geometric representation theory: there is no visible dependence on the algebraic structure of $C$.  As we'll discuss in the next subsection, the usual geometric Langlands equivalence depends on this algebraic structure.  Most visibly the stack appearing on the right-hand side is not the stack $\loc_{G^\vee}(C))$ of $G^\vee$-local systems on $C$, but the stack $\flat_{G^\vee}$ of $G^\vee$-bundles with flat connection.  These two stacks are analytically equivalent, but algebraically quite different.  The stack of local systems only depends on the topology of $C$, whereas the stack of flat connections depends on its structure as an algebraic curve.

\begin{remark}
Throughout this paper we use the notation $\flat_G(C)$ for the (derived) moduli stack of algebraic flat $G$-bundles on $C$.  In the geometric representation theory literature this is usually denoted by $\mr{LocSys}_G(C)$ or $\loc_G(C)$, a notation we prefer to reserve for the moduli stack of $G$-local systems (distinguished as above).
\end{remark}

\begin{remark}
In addition to the algebraic version of the geometric Langlands conjecture there is a topological version -- the \emph{Betti Langlands} conjecture -- introduced by Ben-Zvi--Brochier--Jordan \cite{BZBJ} and Ben-Zvi--Nadler \cite{BZNBettiLanglands}, which they call ``Betti Langlands'' (see also work by Nadler and Yun \cite{NadlerYun}). 
\end{remark}

In our paper \cite{EY1} we applied a general construction of twisting to the example of $\mc N=4$ super Yang--Mills theory in order to describe the (derived) moduli spaces of solutions to the equations of motion in the Kapustin--Witten family of twists.  We can identify these moduli spaces as (derived) algebraic stacks in a natural way in order to obtain the algebraic moduli stacks appearing in the usual geometric Langlands conjecture.  We'll summarize these results in Section \ref{twisting_section}.

\subsection{Geometric Langlands and Singular Support} \label{GL_intro_section}
The original Langlands program consists of a tantalizing set of conjectures which relate harmonic analysis, representation theory, algebraic geometry, and algebraic number theory. In particular, the reciprocity conjecture, in its simplest form, expects a close relationship between certain automorphic representations of $G(\bb A_\bb Q)$ and homomorphisms from the Galois group $\mr{Gal}(\overline{\bb Q}/\bb Q )$ to the Langlands dual group $G^\vee$.

Through Weil's well-known ``Rosetta stone'' \cite{WeilLetters}, there is an analogy between number fields and function fields of curves over a finite field, from which one can make a corresponding version of the Langlands reciprocity conjecture for function fields. Moreover, over function fields one can geometrize the problem via Grothendieck's function-sheaf correspondence. If we restrict our attention only to the unramified case, one replaces automorphic functions of $G(\bb A_F)$ for a number field $F$ by automorphic sheaves over $\bun_G(C)$, and $G^\vee$-representations of $\pi^{ \text{\'et}}_1( F)= \mr{Gal}(\overline{ F} / F )$ by $G^\vee$-local systems on $C $. It is a form of this conjecture that Drinfeld proved for $G=\GL_2$ \cite{DrinfeldGL2}, after which Laumon formulated a precise conjecture for $\GL_n$ \cite{LaumonFunctionFields}. Later the work of Frenkel, Gaitsgory, and Vilonen \cite{FGV} combined with a result of Gaitsgory \cite{GaitsgoryVanishing} explicitly constructed a cuspidal automorphic sheaf associated to an irreducible local system in unramified cases.

Once formulated in such geometric terms, one can also make such a conjecture for the function field of a curve $C$ over the field $\bb C$ of complex numbers, rather than over a finite field, and directly translate the results discussed above to this complex algebraic context. On the other hand, since the geometry becomes easier, one can hope to prove a stronger result (at least in the unramified case): instead of just a set-theoretic relationship between cuspidal sheaves on the moduli stack $\bun_G(C)$ of algebraic $G$-bundles and irreducible local systems on $C$, one would like to construct an equivalence of \emph{categories} in which those objects naturally sit. 

In the complex geometric setting, an algebraic local system can be replaced by a de Rham local system.  For instance, $G^\vee$-local systems can be thought of as the skyscraper sheaves on the moduli stack $\Flat_{G^\vee}(C)$ of flat $G^\vee$-bundles.  On the other side, an automorphic sheaf, in the unramified case, can be described as a D-module on $\bun_G(C)$: the category of constructible sheaves makes sense over $\bb C$, and can then be embedded into the larger, more algebraic, category of D-modules by the Riemann--Hilbert correspondence. With this in mind, Beilinson and Drinfeld imagined the following beautiful guiding picture.

\begin{metaconjecture}[``Best Hope'' Conjecture]
There is an equivalence of dg categories \[ \bb L_G \colon \text{D}(\bun_G(C)) \simeq \QC( \Flat_{G^\vee}(C))\] compatible with the actions of natural symmetries on both sides.
\end{metaconjecture}

This statement should be viewed as motivational but imprecise.  When $G$ is abelian, the conjecture is literally true.  This was proved independently by Laumon \cite{Laumon} and Rothstein \cite{Rothstein} using a twisted Fourier--Mukai transform adapted to send D-modules on an abelian variety $A$ to quasi-coherent sheaves on a twisted form of $T^*A^\vee$.   One must slightly augment Laumon and Rothstein's equivalence in order to include stackiness on the left-hand side (the automorphisms of the trivial line bundle) and ``derived-ness'' on the right-hand side (the derived fiber product $\pt \times_{\gg} \pt$ representing the possible transition data allowing one to glue together the zero flat connection over an affine cover of $C$).  This augmentation is straightforward, and is explained for instance in \cite[Remark 11.2.7]{ArinkinGaitsgory}.

As soon as $G$ is non-abelian the two categories, as stated above, fail to be equivalent. It turns out that the category on the A-side is ``too big''; that there are D-modules which correspond to no sheaf on the B-side. An example of this was verified by V. Lafforgue \cite{Lafforgue} for the curve $\bb P^1$, who demonstrated an equivalence between $\mr{QC}(\Flat_{G^\vee}(\bb P^1))$ and a proper subcategory of $\mr{D}(\bun_G(\bb P^1))$.  As such, in order to rescue the geometric Langlands conjecture we need to either enlarge the category on the B-side, or shrink the category on the A-side.

Arinkin and Gaitsgory \cite{ArinkinGaitsgory} proposed a category enlarging $\QC(\Flat_{G^\vee}(C))$ that has the potential to repair the geometric Langlands conjecture.  This enlargement takes place in two steps.  First one makes the following observation.  The stack $\flat_{G^\vee}(C)$ is not smooth, which means that the categories of perfect and coherent complexes of sheaves do not coincide.  The category of quasi-coherent sheaves is compactly generated by the category of perfect complexes.  If one instead forms the category compactly generated by all coherent complexes one obtains a strictly larger category of \emph{ind-coherent sheaves}.  More concretely this is the ind-completion of the category of coherent complexes, which means that objects can be thought of as formal filtered colimits of coherent complexes of sheaves.  We denote this category by $\IC(\Flat_{G^\vee}(C))$.

One way to see that this is natural is to consider a piece of structure in the geometric Langlands program called \emph{geometric Eisenstein series functors} -- a kind of functorial compatibility when we vary the group $G$.  Let $P \sub G$ be a parabolic subgroup with Levi quotient $L$.  The geometric Eisenstein series functor associated to $P$ is a functor
\[\mr{Eis}_P \colon \text{D}(\bun_L(C))\to \text{D}( \bun_G(C),\]
defined by pulling back a D-module to $\bun_P(C)$ then pushing it forward to $\bun_G(C)$.  Likewise on the spectral side we can try to define a corresponding functor
\[\mr{Eis}^{\mr{spec}}_{P^\vee} \colon \QC( \Flat_{L^\vee}(C) ) \to \QC( \Flat_{G^\vee}(C)).\]
When we do this, however, we run into a problem.  We expect geometric Langlands duality to intertwine these two functors (up to tensoring by a line bundle), but the functor on D-modules is well-behaved (it preserves compact objects) while the functor on quasi-coherent sheaves is not.  This problem is fixed by replacing quasi-coherent by ind-coherent.

However the category $\IC(\Flat_{G^\vee}(C))$ turns out to be too big.  For instance one can check this directly in the example $C = \bb{CP}^1$ using the calculation of V. Lafforgue.  Arinkin and Gaitsgory propose that one should fix this by considering the smallest possible subcategory of $\IC(\Flat_{G^\vee}(C))$ where all the geometric Eisenstein series functors for all the parabolic subgroups of $G^\vee$ are well-behaved.  The main theorem of \cite{ArinkinGaitsgory} says that this subcategory is precisely the category $ \IC_{\mc N_{G^\vee} } ( \Flat_{G^\vee}(C))$ consisting of objects satisfying the nilpotent singular support condition.

This leads one to the following corrected form of the best hope conjecture.

\begin{conjecture}[Arinkin--Gaitsgory]
There is an equivalence of dg categories \[ \bb L_G \colon \text{D}(\bun_G(C)) \simeq \IC_{\mc N_{G^\vee}} ( \Flat_{G^\vee}(C  ))\] compatible with the actions of natural symmetries on both sides.
\end{conjecture}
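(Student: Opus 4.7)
The overall plan is to realize the conjectural equivalence as a mathematical consequence of S-duality for topologically twisted $N=4$ gauge theory, reinterpreted in a way that keeps track of the algebraic structures on $C$ (as established in \cite{EY1}). Taking S-duality as input, one has an equivalence between the full category of boundary conditions for the $G$-theory at $\Psi = 0$ and for the $G^\vee$-theory at $\Psi = \infty$. After compactifying along $C$, these should become the categories of boundary conditions for the two-dimensional theories whose classical spaces of solutions are described by the main theorem of \cite{EY1}. The plan is then to match these boundary condition categories to $\text{D}(\bun_G(C))$ and $\IC_{\mc N_{G^\vee}}(\Flat_{G^\vee}(C))$ respectively.

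First I would pin down the category of boundary conditions on the B-side. The naive quantization of the AKSZ target $T_{\mr{form}}^*[-1]\Flat_{G^\vee}(C)$ suggests $\IC(\Flat_{G^\vee}(C))$ as the category of boundary conditions before imposing any vacuum compatibility. Next I would verify that the algebra of 4-dimensional local operators in the B-twist -- equivalent to $\OO(\gg^{\vee *}[2]/G^\vee)$ with affinization $\OO(\hh^{\vee *}[2]/W)$ -- acts naturally on this category in a way that descends from the Hochschild cochain description of its center. Then, applying the main construction of the present paper, one restricts this category to the closed vacuum $0 \in \hh^{\vee *}[2]/W$; this restriction should be identified with $\IC_{\mc N_{G^\vee}}(\Flat_{G^\vee}(C))$, matching the support condition with the spectral side. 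On the A-side one proceeds analogously: the category associated to the AKSZ target of the A-twist is a form of $\text{D}(\bun_G(C))$, and one must check that the A-side algebra of local operators acts in a way that either collapses to a single vacuum after compactification on $C$ (so no additional condition is imposed) or produces the compatible condition under S-duality. Finally, S-duality, together with the compatibility of both sides with the Eisenstein/parabolic induction symmetries highlighted by Arinkin--Gaitsgory, produces the desired equivalence $\bb L_G$.

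The principal obstacle is that S-duality is not itself a theorem: it is a physical conjecture whose categorical content is precisely what the geometric Langlands correspondence is meant to encode, so this strategy reduces one hard statement to another of comparable difficulty. Even granting S-duality, substantial work remains in (i) proving rigorously that the category of boundary conditions in each compactified twist is equivalent to the proposed geometric category, (ii) constructing and comparing the actions of the algebra of twisted local operators on both sides (including the promotion to $\OO(\gg^{\vee *}[2]/G^\vee)(\!(t)\!)$ used in the singular support formalism), and (iii) matching the vacuum restriction functor with the Arinkin--Gaitsgory singular support cutoff at the level of generators (e.g.\ verifying compatibility on the images of Eisenstein series, as in their characterization of $\IC_{\mc N_{G^\vee}}$). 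Of these, step (iii) is the most tractable within the framework of this paper, while step (i) -- identifying the physically-defined category of boundary conditions with $\text{D}(\bun_G(C))$ on the A-side, in a manner sensitive to the algebraic structure of $C$ -- is where I expect the bulk of the technical difficulty to concentrate.
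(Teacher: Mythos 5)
This statement is labeled a \emph{conjecture} in the paper, and it is precisely the (still open) Arinkin--Gaitsgory form of the categorical geometric Langlands conjecture; the paper gives no proof, so there is nothing to compare your argument against. Your sketch reproduces, quite faithfully, the physical \emph{motivation} the paper gives for why this conjecture should hold: S-duality for twisted $N=4$ gauge theory, compactification on $C$, identification of the B-side category of boundary conditions as $\IC(\Flat_{G^\vee}(C))$, and the vacuum-restriction mechanism of Theorem~\ref{main_theorem} to land on $\IC_{\mc N_{G^\vee}}(\Flat_{G^\vee}(C))$. You are also right that this is not a proof, and you correctly name the reason: S-duality is a physical conjecture, not a theorem, so the strategy trades one hard open problem for another of comparable depth. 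The paper itself makes no claim otherwise; what it actually proves is the much more circumscribed Theorem~\ref{main_theorem}, namely that the nilpotent singular support condition on the spectral side arises from restricting to the vacuum $0 \in \hh^*/W$, plus the global complete intersection machinery (Proposition~\ref{SS_complete_intersection_prop}) needed to carry that out.

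One further caution worth flagging in your outline: on the A-side the paper is explicitly agnostic (see Remark~\ref{rmk:A-twist}) about how to identify the ``full'' category of boundary conditions before imposing any vacuum condition, and about whether restriction at $0$ recovers $\text{D}(\bun_G(C))$. Your step (i) correctly anticipates that this is where the technical difficulty concentrates, but the paper does not even commit to a candidate for the unrestricted A-side category; the renormalization issues and the lack of a good HKR-style description of $\mr{HH}^\bullet(D(\bun_G(C)))$ are left open. So even granting S-duality as a black box, the program you describe has a gap on the A-side that the paper itself does not fill.
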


\subsection{Constructions and Results} \label{construction_intro_section}

With this background in hand, we'll move on to introduce the main new constructions appearing in this paper and their applications.  We'll use the action of local observables on the category of boundary conditions in order to describe natural subcategories of the category of boundary conditions.  The fundamental concept that we'll use is that if the algebra $\mc A$ of local observables acts on the category $\mc B$ of boundary conditions then we can pick out full subcategories of $\mc B$ by restricting to objects supported at a point $v \in \spec \mc A$.  In this section we'll outline this construction, explain what it means from the point of view of field theory, and apply the construction to our motivating example of twists of 4d $\mc N=4$ gauge theories.

To begin with we will explain the meaning of the action of local observables on the category of boundary conditions.  As we mentioned in Section \ref{goals_section} there are two ways of motivating this action.

\vspace{-10pt}
\begin{enumerate}
 \item The first motivating story passes through the action of the category of \emph{line operators} on the category of boundary conditions.  Line operators in a topological quantum field theory form a monoidal category that acts on the category of boundary conditions along a manifold $M$ of codimension 2 (see e.g. \cite{KapustinICM}).  Specifically, given a boundary condition for a TQFT on $M \times \RR \times \RR_{\ge 0}$, consider a line operator inserted along the line $\{p\} \times \RR \times \{\eps\}$.  By moving the line very close to the boundary (making $\eps$ very small) we can view this configuration as a single boundary condition (see Figure \ref{line_insertion_fig}).
 
 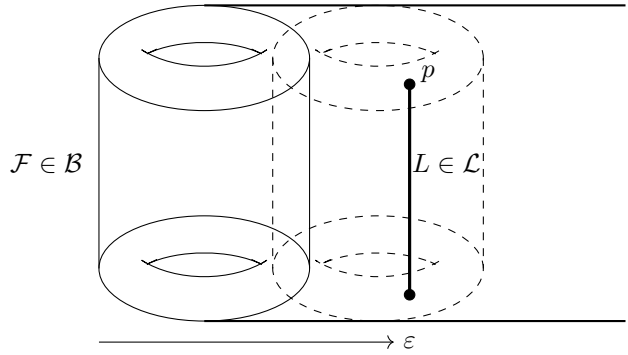
\begin{figure} [!h]
 \centering
\begin{tikzpicture}[scale=.7]
\begin{scope}
\draw[rounded corners=28pt] (-1.1,.1)--(0,-.6)--(1.1,.1);
\draw[rounded corners=24pt] (-.9,0.17)--(0,.6)--(.9,0.17);
\draw(0,0) ellipse (2 and 1);
\end{scope}

\begin{scope}[yshift=-4cm]
\draw[rounded corners=28pt] (-1.1,.1)--(0,-.6)--(1.1,.1);
\draw[rounded corners=24pt] (-.9,0.17)--(0,.6)--(.9,0.17);
\draw(0,0) ellipse (2 and 1);
\end{scope}

\draw (-2,0) -- (-2,-4);
\draw (2,0) -- (2,-4);
\node at (-3,-2) {$\mc F \in \mc B$};

\draw[thick] (0,1) -- (8,1);
\draw[thick] (0,-5) -- (8,-5);

\begin{scope}[xshift=3.3cm]
\draw[dashed,rounded corners=28pt] (-1.1,.1)--(0,-.6)--(1.1,.1);
\draw[dashed,rounded corners=24pt] (-.9,0.17)--(0,.6)--(.9,0.17);
\draw[dashed](0,0) ellipse (2 and 1);
\end{scope}

\begin{scope}[yshift=-4cm, xshift=3.3cm]
\draw[dashed,rounded corners=28pt] (-1.1,.1)--(0,-.6)--(1.1,.1);
\draw[dashed,rounded corners=24pt] (-.9,0.17)--(0,.6)--(.9,0.17);
\draw[dashed](0,0) ellipse (2 and 1);
\end{scope}

\draw[dashed] (1.3,0) -- (1.3,-4);
\draw[dashed] (5.3,0) -- (5.3,-4);

\draw[fill] (3.9,-0.5) circle (0.1);
\draw[fill] (3.9,-4.5) circle (0.1);
\draw[very thick] (3.9,-0.5) -- (3.9,-4.5);
\node at (4.25,-0.3) {$p$};
\node at (4.6,-2) {$L \in \LL$};
\draw[->] (-2,-5.4) -- (3.6,-5.4);
\node at (3.9,-5.4) {$\eps$};

\end{tikzpicture}
\caption{Insertion of a line operator $L \in \LL$ near a boundary condition $\mc F \in \mc B$ defines a new composite boundary condition.}
\label{line_insertion_fig}
\end{figure}
 
 Let's write $\LL$ for the monoidal category of line operators, and let $1_\LL$ be the monoidal unit: the trivial line operator.  There is a canonical monoidal functor $\eend_\LL(1_\LL)\text{-mod} \to \LL$ (for any monoidal category: this is a purely formal construction, see Proposition \ref{E_2_adjunction_prop}). We also have an $\bb E_2$-algebra equivalence $\eend_\LL(1_\LL) \to A$: the hom-spaces in the category $\LL$ of line operators are modelled by local operators inserted at the junction of two line operators (Kapustin explains this in \cite{KapustinICM}: see in particular his Figure 5), endomorphisms of the trivial line operator are therefore equivalent to local operators.  Therefore, there is a canonical monoidal functor 
 \[\mc A\text{-mod} \to \eend_\LL(1_\LL)\text{-mod} \to \LL,\]
 and therefore a canonical action of $\mc A\text{-mod}$ on the category $\mc B$ of boundary conditions.

 \item Alternatively there's a picture that motivates our construction purely on the level of algebras: the action of local observables is implemented by the ``whistle'' cobordism in open-closed TQFT.  Let's explain what we mean.  In two dimensions we consider the following situation (as depicted in Figure \ref{boundary_condition_fig}).
 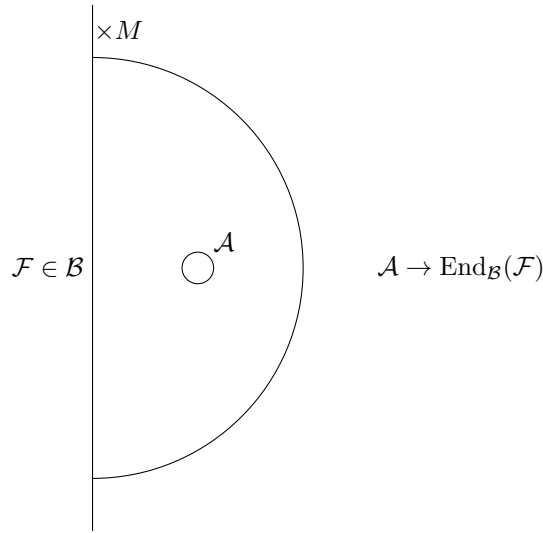
\begin{figure}[!hb]
 \centering
\begin{tikzpicture}[scale=.7]
 \node at (-0.85, 0) {$\mc F \in \mathcal B$};
 \clip (-.008, -5) rectangle (10,5);
 \draw (0,0) circle (4);
 \node at (0.5,4.5) {$\times M$};
 \draw(0,-5) -- (0,5);
 \draw (2,0) circle (.3);
 \node at (2.5,0.5) {$\mc A$};
 \node at (7, 0) {$\mc A \to \mathrm{End}_{\mathcal B}(\mathcal F)$};
\end{tikzpicture}
\caption{The action of $\mc A$ on $\mc B$ is mediated by a whistle cobordism with the given boundary condition.}
\label{boundary_condition_fig}
\end{figure}
Consider the manifold $M \times \RR \times \RR_{\ge 0}$ with a boundary condition $\mc F$ at the boundary, described by an object of $\mc B$.  Take the product with $M$ of a contractible open neighborhood $U$ in the half plane including an interval on the boundary, and choose a point in the interior of this patch.  This setup describes an ($\bb E_2$) algebra map from local observables to observables on the open patch $M \times U$.  So what is this latter algebra?  We will model observables on $M \times U$ with the boundary condition $\mc F$ by the algebra of endomorphisms $\mathrm{End}_{\mathcal B}(\mathcal F)$.  

In order to justify this claim, first, one can think of $\eend_{\mc B}(\mc F)$ as the \emph{quantum phase space} of our topological field theory coupled to the boundary condition $\mc F$.\footnote{This is called the phase space because it is the space of quantum states on a codimension 1 manifold $Y\times [0,1]$; that it has the structure of an algebra is an additional property arising from considering the same boundary condition at both ends.} Indeed, in general, the hom space $\hom_{\mc B}(\mc F_1, \mc F_2)$ can be interpreted as the space of states on a strip $Y^{n-2} \times [0,1]$ with boundary conditions $\mc F_1$ and $\mc F_2$ on the two boundary components.  In particular we obtain $\eend_{\mc B}(\mc F)$ by putting the boundary condition $\mc F$ at both boundary components. Alternatively we can view this as the space of observables on $Y$ times a half disk $D = \{(x,y) \colon x^2+y^2 < 1, x \geq 0\}$, with boundary condition $\mc F$ on the closed edge (as in Figure \ref{boundary_condition_fig}). That these descriptions -- the phase space and the space of observables -- are equivalent is a version of the state-operator correspondence for topological field theories.

Now, as we discussed in the first subsection, whenever we have an inclusion of open sets $U_1 \to U_2$ there's an associated map from local observables on $U_1$ to local observables on the larger open set $U_2$.  In particular, using the setup above, there's a map from observables on a contractible subset of $U \times M$ to observables on all of $U \times M$, i.e. a map $\mc A \to \mathrm{End}_{\mathcal B}(\mathcal F)$.  In fact this is a map of associative algebras, because it's compatible with this factorization structure in the vertical $\RR$ direction.  Finally, there is a universal property characterizing actions of an algebra $\mc A$ on a dg category $\mc B$: such an action is equivalent to maps $\mc A \to \mathrm{End}_{\mathcal B}(\mathcal F)$ for every object $\mc F \in \mc B$ (we'll discuss this in Section \ref{restriction_background_section} when we talk about Hochschild cochains).
\end{enumerate}
\vspace{-10pt}

With this story in hand, let's move on to discuss support conditions.  First of all, support conditions are controlled by closed subsets of an affine scheme we refer to as the moduli space of vacua of the topological field theory.  Let's briefly explain what we mean by that and justify the choice of name.  In general, a vacuum state of a quantum field theory can be modelled as a functional on the algebra of local observables $ \phi \colon \mr H^\bullet(\obsq(B^n)) \to k$ where $k$ is a ground field.  Specifically a \emph{vacuum state} is a translation invariant functional $\phi$ that satisfies the \emph{cluster decomposition property}. That is, for all local observables $\OO_1$ on $B_{r_1}(0)$ and $\OO_2$ on $B_{r_2}(0)$, we have
\[(\OO_1\ast \tau_x(\OO_2)) (\phi) - \OO_1(\phi) \OO_2(\phi) \to 0 \text{ as } x \to \infty\]
where $\tau_x$ denotes the translation of an observable by $x \in \RR^n$.  In the case where our theory is topological the condition of translation invariance is automatic, and the cluster decomposition property simply says that $\phi$ induces a ring homomorphism from the cohomology of local observables to $\RR$, or equivalently a closed point in $\mr{Spec}$ of the cohomology of local observables.

\begin{definition} \label{intro_vacua_def}
The \emph{moduli space of vacua} of the topological field theory with algebra $\mc A$ of local observables is the affine derived stack $\vac = \spec \mr H^\bullet(\mc A)$. 
\end{definition}

For an explanation of the term ``affine derived stack'', and the meaning of $\spec A$ where $A$ is a not necessarily connective differential graded algebra, see \cite[Section 3.2]{BZNLoopSpaces1} and \cite[Section 4.4]{LurieDAGVIII} (where the term ``coaffine'' is used).

\begin{remark}
Our definition of the moduli space of vacua depends not only on $\mr H^0(\mc A)$ -- the degree 0 local observables defined up to gauge equivalence -- but on observables of all degrees.  This typically includes strictly more information.   For instance, in the B-model with connected target $X$, if we only included $\mr H^0(\mc A)$ we would have a trivial underived space of vacua: $\spec \mr H^0(X) = \pt$.  If we include the full cohomology then the \emph{derived} moduli space of vacua captures the full cohomology ring of $X$.  In particular the underived space of vacua doesn't ``see'' the topology of $X$, but the derived space does.  More relevantly for this work, as we'll see, in the geometric Langlands A- and B-twisted theories, the underived moduli space of vacua is just a point, but the full, derived moduli space of vacua is a positive-dimensional space (as physically expected, see e.g. \cite[Section 2.11]{WittenNB}, in which the non-trivial finitely generated algebra of local operators of the B-twisted theory is discussed).

Furthermore, it is necessary to consider the full algebra of observables and not only $\mr H^0(\mc A)$ in order to see all of its expected structure.  Recall that in a topological theory the algebra of observables for a TQFT is expected to be an $\bb E_n$-algebra.  As its classical limit, one should obtain a shifted Poisson algebra -- with Poisson bracket of degree $n-1$ -- so, if $n \ge 2$, it can only be non-trivial when we include terms in non-zero degree.  This is discussed, for instance, in the recent paper \cite{BBZBDN}.
\end{remark}

As we discussed above, using the map $\mc A \to \mathrm{End}_{\mathcal B}(\mathcal F)$ for any boundary condition $\mc F$, we obtain a quasi-coherent sheaf over the moduli space $\vac$ of vacua.  This is the main construction that we'll use in this paper: specifically we'll study the \emph{support} of this sheaf, the set of points where the stalk (or the localization of the module) is non-zero, a closed subset of $\vac$.  In other words, if $v$ is not in the support of the module $\mathrm{End}_{\mathcal B}(\mathcal F)$ of bulk-boundary local observables, then formally locally -- near $v$ -- this module vanishes.  

\begin{remark}
We can spell the algebraic idea of support out a little more in terms of field theory in the following way.  If the module $\mathrm{End}_{\mathcal B}(\mathcal F)$ is supported at a point $v \in \vac$, then all the observables $\OO \in \mc A$ that map to zero \footnote{The same argument works if observables map to something nilpotent; these two notions lead to scheme-theoretic support and set-theoretic support, respectively. For their comparison, look at Section \ref{set_vs_scheme_support_rmk}.} in $\mathrm{End}_{\mathcal B}(\mathcal F)$ vanish at the point $v$, i.e land in the prime ideal $\mf p_v$ corresponding to $v$.  Equivalently, we can extend $v$ from a linear functional on $\mc A$ to a linear functional on the module $\mathrm{End}_{\mathcal B}(\mathcal F)$:
\[\xymatrix{
 \mf p_v \ar[r] &\mc A \ar[r] \ar[dr]^v & \mathrm{End}_{\mathcal B}(\mathcal F) \ar[d]^{v'} \\&&\CC,
}\]
since the difference between two choices of lift of an element of $\mathrm{End}_{\mathcal B}(\mathcal F)$ to $\mc A$ becomes an element of the ideal $\mf p_v$ where the functional $v$ vanishes.  In other words, the module $\mathrm{End}_{\mathcal B}(\mathcal F)$ being supported at $v$ means that the vacuum state $v$ can be extended to a vacuum state $v'$ of the bulk-boundary system with boundary condition $\mc F$.
\end{remark}

Thus, the idea is that one can consider the ``restriction'' of the category of boundary conditions to a point in the moduli space of vacua.  We say that a boundary condition is \emph{compatible} with a choice of vacuum if the algebra of bulk-boundary observables $\eend_{\mc B}(\mc F)$ is supported at the vacuum $v$.  From this point of view we can define a restricted \emph{category} of boundary conditions as the full subcategory of boundary conditions compatible with exactly one vacuum state $v$.  To summarize:

\begin{definition}
A boundary condition $\mc F \in \mc B$ is \emph{compatible} with a vacuum state $v \in \vac$ if the set-theoretic support of the $\mc A$-module $\eend_{\mc B}(\mc F)$ contains the point $v$.  The \emph{restricted} category of boundary conditions at the point $v$ is the full subcategory $\mc B_v$ of objects compatible precisely with the vacuum state $v$.
\end{definition}

Now, we'll apply this definition to the B-twist of $\mc N=4$ super Yang--Mills theory -- the topological twist associated to the spectral side of the geometric Langlands equivalence. The restriction of the category of boundary conditions has a mathematically clean interpretation in the case where the category $\mc B$ is the category $\IC(\mc X)$ of ind-coherent sheaves on a stack $\mc X$: one can characterize the sheaves that survive restriction to a vacuum as a condition on their singular support.

\begin{remark}
Singular support conditions are still interesting if we only look at the dg category of coherent, as opposed to ind-coherent, complexes of sheaves.  This would be closer to the story governing categories of boundary conditions in the B-model as usually presented in physics, for instance in \cite{Douglas, Sharpe}.  It is, however, more convenient to work with the ind-completion of this category for technical reasons: the theory of dg categories in \cite{GRvol1}, for instance, typically assumes this cocompleteness. This is harmless because the two constructions determine one another:\footnote{At least for nice enough (``QCA'') stacks, a class of stacks including $\flat_G(C)$.  See \cite[Theorem 3.3.5]{DrinfeldGaitsgory}.} the compact objects of the dg category ind-coherent sheaves recover coherent complexes, and ind-coherent complexes are realized as the ind-completion of the dg category of coherent complexes. 
\end{remark}

Our main result is the following, which gives a gauge theoretic explanation for the nilpotent singular support condition in geometric Langlands.
\begin{theorem} [Theorem \ref{main_theorem}]
The moduli space of vacua in the B-twisted $\mc N=4$ supersymmetric Yang--Mills theory with gauge group $G$ is equivalent to $\hh^*/W$, where $\hh$ is the Cartan subalgebra and $W$ is the Weyl group.  The algebra $\OO(\hh^*/W)$ canonically acts on the category $\IC(\flat_G(C))$, and the restriction to the vacuum $0$ is equivalent to the category $\IC_{\mc N_G}(\flat_G(C))$ of sheaves with nilpotent singular support.
\end{theorem}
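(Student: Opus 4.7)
The plan is to decompose the statement into three steps and tackle them in order. \textbf{First}, identifying the vacuum moduli: building on our earlier identification of the equations of motion for the B-twist (Theorem from \cite{EY1}, summarized in Section \ref{KW_intro_section}), the algebra of 4d local operators is computed as $\OO(\gg^*[2]/G)$, a statement already announced in the introduction. The moduli space of vacua is by definition the spectrum of $H^0$ of the $\bb E_4$-algebra of local operators, i.e.\ its affinization. The Chevalley restriction theorem identifies $\OO(\gg^*)^G \cong \OO(\hh^*)^W$, hence $\mr{Spec}\,\OO(\gg^*[2]/G) = \hh^*[2]/W$, whose underlying unshifted stratified scheme is $\hh^*/W$.

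\textbf{Second}, construct the canonical action of $\OO(\hh^*/W)$ on $\IC(\flat_G(C))$. The idea is that compactifying the 4d theory on the curve $C$ pushes forward the $\bb E_4$-algebra of 4d local operators to an $\bb E_2$-algebra of local operators in the resulting 2d theory, which canonically acts on the category of boundary conditions. Concretely, I would produce a map $\OO(\gg^*[2]/G) \to \mr{HH}^\bullet(\IC(\flat_G(C)))$ and thereby obtain the desired module structure. Using the Arinkin--Gaitsgory picture of Hochschild cochains on a quasi-smooth stack $\mc X$ mapping to $\OO((T^*[-1]\mc X)^{\mr{cl}}) = \OO(\sing \mc X)$, combined with the explicit description $\flat_G(C) = T^*_{\mr{form}}[-1](\text{something})$ whose classical singular support fibers naturally over $\gg^*[2]/G$ (via the obstruction-theoretic map built from deformation theory at each flat $G$-bundle), I would identify the relevant action. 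Passing through affinization then yields the $\OO(\hh^*/W)$-action.

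\textbf{Third}, identify the restriction to $0 \in \hh^*/W$ with $\IC_{\mc N_G}(\flat_G(C))$. The restricted category consists, by our definition of vacuum compatibility, of objects whose endomorphisms are supported over $0$. Combining this with Arinkin--Gaitsgory's formalism, support over $0 \in \hh^*/W$ translates into the condition that the Arinkin--Gaitsgory singular support lies in the fiber of the natural map $\sing(\flat_G(C)) \to \hh^*[2]/W$ over the origin. But this fiber is precisely the nilpotent cone $\mc N_G$: recall that $\mc N_G$ is by definition the preimage of $0$ under the adjoint quotient $\gg^* \to \hh^*/W$, and the map $\sing(\flat_G(C)) \to \gg^*[2]/G$ sends a degree-2 obstruction class at a flat bundle $(P,\nabla)$ to its class in the adjoint quotient. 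Hence objects annihilated by the augmentation ideal of $\OO(\hh^*/W)$ at $0$ are exactly the sheaves with nilpotent singular support.

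\textbf{The main obstacle} lies in the second and third steps, which must be executed compatibly: one must show that the action of $\OO(\hh^*/W)$ constructed from the physical action of 4d local operators agrees, after passing to centers, with the action coming from functions on the base of $\sing(\flat_G(C))$ in Arinkin--Gaitsgory's setup, and that our notion of ``restriction to a vacuum'' (spelled out more technically in the body of the paper) coincides on the nose with the support-theoretic notion of singular support contained in a closed conical subset. This compatibility is the crucial bridge between the physical construction and the geometric definition, and it requires working carefully with the nonperturbative algebraic description developed in \cite{EY1}, since neither the purely functorial TQFT picture nor the purely perturbative factorization-algebra picture is on its own adequate to see both the algebraic structure on $C$ and the full category of boundary conditions.
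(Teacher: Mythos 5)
Your three-step decomposition matches the paper's at the top level, and you have correctly identified the crucial bottleneck: showing that the physically-defined action of $\OO(\hh^*/W)$ coincides with an action of functions on $\sing(\flat_G(C))$ in a way that makes ``restriction to a vacuum'' agree with Arinkin--Gaitsgory singular support. However, you flag this as the ``main obstacle'' without resolving it, and the resolution is precisely what the paper's proof consists of. The paper does \emph{not} try to build the map to $\mr{HH}^\bullet(\IC(\flat_G(C)))$ via an HKR-style argument (which is problematic since $\flat_G(C)$ is a stack, not a scheme, as the paper itself remarks); instead it presents $\flat_G(C)$ as a \emph{global complete intersection stack} via the square $\flat_G(C) = BG \times_{\gg/G} \flat_G^{\mr{RS}}(C)$ and uses the resulting convolution action of the spectral Hecke category $\IC(\mr{Hecke}^{\mr{spec}}_G) = \IC(BG \times_{\gg/G} BG)$ on $\IC(\flat_G(C))$. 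The algebra $\OO(\gg^*[2]/G) \iso \OO(\hh^*[2]/W)$ arises as endomorphisms of the unit of this monoidal category, and the bridge between the physical action and singular support is provided by Proposition~\ref{SS_complete_intersection_prop}, which is the technical engine of the whole argument. You do not invoke anything like this machinery, so your outline leaves the central step unproved.

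A few of your supporting details are also off. You write that the Arinkin--Gaitsgory picture gives a map from Hochschild cochains to $\OO(\sing \mc X)$; the map goes the other way, $\OO(\sing(\mc X)) \to \HH^{2\bullet}(\mc X)$. You assert an ``explicit description $\flat_G(C) = T^*_{\mr{form}}[-1](\text{something})$'', but $\flat_G(C)$ is not a shifted cotangent; what is a shifted cotangent is $\EOM_B(C) = T^*_{\mr{form}}[1]\flat_G(C)$. Finally, your treatment of the degree shift from $\hh^*[2]/W$ to $\hh^*/W$ (``underlying unshifted stratified scheme'') is too vague to support the later steps: taking $\spec H^0$ of $\OO(\hh^*[2]/W)$ literally produces just a point, since the generators sit in negative degrees. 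The paper handles this carefully with the degree-2 twisting parameter $t$ and the \emph{refined} moduli space of vacua $\mc V^{\mr{ref}} = \spec H^0(\obsq_t(B^4))$ in Section~\ref{degree_shift_section}; without this, there is no well-defined restriction at a nonzero point of $\hh^*/W$, and even the restriction at $0$ needs the homogeneous-ideal discussion of Remark~\ref{conical_t_remark} to be compatible with the grading.
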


It's natural to ask what happens for other points in $\hh^*/W$.  We can describe the categories compatible with more general vacua in terms of singular support conditions, and we conjecture that the categories obtained from this admit a very natural, symmetrical description.  An analysis of the geometry of the relevant singular support conditions motivates the following conjectural description.
\begin{conjecture}
The restriction of the category $\IC(\flat_G(C))$ to the point $v \in \hh^*/W$ is equivalent to the category $\IC_{\mc N_L}(\flat_L(C))$ where $L$ is the stabilizer of a semisimple lift of $v$ to $\mathfrak{g}^*$ in $G$. 
\end{conjecture}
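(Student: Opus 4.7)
The plan is to reduce the general conjecture to the main theorem (the case $v = 0$) by combining the geometry of the Chevalley map $\pi \colon \gg^*/G \to \hh^*/W$ with a parabolic induction functor. The key observation is that the action of $\OO(\hh^*/W)$ on $\IC(\flat_G(C))$ factors through the symbol map $\OO(\gg^*/G) \to \OO(\sing(\flat_G(C)))$ followed by $\pi^*$, so restricting modules to $v$ forces the singular support to lie in $\pi^{-1}(v)/G$. By the Jordan decomposition, if $s \in \gg^*$ is a semisimple lift of $v$ with centraliser $L = Z_G(s)$, there is a $G$-equivariant identification $\pi^{-1}(v) \simeq G \times^L (s + \mc N_L)$, so this fiber is a $G$-translate of the nilpotent cone $\mc N_L$ of $L$.

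Building on this, I would construct a candidate equivalence as a twisted parabolic induction. Choose a parabolic $P \subset G$ with Levi quotient $L$ and consider the correspondence $\flat_L(C) \leftarrow \flat_P(C) \to \flat_G(C)$, further twisted by tensoring with a central $L$-local system encoding the element $s$. The singular support calculus for proper pushforward and smooth pullback developed by Arinkin and Gaitsgory should then send $\IC_{\mc N_L}(\flat_L(C))$ into the full subcategory of $\IC(\flat_G(C))$ with singular support contained in the $G$-saturation of $s + \mc N_L$, i.e.\ precisely the vacuum-compatible subcategory at $v$. When $v = 0$ one has $s = 0$ and $L = G$, so $P = G$ and the construction collapses to the identity, consistent with the main theorem.

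The main obstacle will be promoting this functor to an equivalence. Fully faithfulness should follow from an adjunction with the corresponding constant term functor combined with a calculation of endomorphism algebras localised at the semisimple stratum labelled by $s$. Essential surjectivity is the subtle point: one must show that every object of $\IC(\flat_G(C))$ with singular support in $\pi^{-1}(v)/G$ arises from parabolic induction from $L$. This seems to require a conservativity argument for the constant term functor on this subcategory, together with a local analysis of $\flat_G(C)$ near the locus of $P$-reductions whose associated $L$-bundle has monodromy type matching $s$. Since this step has no nontrivial analogue when $v = 0$ and $P = G$, the main theorem cannot be cited as a black box here; instead the delicate interplay between singular support conditions and parabolic geometry will need to be worked out directly, and I expect this to form the technical heart of the proof.
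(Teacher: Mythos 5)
The statement you were given is a conjecture in the paper (Conjecture \ref{gauge_sym_breaking_conjecture}); the paper proves it only for $C=\bb P^1$ and otherwise supplies evidence, so there is no general proof of the paper's to compare yours against. Your first paragraph reproduces what the paper does establish: Theorem \ref{main_theorem} shows the vacuum restriction is the singular support condition over $\pi^{-1}(v)/G$, and your Jordan-decomposition identification $\pi^{-1}(v)\iso G\times^L(s+\mc N_L)$ matches the paper's identification $\arth^v_G(C)\iso\mc N_L$ by an orbit/stabilizer observation and translation by $s$. The paper goes one step further, showing that the \emph{formal completions} $\arth_G(C)^\wedge_{\arth^v_G(C)}$ and $\arth_L(C)^\wedge_{\mc N_L}$ agree as pointed formal moduli problems by comparing relative $(-1)$-shifted tangent complexes; this, rather than the bare isomorphism of fibers, is the nontrivial geometric content supporting the conjecture.

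Your proposed construction --- a twisted parabolic induction along $\flat_L(C)\leftarrow\flat_P(C)\to\flat_G(C)$ --- is a genuine departure. The paper does not attempt to build the equivalence this way: it explicitly notes that vacuum compatibility and the geometric Eisenstein series have different physical origins (motion in the moduli of vacua vs.\ a parabolic domain wall) and treats Eisenstein compatibility as a \emph{corollary} of the conjecture, not a mechanism for proving it. Your route, if it worked, would produce a canonical functor rather than just a local comparison, but the obstacles you name (full faithfulness via an adjunction and a conservativity argument for the constant term; essential surjectivity) are real and unaddressed, and there is an additional one: your functor depends on a choice of parabolic $P \supset L$, whereas the restricted category does not, so $P$-independence would also have to be shown. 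Two further cautions. The paper's Conjecture \ref{gauge_sym_breaking_conjecture} is stated after adjoining an invertible degree-$2$ parameter $t$, because $v\ne 0$ is not a conical point; the conjectured equivalence is not $\CC(\!(t)\!)$-equivariant, so the informal introduction version you worked from glosses over a subtlety any serious attempt must respect. And the paper's only complete case, $C=\bb P^1$, is proved by Koszul duality via Corollary \ref{support_corollary}; verifying that your proposed functor reproduces that explicit equivalence in the $\bb P^1$ case is a concrete sanity check that is currently absent from your sketch.
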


We will give some pieces of evidence for the conjecture of both a mathematical and a physical nature.  In particular we can verify the conjecture for the curve $C=\bb P^1$.

\subsection{Summary of the Paper}  \label{summary}

We begin with Section \ref{boundary_vacua_section} where we provide the necessary background and state the basic definitions used in the rest of the paper.  The first subsection, Section \ref{motivation_and_background_section}, contains the background material we'll use concerning topological field theory, from the factorization and the functorial points of view.  In particular we explain how homotopical algebra connects with such models for topological field theory.  We then review the main ideas of support conditions for triangulated and dg categories from Benson--Iyengar--Krause \cite{BIK} and Arinkin--Gaitsgory \cite{ArinkinGaitsgory} in Section \ref{restriction_background_section}.  We particularly emphasize Arinkin and Gaitsgory's conditions for global complete intersection stacks, in which case singular support conditions admit a particularly nice description.  The proof of this result, however, is deferred to Appendix \ref{GCI_appendix}.  Finally in Section \ref{loc_and_vacuum_section} we introduce the moduli space of vacua and explain what it means to restrict a dg category to a point in this moduli space.  We'll use the factorization algebra model for our main examples, but we also explain an alternative construction using functorial field theories (which is less general but provides additional motivation).

In Section \ref{SS_section} we apply these ideas to the examples of Kapustin--Witten twisted $\mc N=4$ gauge theory, with a focus on the B-twisted theory.  We begin by reviewing our construction of these theories \cite{EY1} in Section \ref{twisting_section}, by the general procedure of the topological twist.  In Section \ref{geometry_of_flat_section} we recall from \cite{ArinkinGaitsgory} important facts about the moduli stack of flat $G$-bundles on a curve and the geometric Satake equivalence.  Then in Sections \ref{restricting_KW_theories_section} and \ref{degree_shift_section} we compute the moduli of vacua in the B-twisted theory, and calculate the restriction of the dg category $\IC(\flat_G(C))$ to the point $0$ in this moduli space, recovering the dg category of sheaves with nilpotent singular support.  This requires taking some care -- the relationship between the calculation of the moduli space of vacua and the twist itself is somewhat subtle.  We explain this issue in Section \ref{degree_shift_section}.

We conclude with Section \ref{consequences_section}, in which we investigate what happens when one moves away from $0$ in the moduli space of vacua.  In Section \ref{sym_breaking_section} we conjecture a natural answer -- that one obtains the nilpotent singular support category but with the gauge symmetry group broken to a Levi subgroup -- and provide some evidence for this conjecture.  Then finally in Section \ref{fact_categories_section} we conjecture that these categories should satisfy a natural factorization condition on the moduli space of vacua, and describe how this is related to factorization structures on cohomological Hall algebras as investigated by Kontsevich and Soibelman.

\subsection{Acknowledgements}
This project began with a suggestion of David Nadler and David Ben-Zvi, and would not have been possible without David Ben-Zvi generously sharing his ideas about the nature of vacua in the context of functorial quantum field theories.  We would like to thank Kevin Costello for many helpful conversations about all aspects of this work. We would also like to thank Dima Arinkin, Davide Gaiotto, Dennis Gaitsgory, Saul Glasman, Owen Gwilliam, Pavel Safronov, Claudia Scheimbauer, and Brian Williams for useful conversations and comments, and the anonymous referees for detailed comments and criticism.

This research was supported in part by Perimeter Institute for Theoretical Physics. Research at Perimeter Institute is supported by the Government of Canada through Industry Canada and by the Province of Ontario through the Ministry of Economic Development \& Innovation. CE acknowledges the support of IH\'ES.  The research of CE on this project has received funding from the European Research Council (ERC) under the European Union's Horizon 2020 research and innovation programme (QUASIFT grant agreement 677368). PY acknowledges the support of IH\'ES during his visit in 2017 and funding from ERC QUASIFT grant 677368.

\section{Boundary Conditions and Vacua} \label{boundary_vacua_section}

In this section we'll give an abstract definition, motivated by a construction in terms of locally constant factorization algebras and a parallel construction in functorial topological quantum field theory, of a support condition for a dg category $\mc C$ which we understand as the category of boundary conditions. Physically speaking, the support condition corresponds to compatibility with a choice of a point in the moduli space of vacua: each point in the moduli space yields an interesting full subcategory of $\mc C$ by picking out boundary conditions compatible with such a choice.

We'll begin by recalling  the factorization algebra model for quantum field theories developed by Costello and Gwilliam \cite{CostelloGwilliam1, CostelloGwilliam2} and the formalism of extended functorial QFTs.  After an exposition on these ideas, with an emphasis on the topological case, we'll then introduce the moduli space of vacua in each of these contexts, and explain what it means to restrict a dg category to a point in this space. 

\begin{remark}
We should emphasize that when we refer to ``topological'' quantum field theory we don't necessarily mean something equivalent to a functorial TQFT; we'll often have in mind something more general.  For instance, topological twists of supersymmetric field theories need not be topological in the strict mathematical sense, but can depend on additional geometric structure (see the example of Donaldson--Witten theory as discussed by Moore and Witten \cite{WittenMoore}, and geometric Langlands twisted $\mc N=4$ gauge theories as we discussed in our previous paper \cite{EY1}).  As we explained in the introduction, when we start making arguments we'll use the model of an $\bb E_n$-algebra of local observables acting on a fixed dg category of boundary conditions.
\end{remark}

\subsection{Factorization Algebras and Functorial Field Theory} \label{motivation_and_background_section}

In this subsection, we will describe mathematical models for quantum field theories that appear in the literature, with a focus towards the topological case.  We'll begin with reviewing the classical BV formalism for a Lagrangian description of a classical field theory from the modern perspective of derived geometry. Then we provide an exposition of functorial topological field theory and how one should think of it as arising from the data provided by the classical BV formalism.  We also explain the formalism of factorization algebras and mention a result of Costello and Gwilliam on constructing factorization algebras from such data. Finally we discuss how these two constructions are supposed to be compared for topological field theories. 

\subsubsection{The Classical BV Formalism} \label{BV_section}

We begin with some abstract motivation regarding the way in which derived geometry appears when one studies classical field theory; this perspective is called the classical BV formalism.  We'll explain the basic idea, and briefly explain the idea of a shifted symplectic structure, which naturually finds a home in the classical BV story.  We already outlined, in Remark \ref{derived_EOM_remark}, the idea of the \emph{derived critical locus} of an action functional.  That is, if $S$ is an action functional acting on a space $\Phi$ of fields on $M$, the space $\EOM(M)$ is defined as the derived intersection
\[\EOM(M) = \Phi \cap^h_{T^*\Phi} \mr{Graph}(\d S)\]
of the graph of $\d S$ with the zero section in the symplectic space $T^*\Phi$.  To pin things down a little more, let's suppose that $\Phi$ has the structure of an algebraic stack (for instance, the quotient of a vector space by an algebraic group of gauge transformations).  In the classical BV formalism one introduces a new piece of structure associcated to this derived intersection called a \emph{$(-1)$-shifted symplectic structure}.  In the physics literature this structure is called the \emph{antibracket}, but it fits into a more general formalism of shifted symplectic structures on derived stacks introduced by Pantev, To\"en, Vaqui\'e, and Vezzosi \cite{PTVV}.

The general idea behind shifted symplectic structures is the following.  Derived stacks admit (when they're sufficiently well-behaved: Artin stacks) a version of the cotangent space, and consequently a theory of differential forms.  If $\mc X$ is any derived Artin stack and $k$ is an integer one can define the $k$-shifted cotangent space $T^*[k]\mc X$ of $\mc X$.  This is a new derived stack with a map $\pi \colon T^*[k]\mc X \to \mc X$ down to $\mc X$, and with the property that the fiber $\pi^{-1}(x)$ over a closed point in $\mc X$ is the derived affine space (i.e. cochain complex) $\bb L_{\mc X,x}[k]$. Just as the cotangent spaces at $x$ glue to form a cotangent bundle, these complexes glue to form what is called the \emph{cotangent complex} $\bb L_{\mc X}$ shifted down in degree by $k$.  The \emph{tangent complex} $\bb T_{\mc X}$ is defined to be the linear dual of the cotangent complex.

Using this complex one can define a double complex of differential forms: one forms the sum over non-negative integers $p$ of the complexes $\wedge^p(\bb L_{\mc X})$.  This is bigraded by, on the one hand the internal grading on the cotangent complex, and on the other hand by the form degree $p$.  It's made into a double complex by defining a version of the \emph{de Rham differential} from $p$-forms to $(p+1)$-forms.  A $\d_{\dR}$-closed $p$-form of degree $k$ is an (inhomogeneous) element, which is closed for the total differential on this double complex, whose lowest degree term is a $p$-form of degree $k$. 

\begin{definition}
A \emph{$k$-shifted symplectic structure} on $\mc X$ is a section $\omega$ of the sheaf of $\d_{\dR}$-closed 2-forms of internal degree $k$ which is \emph{non-degenerate}, in the sense that $\omega$ induces an equivalence between the tangent complex $\bb T_{\mc X}$ and $k$-shifted cotangent complex $\bb L_{\mc X}[k]$.
\end{definition}

In order to bring this back into dialogue with the classical BV formalism, we introduce one of the main sources in \cite{PTVV} for symplectic structures.  This builds on the idea of a \emph{Lagrangian} $f \colon L \to \mc X$ in a $k$-shifted symplectic stack: a derived enhancement of the usual notion of a Lagrangian embedding (although the morphism $f$ doesn't have to be an embedding).
\begin{theorem}[{\cite[Theorem 2.9]{PTVV}}]
If $L_1$ and $L_2$ are two Lagrangians in a $k$-shifted symplectic derived stack $\mc X$, the derived intersection
\[L_1 \cap^h_{\mc X} L_2\]
is canonically $(k-1)$-shifted symplectic.
\end{theorem}

In particular, in our classical field theoretic story, the cotangent space $T^*\Phi$ is 0-shifted symplectic and the 0-section and the graph of a 1-form are both canonically Lagrangian, which means the derived intersection $\EOM(M)$ is canonically $(-1)$-shifted symplectic.

Now, a classical field theory is not just given by a single $(-1)$-shifted symplectic space that captures the moduli space of global solutions to the equations of motion, one can also consider solutions that are only defined locally on a subset of $M$. It is expected that for a codimension $k$ closed submanifold $Y$, there is a sense in which the moduli space of germs of solutions near the submanifold, which we denote by $\EOM(Y)$, has a canonical $(k-1)$-shifted symplectic structure. In fact, for a topological field theory, one expects this to be literally true.  We discussed this structure in the examples of geometric Langlands twisted theories in \cite{EY1}.  We will review some of these examples in Section \ref{twisting_section}.

An important class of examples of topological field theories capturing this additional data consists of those can be succinctly described, from the classical BV point of view, in terms of the AKSZ construction. These are classical field theories that can be described, with their $(-1)$-shifted symplectic structure, using the following fact. 

\begin{theorem}[{\cite[Theorem 2.5]{PTVV}}] \label{PTVV_AKSZ}
If $\mc X$ is an $n$-shifted symplectic derived stack and $Y$ is a ``compact $k$-oriented'' derived stack, meaning in particular that there is a degree $-k$ perfect pairing on the cochains $C^\bullet(Y, \OO_Y)$, then there is a canonical $(n-k)$-shifted symplectic structure on the mapping stack $\ul{\mr{Map}}(Y,\mc X)$.
\end{theorem}

In the examples below, we'll use two examples of compact $k$-oriented derived stacks.  Firstly, any compact oriented $k$-manifold $Y$ defines a compact $k$-oriented derived stack, the ``Betti stack'' denoted $Y_B$.  Secondly, any smooth compact Calabi--Yau variety $Y$ of dimension $k$ is, in particular, a compact $k$-oriented derived stack.

\begin{example}
\begin{itemize}
\item [(1)] (Topological classical mechanics) Let $X$ be an (ordinary) symplectic space, and let $Y$ be a compact oriented 1-manifold. Then topological classical mechanics with the phase space $X$ is described by the mapping stack $\EOM_{\mr{TCM}}(Y) = \underline{\mr{Map}}(Y_B, X)$.
\item [(2)] (B-model) Let $X$ be a smooth Calabi--Yau variety and let $Y$ be a Riemann surface. The B-model with target $X$ is described by the mapping space $\EOM_B(Y) =  \underline{\mr{Map}}( Y_B, T_{\mr{form}}^*[1]X)$ with target given by the 1-shifted cotangent space $T^*[1]X$. 
\item [(3)] (Chern--Simons theory) Let $G$ be a reductive group. Then Chern--Simons theory with gauge group $G$ is $\EOM_{\mr{CS}}(Y) =  \underline{\mr{Map}}( Y_B,   BG )$.
\end{itemize}
In the three examples, ($d$) = (1), (2) and (3) indicates a $d$-dimensional classical field theory, but one obtains $(m-1)$-shifted symplectic stacks by replacing $Y$ by a submanifold of codimension $m$. For instance, for the B-model, $Y = C $, $Y= S^1$, and $Y=\pt$ yield $(-1)$-shifted, 0-shifted, and 1-shifted symplectic spaces, respectively.
\end{example}

\begin{remark}
We've described these theories from the AKSZ point of view (as mapping spaces): these are rephrasings of the descriptions one would find elsewhere in the literature (for instance in terms of the action functional).
\end{remark}

In fact, more is expected.  A classical field theory on $M$ is supposed to define a sheaf of ``$(-1)$-shifted symplectic'' stacks on $M$ given by $U\mapsto \EOM(U)$. However, it is not easy to construct this sheaf in practice, because the notion of $(-1)$-shifted symplectic structure on an open set should be defined with respect to Verdier duality, rather than linear duality.\footnote{For instance, abelian Chern--Simons theory should have $(\Omega^\bullet(U), d_{\mr{dR}})$ on each open set $U$ as the space of fields, but $\Omega^1(U)$ is not linear dual to $\Omega^2(U)$.} This sheaf structure can however be realized in the context of perturbative field theories, as described in \cite{CostelloGwilliam2}.

\begin{remark}[$\bb E_n$- and $\bb P_n$-algebras] \label{E_n_P_n_remark}
For future reference, let us discuss the structure of the algebra of functions on a shifted symplectic space, and of the quantization of such an algebra. Just as the algebra of functions on a symplectic manifold has a canonical Poisson structure, the algebra of functions on an $n$-shifted symplectic space is a $\bb P_{n+1}$-algebra. Here a $\bb P_{n+1}$-algebra is a commutative differential graded algebra together with a Poisson bracket of cohomological degree $-n$. The numbering is made in such a way that $\bb P_d$-algebras are related to $d$-dimensional classical field theories. In particular, a $\bb P_1$-algebra recovers the usual notion of a Poisson algebra relevant to classical mechanics understood as 1-dimensional classical field theory.

Let us also briefly recall the notion of an \emph{$\bb E_n$-algebra}: a quantization of a $\bb P_n$-algebra; we'll discuss what we mean by quantization in Section \ref{comparing_background_section}.  An $\bb E_n$-algebra is an algebra $\mc A$  in cochain complexes over the operad of little $n$-disks.  Briefly, that means that for every embedding of $n$-balls, say $B_{r_1}(x_1) \sqcup \cdots B_{r_k}(x_k) \to B_r(x)$, there is a $k$-fold operation $A^{\otimes k} \to A$, where if $k=1$ the associated operation is a quasi-isomorphism, and where the operations are compatible with respect to composition.  The cohomology of an $\bb E_n$-algebra $\mc A$ has the structure of a $\bb P_n$-algebra -- in particular, it has a commutative product -- as long as $n \ge 2$; see Theorem \ref{thm:Cohen}.
\end{remark}

\subsubsection{Factorization Algebras} \label{fact_alg_background_section}
The principal approach to understanding quantum field theory taken in this article is motivated by the Heisenberg picture of quantum mechanics, where one has the algebra of operators as the main character. This is the subject of factorization algebras as developed by Costello and Gwilliam \cite{CostelloGwilliam1, CostelloGwilliam2}. Although the formalism of factorization algebras doesn't a priori encode the information of the Hilbert space of a quantum field theory, it works well in the generality of any quantum field theory admitting a Lagrangian description, contrary to the model of functorial field theory which doesn't yet have as firm a constructive foundation for non-topological field theories.  In this section we'll summarize some basic concepts from the theory of factorization algebras, referring the reader to Costello and Gwilliam for details.

For a manifold $M$, a \emph{prefactorization algebra} $\mc F$ on $M$ is an assignment of a cochain complex $\mc F(U)$ to every open subset $U \subset M$ together with some structure maps: if $U_1, \cdots, U_n \subset V$ are disjoint open subsets of $V$ for an open subset $V\subset M$, then we have a map $\mc F(U_1) \otimes \cdots \otimes \mc F(U_n) \rightarrow \mc F(V)$ satisfying a natural associativity condition that for $\amalg_{i\in I} U_i \subset \amalg_{j \in J} V_j \subset W$, the associated diagram
\[ \xymatrix{
\bigotimes_{i\in I} \mc F(U_i) \ar[r] \ar[dr] & \bigotimes_{j \in J} \mc F(V_j) \ar[d]\\
& \mc F(W)
}\]
commutes. A \emph{factorization algebra} is a prefactorization algebra satisfying a descent axiom, just like a sheaf is a presheaf with the gluing axiom satisfied. As we only need the notion for the sake of motivation, we won't articulate the technical definition here but take for granted that those prefactorization algebras arising from field theory are indeed factorization algebras.

One can construct factorization algebra structures from the data of a classical field theory by a version of deformation quantization.  Suppose we have a classical field theory given by an assignment $U \mapsto \EOM(U)$ for open subsets $U \subset M$. Then a collection of observations that only depend on behavior of those solutions to the Euler--Lagrange equations on $U\subset M$ is by definition a function on the space $\EOM(U)$, which we denote by $\mr{Obs}^{\mr{cl}}(U)$ and refer to as the space of classical observables on $U$. As $\EOM (U)$ is $(-1)$-shifted symplectic, $\mr{Obs}^{ \mr{cl} }(U)$ has the structure of a $\bb P_0$-algebra, namely, a commutative algebra with a Poisson bracket of cohomological degree $1$. The assignment $U \mapsto \mr{Obs}^{\mr{cl} }(U)$ is a cosheaf of $\bb P_0$-algebras. In particular, this forms a factorization algebra valued in $\bb P_0$-algebras.

What's more interesting is the algebra of quantum observables. In the quantum world, given an open subset $U$ of a spacetime manifold $M$, the space $\mr{Obs}^{\mr{q}}(U)$ of quantum observables is never a commutative algebra. However, one can still make observations happening at \emph{different points} of spacetime, leading to the structure maps of a factorization algebra $\obsq (U_1) \otimes \cdots \otimes \obsq (U_n) \rightarrow \obsq(V)$ for \emph{disjoint} open subsets $U_1, \cdots, U_n \subset V$ for an open subset $V\subset M$. These are encoding the \emph{operator product} of observables. We'll write the map \[ \mr{Obs}^{\mr{q}}(U_1) \otimes \mr{Obs}^{\mr{q}}(U_2)\to \mr{Obs}^{\mr{q}}(V)\] as $O_1 \otimes O_2 \mapsto O_1 \ast O_2$.

One has to further specify the relation between $\obscl$ and $\obsq$ in the quantization process. Recall the usual deformation quantization problem that the observables of classical mechanics form a Poisson algebra, or a $\bb P_1$-algebra, $(A, \{ \;,\;\})$ and we want to quantize it to obtain the observables of quantum mechanics realized as an associative algebra, or an $\bb E_1$-algebra, $(\widetilde{A}, \ast)$ over $\RR[[\hbar]]$ such that there is an isomorphism $\widetilde{A} \otimes_{\RR[[\hbar]]} \RR \simeq A$ as associative algebras and for $a,b\in A$, one has \[\lim_{\hbar\to 0} \frac{a\ast b - b\ast a}{\hbar} = \{a,b\}.\] Likewise, a quantization of $\bb P_0$-algebra $A$ is an $\bb E_0$-algebra $\widetilde{A}$ over $\RR[[\hbar]]$ such that $\widetilde{A}\otimes_{\RR [[\hbar]]} \RR \simeq A$ and the induced bracket from the $\bb E_0$ structure coincides with the given $\bb P_0$ structure (for more detail and background, see \cite[Chapter 2]{CostelloGwilliam2}). 

\begin{theorem}[Costello--Gwilliam]\cite{CostelloGwilliam2}
For a classical field theory $\obscl$ and a choice of BV quantization, there exists a factorization algebra $\obsq$ over $\RR[[\hbar]]$ so that for every open subset $U\subset M$, $\obsq(U)$ is a quantization of the $\bb P_0$-algebra $\obscl(U)$ in the above sense.
\end{theorem}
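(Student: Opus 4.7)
The plan is to follow Costello's effective BV framework, which is the setting implicit in the statement. By hypothesis, a BV quantization of $\obscl$ consists of a family of interaction functionals $\{I[L]\}_{L > 0}$ on the space of fields, defined at each ``scale'' $L$ via a choice of gauge-fixing operator and heat kernel regulator, satisfying (i) the renormalization group flow equation relating $I[L]$ and $I[L']$ via the propagator $P_L^{L'}$, and (ii) the scale-$L$ quantum master equation, which in exponential form reads $(Q + \hbar \Delta_L) e^{I[L]/\hbar} = 0$. I will take this data as given.

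First I would define, for each open $U \subset M$ and each scale $L > 0$, the cochain complex $\obsq^L(U)$ whose underlying graded vector space consists of formal power series in $\hbar$ valued in functionals on the fields supported on $U$, equipped with the differential $Q + \{I[L], -\}_L + \hbar \Delta_L$. The scale-$L$ QME guarantees this differential squares to zero. The RG flow provides compatible quasi-isomorphisms $\obsq^L(U) \xrightarrow{\sim} \obsq^{L'}(U)$, so one defines $\obsq(U)$ as an appropriate homotopy limit over scales, obtaining a cochain complex of $\RR[\![\hbar]\!]$-modules.

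Next I would construct the factorization product. For disjoint open subsets $U_1, \dots, U_n \subset V$, the classical product -- ordinary multiplication of functionals -- extends to a cochain map $\obsq^L(U_1) \otimes \cdots \otimes \obsq^L(U_n) \to \obsq^L(V)$ at each scale $L$. The essential observation is that both $\Delta_L$ and $\{I[L], -\}_L$ involve contractions with a propagator supported within a distance proportional to $\sqrt{L}$ of the diagonal; for disjoint supports and sufficiently small $L$, the quantum corrections to classical multiplication therefore vanish identically. Compatibility with RG flow then propagates the product to all scales, and associativity is inherited from the commutativity and associativity of the classical multiplication. The semiclassical reduction $\obsq(U)/\hbar\obsq(U) \simeq \obscl(U)$ then has to be checked to recover the classical $\bb P_0$-structure, which follows from unwinding the $\hbar \to 0$ limit of the bracket $\{-,-\}_L$ and the leading-order $\hbar$-correction to the factorization product.

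The principal obstacle is the descent (Weiss cosheaf) axiom: for every Weiss cover $\mathfrak{U}$ of $V$ -- a cover such that every finite subset of $V$ lies in some element of $\mathfrak{U}$ -- one must show that the \v{C}ech complex $\check{C}(\mathfrak{U}, \obsq) \to \obsq(V)$ is a quasi-isomorphism. This requires carefully controlling the interplay between renormalization group flow and support conditions, because functionals compactly supported at one scale can only be transported to compactly supported functionals at another scale \emph{up to homotopy}. I would address this via the parametrix formalism of Costello--Gwilliam, replacing the scale-$L$ heat kernels by a directed system of parametrices whose supports can be shrunk in a controlled way, reducing the Weiss descent condition to a local statement about observables on arbitrarily small neighborhoods of finite configurations of points. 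A secondary subtlety, which I would flag but not try to resolve from scratch, is that constructing the quantization $\{I[L]\}$ itself may have cohomological obstructions -- but this production is not what the theorem is asserting; the theorem takes such a quantization as input.
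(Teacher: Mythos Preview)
The paper does not prove this theorem; it is stated as a citation to Costello--Gwilliam \cite{CostelloGwilliam2}, followed only by the one-sentence gloss ``One constructs such a factorization algebra quantizing a classical field theory by studying the obstruction-deformation complex associated to the given action functional.'' Your proposal therefore goes well beyond what the paper does: you are sketching the actual argument from the cited book rather than anything appearing in this article.

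That said, your sketch is a reasonable outline of the Costello--Gwilliam construction. The main inaccuracy is your claim that for disjoint supports and sufficiently small $L$ ``the quantum corrections to classical multiplication vanish identically'' because the propagator is supported near the diagonal. The heat-kernel propagator $P_L^{L'}$ is \emph{not} compactly supported near the diagonal; it is smooth with rapidly decaying but nonzero tails. The factorization product is therefore not the na\"ive classical product at small scale, and compatibility with the differential is not automatic from support considerations alone. Costello--Gwilliam handle this by working with general parametrices (as you mention later) and by showing the product is a cochain map up to a controlled homotopy, not by a vanishing argument. You also do not address the need to work with \emph{smeared} observables or appropriate completions to make the BV Laplacian $\Delta_L$ well-defined on local functionals. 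These are genuine technical ingredients in the cited proof, not merely bookkeeping.
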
 

One constructs such a factorization algebra quantizing a classical field theory by studying the obstruction-deformation complex associated to the given action functional.

\begin{remark} 
This theorem is formulated and proved in the context of \emph{perturbative} field theory, which is why we have $\hbar$ as a formal parameter as opposed to a number. Accordingly, what we write as $\EOM$ should be understood as a sheaf (on the spacetime manifold $M$) of $(-1)$-shifted symplectic \emph{formal moduli} spaces and $\obscl$ as a cosheaf of certain Chevalley--Eilenberg cochain complexes of \emph{$L_\infty$-algebras}. Moreover, there are technical issues to be dealt with coming from the infinite-dimensionality of the spaces involved, which can be done by using the Wilsonian effective field theory philosophy and renormalization techniques following Costello \cite{CostelloBook}.
\end{remark}

Now we want to understand some examples of factorization algebras.

The easiest class of examples is the class of locally constant factorization algebras.  A factorization algebra $\mc F$ is called \textit{locally constant} if, for every inclusion of disks $D \hookrightarrow D'$, the induced structure map $\mc F(D) \rightarrow \mc F(D')$ is a quasi-isomorphism. A theorem due to Lurie \cite{LurieCobordism} classifies such factorization algebras.

\begin{theorem}[Lurie]
A locally constant factorization algebra on $\bb R^n$ has a natural structure of $\bb E_n$-algebra.
\end{theorem}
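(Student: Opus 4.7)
The plan is to produce an $\mathbb{E}_n$-algebra from a locally constant factorization algebra $\mathcal{F}$ by extracting the operations coming from configurations of disjoint disks in a larger disk, and then leveraging local constancy to promote these to a coherent action of the little $n$-disks operad. First I would fix a model open disk $D \subset \mathbb{R}^n$ and set $A := \mathcal{F}(D)$. For any rectilinear embedding of $k$ disjoint open disks $D_1,\dots,D_k \hookrightarrow D$, the prefactorization structure supplies a structure map
\[\mathcal{F}(D_1)\otimes \cdots \otimes \mathcal{F}(D_k) \to \mathcal{F}(D),\]
and local constancy gives quasi-isomorphisms $\mathcal{F}(D_i) \xrightarrow{\sim} A$ and $\mathcal{F}(D) \xleftarrow{\sim} A$. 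Precomposing and postcomposing yields a candidate $k$-ary operation $A^{\otimes k} \to A$ parametrized by the configuration. Associativity for nested configurations follows from the coherence axiom satisfied by the factorization structure maps.

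The technical heart of the argument is to show that these operations depend on the chosen configuration only up to a contractible space of choices, so that they assemble into an honest $\mathbb{E}_n$-algebra rather than a mere family of discretely indexed multilinear maps. The key input is that the space of rectilinear embeddings of one disk into another is contractible: any two such embeddings are connected by a contractible family of affine embeddings, along each of which local constancy guarantees that the induced map $\mathcal{F}(D') \to \mathcal{F}(D)$ is a quasi-isomorphism. Upgrading this pointwise statement to a homotopy-coherent family requires working in the $\infty$-categorical setting, where I would interpret $\mathcal{F}$ as a functor out of the $\infty$-category of open disks and embeddings and observe that local constancy means this functor inverts all morphisms between disks, while the restriction to multi-embeddings into a single disk factors through the topological operad of rectilinear embeddings (which is a model for the little $n$-disks operad).

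The main obstacle I would expect is organizing these higher coherences cleanly: one must ensure that the symmetric monoidal structure on cochain complexes, the homotopical identifications $\mathcal{F}(D_i) \simeq A$, and the composition of nested configurations all interact in a compatible way. This is exactly why a clean proof is best carried out at the level of $\infty$-operads — one defines a morphism of $\infty$-operads from a model of $\mathbb{E}_n$ (built from configurations of disks and their embeddings) to the endomorphism $\infty$-operad of $A$, with local constancy providing the equivalences needed to exhibit the relevant $\infty$-operad as equivalent to the little $n$-disks operad. Once this framework is in place, the theorem becomes a formal consequence of the universal property of $\mathbb{E}_n$-algebras as algebras for this $\infty$-operad.
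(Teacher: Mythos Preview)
The paper does not actually prove this theorem: it is stated as a result of Lurie with a citation to \cite{LurieCobordism} and no argument is given. So there is nothing in the paper to compare your proposal against.

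That said, your sketch is a reasonable outline of the argument as it appears in the literature. The idea of fixing $A = \mathcal{F}(D)$, reading off $k$-ary operations from the structure maps for disjoint disks in a larger disk, and using local constancy to identify all the $\mathcal{F}(D_i)$ with $A$ is exactly the right picture. You are also right that the substantive work lies in organizing the higher coherences, and that this is most cleanly done at the level of $\infty$-operads: one shows that the symmetric monoidal $\infty$-category of open disks in $\mathbb{R}^n$ with embeddings, after localizing at inclusions of a single disk into another (which is what local constancy forces), is equivalent to the $\mathbb{E}_n$-operad. Your proposal correctly identifies this as the main obstacle and the correct resolution; it is an honest sketch rather than a proof, but as a plan it is sound.
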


In particular, as the factorization algebra of a topological field theory is by definition locally constant, an $n$-dimensional topological field theory yields an $\bb E_n$-algebra as its algebra of observables.

The formalism of factorization algebras can capture another interesting class of examples. If we have a factorization algebra $\mc F$ on $\CC$ which is holomorphically translation-invariant, that is, we suppose that $\mc F(D(z,r)) \simeq \mc F(D(0,r))$ holds for all $z\in \CC$ and $r \in \RR_{>0}$ and that the vector field $\frac{\del}{\del \overline{z} }$ acts homotopically trivially on $\mc F$, then under certain technical conditions the cohomology of $\mc F$ yields a vertex algebra (for more detail, see \cite{GwilliamThesis} and \cite[Chapter 5]{CostelloGwilliam1}). For instance, nicely blending this framework with Gelfand--Kazhdan formal geometry, Gorbounov, Gwilliam, and Williams constructed a sheaf of vertex algebras known as chiral differential operators on $X$ from the classical field theory data of the curved $\beta\gamma$-system with target $X$ \cite{GGW}.

\begin{remark}
In this paper we'll use the phrase ``local observables'' in a TQFT to refer to the complex the factorization algebra of observables assigns to a contractible open set.  This is a purely perturbative notion, in the sense that it is only sensitive to the formal neighborhood of a point in the moduli space of solutions to the equations of motion, and not to the global geometry of this moduli space (for instance it won't be able to see the difference between the stack $\EOM(B^n)$ of local solutions to the equations of motion and its affinization).  The phrase ``local operators'' of a TQFT can sometimes refer to something more general that also depends on ``non-perturbative'' data.
\end{remark}

\subsubsection{Functorial Field Theories} \label{functorial_background_section}
An alternative approach to the formalisation of quantum field theory is given by generalizing the Schr\"odinger picture of quantum mechanics, where we have the Hilbert space and time evolution operator as the main characters, by identifying quantum mechanics as the one-dimensional case of quantum field theory. That is, one assigns to a codimension one submanifold of spacetime its space of states, and to a bordism between such submanifolds the time evolution operator. Realizing the idea rigorously in complete generality is a daunting task, but there have been extensive developments for those theories that depend only on the topology of the spacetime: a definition of functorial TQFT was first proposed by Atiyah \cite{AtiyahTQFT}, following Segal's study on conformal field theory \cite{SegalConformal}, and later extended to incorporate boundary conditions for states along codimension two manifolds and to higher order boundary conditions, notably by Lurie \cite{LurieCobordism} following ideas developed by Freed, Quinn \cite{FreedQuinn, FreedExtended, FreedHigher} and Baez--Dolan \cite{BaezDolan}.

We'll consider the following definition of an $n$-dimensional functorial TQFT incorporating categories of boundary conditions:
\begin{definition} \label{2_extended_TQFT_def}
A 2-extended \emph{$n$-dimensional TQFT} is a symmetric monoidal functor of 2-categories
\[Z \colon \bord^2_n \to \mr{dgCat}\]
where $\bord^2_n$ is a 2-category of $n$-dimensional bordisms (originally defined by Schommer-Pries \cite{SchommerPriesThesis} as a symmetric monoidal bicategory; Later Calaque and Scheimbauer \cite{CalaqueScheimbauerBordisms} constructed a fully extended bordism category as an $(\infty,n)$-category whose truncation to an $(\infty,2)$-category is homotopy equivalent to Schommer-Pries's construction \cite[Proposition 8.17]{CalaqueScheimbauerBordisms}).
\end{definition}

\begin{remark} \label{motivation_dg_remark}
It will be useful to briefly recall the motivation for this functorial definition. First of all, according to this motivation, an $n$-dimensional TQFT $Z$ assigns to a closed $(n-1)$-manifold $M$ a cochain complex\footnote{From a physical point of view we can think of the differential graded structure here as arising from the BRST formalism: the space of states is graded by ghost number and the BRST operator acts as a differential.} which we regard as the space of quantum states on $M$. Now, the claim is that for a closed $(n-2)$-manifold $Y$, the assigned dg category $Z(Y)$ can naturally be interpreted as the category of boundary conditions along $Y$. In order to justify this, note that for a spacetime of form $M = Y \times [0,1]$, one cannot make sense of $Z(M)$ directly as the cochain complex of quantum states on $M$ as $M$ is not closed, but it does admit an interpretation as the space of quantum states on $M$ that satisfy \emph{boundary conditions} along the two boundary components $Y\times \{0\}$ and $Y\times \{1\}$. Namely, for such a pair of boundary conditions $\mc F_1$ and $\mc F_2$ (i.e. objects of the category $Z(Y)$), the space of compatible quantum states, which we write as $\mc H_{\mc F_1,\mc F_2} := \mr{Hom}_{Z(Y)}(\mc F_1,\mc F_2) $, is a cochain complex. Moreover, the composition map $\mc H_{\mc F_1,\mc F_2} \otimes \mc H_{\mc F_2,\mc F_3} \to \mc H_{\mc F_1,\mc F_3}$, which is a cochain map by the axioms defining a dg category, tells us how to glue a state in $\mc H_{\mc F_1,\mc F_2}$ and a state in $\mc H_{\mc F_2,\mc F_3}$ together to obtain a state on the manifold $Y \times [0,2]$ satisfying boundary conditions $\mc F_1$ and $\mc F_3$, using the fact that the spaces of states on $Y \times [0,1]$ and on $Y \times [0,2]$ are canonically equivalent in a TQFT. Furthermore, the axioms tell us that in each space of states $\mc H_{\mc F_1,\mc F_1}$ there's a ``unit state'', so that gluing it onto a state in $\mc H_{\mc F_1,\mc F_2}$ has no effect. This is why a dg category appearing in a topological field theory is modeling boundary conditions of a theory.
\end{remark}

\begin{example}
Consider a specific 2-dimensional quantum field theory: the B-model with the target $X$ a Calabi--Yau manifold.  A standard model for the category of boundary conditions in the B-model is $Z(\pt) = \mr{Coh}(X)$ -- the derived category of coherent sheaves on $X$ \cite{Sharpe}. The cochain complex of quantum states on an interval $I$ with boundary conditions $\mc F_1, \mc F_2 \in \mr{Coh}(X)$ is $\mc H_{\mc F_1, \mc F_2} = \mr{Hom}(\mc F_1, \mc F_2)$. This is also known as the space of open string states with the given boundary conditions.
\end{example}

In particular we can consider the dg category such a functorial TQFT assigns to the $(n-2)$-sphere.  This dg category $\mc L : = Z(S^{n-2}) $ is an $\bb E_{n-1}$-algebra object in dgCat by applying the functor to $n$-dimensional pairs of pants. We call the $\bb E_{n-1}$-monoidal dg category $\mc L = Z(S^{n-2})$ the category of \emph{line operators} of the theory $Z$. For an explanation of why this deserves its name, look at Section 7 of the recent paper of Beem--Ben-Zvi--Bullimore--Dimofte--Neitzke \cite{BBZBDN}. To be more explicit about the $\bb E_{n-1}$-algebra structure, consider an $(n-1)$-manifold of form $M = B_1(0) \bs (B_\eps(x_1) \sqcup \cdots \sqcup B_\eps(x_k))$, where $B_r(x)$ is an $(n-1)$-dimensional ball of radius $r$ around the point $x \in \RR^{n-1}$ and $\eps$ is sufficiently small. By definition, it defines a functor
\[Z(M) \colon \mc L^{\otimes k} \to \mc L.\] 
As the definition of this functor depended on a point in a configuration space of $(n-1)$-balls, we obtain a $\bb E_{n-1}$-algebra structure. 

One important feature possessed by the line operators is that they act on the dg category of boundary conditions along an $(n-2)$-manifold $Y$.  Indeed, the bordism $(Y \times [0,1]) \bs B_\eps(y,t)$ defines a functor $Z(S^{n-2}) \otimes Z(Y) \to Z(Y)$, or equivalently a monoidal functor $\mc L \to \mr{Fun}(Z(Y),Z(Y))$, where $\mr{Fun}(\mc C, \mc C)$ is the category of functors from $\mc C$ to itself with the composition monoidal structure. Indeed, the embedding of the ball $B_\eps(y,t)$ into $Y \times [0,1]$ picks out a particular $\bb E_1$-category, i.e. monoidal category, structure on $\LL$ in the direction of $[0,1]$ so that $Z(Y)$ is made into a module for $\LL$ as a monoidal category.

\begin{remark}
The category of line operators can become interesting for $n \geq 3$; if $n=2$, the monoidal action just amounts to 
\[\mr{id} \colon \mr{Fun}( Z(\pt) , Z(\pt) ) \to  \mr{Fun}( Z(\pt) , Z(\pt) )\] 
for $Z(S^0) = Z(\pt \sqcup \pt^* ) =  \mr{Fun}( Z(\pt) , Z(\pt) )$ and $Y=\pt$.
\end{remark}

An even more fundamental feature of a quantum field theory is its the algebra of local operators.  The \emph{local operators} in a TQFT $Z$ are elements of the cochain complex $Z(S^{n-1})$.  By $n$-dimensional bordisms analogous to those we used to define the $\bb E_{n-1}$ structure on the category of line operators, this complex becomes an $\bb E_n$-algebra.

In the case $n=2$, there is a whistle (open-closed) bordism from $S^1$ to $I$: see Figure \ref{whistle_fig}. Note that this bordism is actually depicted in the Introduction as Figure \ref{boundary_condition_fig}, if one sets $M=\pt$.  We interpreted the figure a little differently in that discussion, thinking of the factorization structure on operators instead of the time-evolution of states encoded by the functorial TQFT.  One should think of these stories as related by a state-operator correspondence.

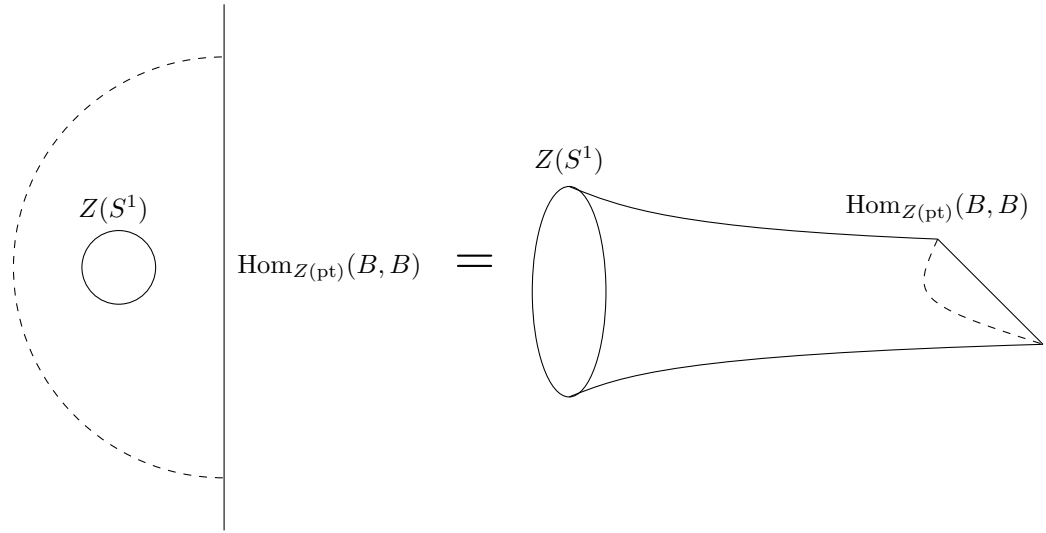
\begin{figure} [!h]
 \centering
 \begin{minipage}{.3\textwidth}
 \begin{tikzpicture}[scale=.7]
 \node at (2, 0) {$\hom_{Z(\pt)}(B, B) $};
 \clip (.008, -5) rectangle (-5,5);
 \draw[dashed] (0,0) circle (4);
 \draw(0,-5) -- (0,5);
 \draw (-2,0) circle (.7);
 \node at (-2.1,1.1) {$Z(S^1)$};
\end{tikzpicture}
\end{minipage}\hspace{1.4cm}\scalebox{2}{$=$}\hspace{.4cm}%
 \begin{minipage}{.3\textwidth}
\begin{tikzpicture}[scale=.7]
 \draw (1,0) ellipse (.7 and 2);
 \draw (8,1) -- (10,-1);
 \draw (1,2) ..  controls (2,1.6) and (3,1.2) .. (8,1);
 \draw (1,-2) ..  controls (2,-1.6) and (3,-1.2) .. (10,-1);
 \draw[dashed] (8,1) .. controls (7.5,-.1) and (7.5,-.2) .. (10,-1);
 \node at (1,2.5) {$Z(S^1)$};
 \node at (8,1.6) {$\hom_{Z(\pt)}(B, B)$};
\end{tikzpicture}
\end{minipage}
\caption{A pair of equivalent perspectives on the whistle cobordism in open-closed field theory defines a morphism $Z(S^1)\to \hom_{Z(\pt)}(B, B)$.}
\label{whistle_fig}
\end{figure}

Hence, for any choice of a boundary condition, say $B \in Z(\pt)$, the whistle gives rise to a morphism $Z(S^1)\to \hom_{Z(\pt)}(B, B)$. Then, in arbitrary dimension, the product of the whistle with an $(n-2)$-manifold $Y$ gives a morphism $Z(S^1 \times Y) \to  \hom_{Z(Y)}(\mc F, \mc F)$ for each $\mc F \in Z(Y)$. Composing with the complement of a point $\{0\} \times \{y\} \in B^2 \times Y$ understood as a bordism from $S^{n-1}$ to $S^1 \times Y$, an $n$-dimensional TQFT provides us with an operator of the form 
\[\alpha_{y,\mc F} \colon Z(S^{n-1}) \to \hom_{Z(Y)}(\mc F, \mc F)\] 
for each choice of a point $y \in Y$ and a boundary condition $\mc F$ along $Y$. This makes the space $\hom_{Z(Y)}(\mc F, \mc F)$ into a module for $Z(S^{n-1})$. The physical interpretation of this structure is that the local operators in the bulk theory, i.e. operators supported away from the boundary, act on states on $Y\times [0,1]$ satisfying a fixed boundary condition without altering it. From the construction, the action is through a particular $\bb E_2$-algebra structure which a priori depends on the choice $y\in Y$.

Finally let us explain one perspective on functorial field theory from the perspective of the classical BV formalism (not the only possible perspective, we should mention for instance the work of Cattaneo, Mn\"ev, and Reshetikhin \cite{CMR}, which takes a different point of view). In one sentence, one obtains a functorial field theory through (categorified) geometric quantization of a global description of the solutions to the classical equations of motion. This way of thinking may not be explicitly written in the literature, but it is implicit in many works as cited below.

Recall that in the classical BV formalism for a codimension 1 submanifold, we would obtain a (0-shifted) symplectic structure, as expected for a phase space. The Hilbert space is given by the geometric quantization of this space. This process of geometric quantization is not well-understood in this generality, but when $\EOM(N) = T^* \mc X_N$ is of the form of a cotangent bundle it admits a standard choice of geometric quantization by taking functions on the base, namely, $\OO(\mc X_N)$. Similarly, for a codimension 2 manifold $Y$, if we had $\EOM(Y) = T^*[1] \mc X_Y$, then its categorified geometric quantization can be taken to be sheaves on the base, namely, $\mr{Coh}(\mc X_Y)$, or better, its ind-completion $\IC(\mc X_Y)$ as the \emph{Hilbert} category: see Wallbridge \cite{Wallbridge}, Calaque \cite{Calaque} in the case of AKSZ type theories with Betti source, and forthcoming work of Safronov.

\begin{example}[B-model]
Recall that the B-model with target Calabi--Yau variety $X$ is described by $\EOM_B(Y) = \underline{\mr{Map}}(Y_{B} , T_{\mr{form}}^*[1] X )$. For a codimension 1 manifold $S^1$, we obtain $\EOM_B(S^1) = \LL( T_{\mr{form}}^*[1] X ) = T_{\mr{form}}^*(\LL X )$, where $\LL X$ is the derived loop space: the derived fiber product $X \times_{X \times X} X$. The Hilbert space $Z(S^1)$ is given by its geometric quantization, which is $\OO(\LL X) = \mr{PV}(X)$, the algebra of polyvector fields on $X$. For a codimension 2 manifold $\pt$, we have $\EOM_B(\pt) = T_{\mr{form}}^*[1] X$ which yields $Z(\pt)  = \mr{Coh}(X)$.  Note that by the cobordism hypothesis $Z(S^1)$ is determined by $Z(\pt)$, i.e. $Z(S^1) = \mr{HC}^\bullet(X)$. As a result, this example recovers some of the fundamental relationships between loop spaces, polyvector fields, and Hochschild (co)chains in the context of TQFT. For a different, but related, take on why this formal process encodes physically meaningful operations, see Sections 1 and 4 of \cite{BBZBDN}.  \footnote{For the purpose of this paper, the essential examples end up being versions of the B-model. The category of boundary conditions of B-model with target $X$ is usually modelled by $\coh(X)$; without loss or gain of information we can alternatively consider the category $\indcoh(X)$. One may just use this as the category of boundary conditions, without relying on the framework of categorified geometric quantization.}
\end{example}

\begin{remark}\label{rmk:de Rham algebra}
While this is perhaps the most convenient formalism in which to understand the category of boundary conditions and the action of $Z(S^{n-1})$ on it, we'll be interested principally in examples which it doesn't fully capture.  The reason is that the functorial TQFT formalism is not capable of capturing the de Rham algebraic structure of geometric Langlands twisted theories that we discussed in our previous paper \cite{EY1}. In other words, from the functorial TQFT perspective, it is natural to expect that one assigns a version of $D(\bun_G(C))$ or $\IC(\flat_G(C))$ as $Z(C)$ for a 4-dimensional TQFT $Z$ for a smooth compact \emph{topological} surface $C$, but our description of geometric Langlands gauge theories does depend on the choice of \emph{complex} structure of a smooth proper curve $C$.  We'll therefore consider TQFTs modelled by the factorization algebra description, along with the further data of a dg category $Z(M)$ modelling boundary conditions on a codimension 2 manifold $M$, equipped with an action of the factorization algebra.
\end{remark}

\subsubsection{Comparisons for Topological Field Theory} \label{comparing_background_section}
Now we would like to compare the two formalisms for topological field theory, starting from the Lagrangian data defining a classical field theory. An issue is that for constructing a functorial field theory it is essential to have a description of the full moduli space of solutions to the classical equations of motion (including its global geometry), whereas the formalism of factorization algebras has been developed only in a perturbative context, in the sense that it only depends on the quantization of the formal neighborhood of a fixed point in the moduli space of solutions. However, in the context of \emph{topological} field theories, there is a shortcut that allows us to construct global quantizations.

We compare the two perspectives using the \emph{state-operator correspondence}. The idea is that for an $n$-dimensional classical topological field theory on $M$, its associated functorial field theory $Z \colon \mr{Bord}_n\to \mr{Vect}_n$ and factorization algebra $\obsq$ satisfy the relation $\obsq (U) \simeq Z(\del U)$ for an open ball $U$ that is a subset of $M$, understood as an $n$-manifold. In order to say something more precise we will recall the following general definition of the space of states in a factorization algebra, following Costello and Gwilliam \cite[Section 4.9]{CostelloGwilliam1} (here the word state is used in the sense of operator algebras).

\begin{definition} \label{states_def}
Let $\obs^\mr{q}$ be a translation-invariant factorization algebra on $\RR^n$, over a ring $R$ (for instance, $R=k$ or $k[[\hbar]]$ for $k=\bb R$ or $\bb C$).  A \emph{state} for $\obs^\mr{q}$ is a graded linear map of degree 0 $\langle - \rangle \colon \mr H^\bullet(\obs^\mr{q}(\RR^n)) \to R$.  In a topological field theory, one can replace $\RR^n$ by an open ball $B^n$; there is a canonical equivalence between the spaces of observables on $B^n$ and on $\RR^n$. 
\end{definition}

Having this notion in mind, let us discuss what we should do in order to describe the algebra of local quantum observables starting from the global data of a classical field theory $U\mapsto \EOM(U)$. The space of classical local observables is just the space of functions on the global moduli space $\EOM(\RR^n)$, which forms a $\bb P_0$-factorization algebra. If the theory in question is an $n$-dimensional topological field theory, then its factorization structure is that of an $\bb E_n$-algebra via local constancy. From the Poisson additivity result $\bb P_0\otimes \bb E_n =\bb P_n$ (first announced by Rozenblyum, but not yet publically available -- a different proof is due to Safronov \cite{Safronov}, c.f. the Dunn additivity result $\bb E_1\otimes\bb E_n= \bb E_{1+n}$ \cite{Dunn, HigherAlgebra}), this structure is equivalent to the structure of a $\bb P_n$-algebra. 
 
Now, the process of BV quantization promotes the $\bb P_0$-algebra structure on $\obscl(U)$ to an $\bb E_0$-algebra structure on $\obsq(U)$ for each open set $U\subset M$. Accordingly, for an $n$-dimensional topological field theory, the factorization algebra is described by an $\bb E_n$-algebra. Conceptually speaking, this $\bb E_n$-algebra structure should be equivalent to the one coming from the identification $Z(S^{n-1})^* = \obsq(B^n)$ (see Remark \ref{state_operator_E_n_comparison_remark} for more discussion of this point).

In this sense, one should be able to describe the BV quantization problem for (nonperturbative) $n$-dimensional topological field theory in terms of the quantization problem going from a $\bb P_n$-algebra to an $\bb E_n$-algebra. In order to elucidate this translation, let us try to discuss their relationship.

For an $\bb E_n$-algebra $\mc A$, by definition we have a map $\mr{Emb}(\coprod_I B^n,B^n ) \times \mc A^I \rightarrow \mc A$. For $I=\{1,2\}$, we have a map $S^{n-1} \rightarrow \mr{Emb}( B^n \coprod B^n,B^n )$ by considering the first disk fixed at the origin. This gives a map $S^{n-1} \times \mc A^2 \rightarrow\mc A$. Taking cohomology, we get a map $\mr H^\bullet(S^{n-1}) \otimes \mr{H}^\bullet(\mc A)^{\otimes 2} \rightarrow \mr{H}^\bullet (\mc A)$. Thinking of the nontrivial class in $\mr H^{n-1}(S^{n-1})$, we have a map $\mr H^\bullet(\mc A)^{\otimes 2}[n-1] \rightarrow \mr{H}^\bullet(\mc A)$, or $(\mr H^\bullet(\mc A)[n-1] )^{\otimes 2} \rightarrow \mr{H}^\bullet(\mc A)[n-1]$.

\begin{theorem}[Cohen \cite{Cohen}]\label{thm:Cohen}
Let $\mc A$ be an $\bb E_n$-algebra. Then the above map on $\mr H^\bullet(\mc A)$ induces a Lie bracket of degree $1-n$ on $\mr H^\bullet(\mc A)$. Moreover, if $n \ge 2$, then $\mr H^\bullet(\mc A)$ is a $\bb P_n$-algebra.
\end{theorem}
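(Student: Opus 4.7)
The plan is to identify $\bb E_n$-algebras with algebras over the little $n$-disks operad $E_n$ (that is, $E_n(k) = \mr{Emb}(\coprod_k B^n, B^n)$) and then to reduce the statement to the computation of its homology operad. The general principle: if $\mc A$ is an algebra over an operad $\mc O$ in chain complexes, then $\mr H^\bullet(\mc A)$ is an algebra over $\mr H_*(\mc O)$. So the theorem amounts to proving that $\mr H_*(E_n)$ is the $\bb P_n$-operad for $n\geq 2$ and the associative operad for $n=1$ (whose commutator is the Lie bracket of degree $0=1-n$).

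First I would recall the standard homotopy equivalence $E_n(k) \simeq \mr{Conf}_k(\bb R^n)$. In particular $E_n(2) \simeq S^{n-1}$, so the structure map $E_n(2) \times \mc A^{\otimes 2} \to \mc A$ induces on (co)homology a pairing with the two generators of $\mr H_*(S^{n-1})$. The degree $0$ generator yields a binary product $m$ on $\mr H^\bullet(\mc A)$; the top class $[S^{n-1}] \in \mr H_{n-1}(S^{n-1})$ yields a binary operation $\{-,-\}$ of cohomological degree $1-n$, exactly as in the excerpt. I would verify symmetry properties directly: the transposition $\sigma \in \Sigma_2$ acts on $S^{n-1}$ by the antipodal map, hence trivially on $\mr H_0$ and by $(-1)^n$ on $\mr H_{n-1}$. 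This immediately gives commutativity of $m$ for $n\geq 2$ (since $E_n(2)$ is connected, but more precisely since $\sigma$ acts trivially on $\mr H_0$), and the graded antisymmetry of the bracket in degree $1-n$.

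The hard part is establishing Jacobi and Leibniz. For this I would analyze $E_n(3)\simeq \mr{Conf}_3(\bb R^n)$ via the Fadell--Neuwirth fibration $\mr{Conf}_3(\bb R^n) \to \mr{Conf}_2(\bb R^n)$ with fiber $\bb R^n \setminus \{p_1,p_2\} \simeq S^{n-1} \vee S^{n-1}$, and identify in its homology the three nested-bracket cycles $\{x,\{y,z\}\}$, $\{y,\{z,x\}\}$, $\{z,\{x,y\}\}$ coming from iterated operadic composition. The Jacobi identity is the statement that these three classes sum to zero in $\mr H_{2(n-1)}(\mr{Conf}_3(\bb R^n))$, which is a standard computation (they bound the fundamental class of a certain $S^n$-worth of configurations). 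Similarly, the Leibniz relation $\{x,yz\} = \{x,y\}z + (-1)^{|y|(1-n)}y\{x,z\}$ arises from a boundary relation between cycles $S^{n-1} \times \{\ast,\ast\}$ in $\mr{Conf}_3$ when one point approaches a cluster of two. This is precisely F.~Cohen's original computation, which I would cite (and only sketch the key cycles) rather than redo in full.

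The main obstacle is precisely this last homological computation — verifying at the operadic level that the relations among the classes in $\mr H_*(\mr{Conf}_3(\bb R^n))$ are exactly the Jacobi and Leibniz relations, with the correct signs, and that these are the only relations so that $\mr H_*(E_n)$ is freely the $\bb P_n$-operad on these generators. For $n=1$ the situation degenerates: $S^0$ is disconnected, $\sigma$ swaps its two points, and the two resulting products are independent rather than homotopic, so $\mr H_*(E_1)$ is the associative (not commutative) operad; the induced Lie bracket is just the graded commutator, and there is no further Poisson-compatibility statement to make. I would handle this case as a short separate remark at the end.
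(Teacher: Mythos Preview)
The paper does not provide a proof of this theorem; it is stated with attribution to Cohen and used as a black box. Your proposal is the correct and standard route---exactly Cohen's original argument---so there is nothing in the paper to compare against beyond the citation itself.

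Your outline is sound: the reduction to computing $\mr H_*(E_n)$ as an operad, the identification $E_n(k)\simeq\mr{Conf}_k(\bb R^n)$, reading off the product and bracket from $\mr H_*(S^{n-1})$, and extracting Jacobi and Leibniz from relations in $\mr H_{2(n-1)}(\mr{Conf}_3(\bb R^n))$. You have also correctly flagged the genuine work as the arity-$3$ computation and the verification that no further relations appear. One small point: for the full $\bb P_n$-operad identification you need not just arities $2$ and $3$ but all arities, i.e.\ that $\mr H_*(\mr{Conf}_k(\bb R^n))$ as a $\Sigma_k$-module matches $\bb P_n(k)$ and that the operadic compositions agree; this follows from Cohen's description of $\mr H_*(\mr{Conf}_k(\bb R^n))$ as generated under operadic composition by the arity-$2$ classes subject only to Jacobi/Leibniz (equivalently, the Arnold relations), but it is worth saying explicitly rather than stopping at $k=3$.
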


Surprisingly, if $n$ is at least 2, then it's possible to go the other way: lifting $\bb P_n$-algebra structures to $\bb E_n$-algebra structures. This follows from the formality of the $\bb E_n$ operad, which is a theorem of Tamarkin \cite{Tamarkin} for $n=2$ and of Kontsevich \cite{KontsevichOperadsMotives} and Lambrechts--Voli\'c \cite{LambrechtsVolic} for $n\geq 3$. For our purpose, we will use the following theorem concerning the spaces of deformations of an $\bb E_{n}$ and of a $\bb P_{n}$ structure, which is based on the formality result.

\begin{theorem}\cite[Corollary 5.4]{ToenBranes} \label{thm:Toen}
For $n\geq 2$, if $X$ is a derived stack, then the dg Lie algebra $C^{\bb E_{n}}(X)[n]$ is non-canonically quasi-isomorphic to the dg Lie algebra $\mr{Pol}(X,n-1)[n]$. The quasi-isomorphism depends on the choice of a formality equivalence $\bb E_{n}\simeq \bb P_{n}$.
\end{theorem}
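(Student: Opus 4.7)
The plan is to combine two principal ingredients: the formality of the little disks operad $\bb E_{n+1}$ in characteristic zero, and a higher Hochschild--Kostant--Rosenberg type identification of the $\bb P_{n+1}$-center of a smooth commutative algebra with polyvector fields.

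First, I would invoke the formality theorem of Kontsevich and Tamarkin: the operad $\bb E_{n+1}$ is quasi-isomorphic to its homology operad, namely the Poisson operad $\bb P_{n+1}$. This quasi-isomorphism is not canonical; Tamarkin's construction produces one from each choice of a Drinfeld associator, and this is precisely where the non-canonicity in the statement enters. Applied to the commutative algebra $A$ (regarded as an $\bb E_{n+1}$-algebra via the natural map $\bb E_{n+1}\to \bb E_\infty$), formality transports the $\bb E_{n+1}$-Hochschild cochain complex $C^{\bb E_{n+1}}(X)$ to the $\bb P_{n+1}$-Hochschild cochain complex $C^{\bb P_{n+1}}(X)$. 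Combined with Cohen's theorem recalled just above, this transport preserves the shifted dg Lie algebra structure after the $[n+1]$-shift.

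Second, I would compute the $\bb P_{n+1}$-center of $A$ directly. A higher HKR argument identifies it, as a graded commutative algebra, with $\mr{Sym}_A(\mr T_A[-n])$, and the Lie bracket induced by the $\bb P_{n+1}$-operad structure is the Schouten--Nijenhuis bracket of shifted polyvector fields. By definition this is $\mr{Pol}(X,n)$, and the shift by $[n+1]$ places the bracket in the correct degree to match the shifted $\bb E_{n+1}$-side. Composing with the formality quasi-isomorphism yields the desired equivalence of dg Lie algebras $C^{\bb E_{n+1}}(X)[n+1] \simeq \mr{Pol}(X,n)[n+1]$.

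The main obstacle is Tamarkin's formality of $\bb E_{n+1}$, which is by far the deepest input and the sole source of the Drinfeld associator dependence---the HKR identification at the $\bb P_{n+1}$ level is canonical once $A$ is smooth. A secondary but nontrivial technical point is to promote the operadic formality to an equivalence of the associated higher Hochschild complexes as \emph{dg Lie} algebras (rather than merely as chain complexes); this requires combining the higher Deligne conjecture, which provides the $\bb E_{n+2}$-structure on $C^{\bb E_{n+1}}(X)$, with Dunn--Lurie additivity in order to extract the shifted Lie bracket in a way that is compatible with the chosen formality equivalence.
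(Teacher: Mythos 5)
The paper does not prove this statement; Theorem \ref{thm:Toen} is imported verbatim from To\"en's work (cited as Corollary 5.4 of \cite{ToenBranes}) and used as a black box. So there is no ``paper's own proof'' to compare against, but I can assess your reconstruction on its merits.

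Your outline correctly identifies the two load-bearing inputs and their roles: (i) operadic formality $\bb E_{n+1} \simeq \bb P_{n+1}$ over a field of characteristic zero, which is where the Drinfeld associator enters and accounts for the non-canonicity; and (ii) an HKR-type identification of the $\bb P_{n+1}$-center with polyvector fields, followed by (iii) the higher Deligne conjecture together with Dunn additivity to see that the transported identification is one of shifted dg Lie algebras rather than of bare complexes. This is indeed the strategy of To\"en's argument, so the proposal is sound in structure.

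Two points deserve a little more care, though. First, the statement does not assume $A$ is smooth, and your sketch explicitly invokes smoothness to make HKR ``canonical.'' In the source, $\mr{Pol}(X,n)$ is defined as $\OO(T^*[n+1]X)$ in the derived sense, i.e. $\mr{Sym}_A(\bb T_{A}[-n-1])$ with $\bb T_A$ the (possibly non-flat) tangent complex, and the relevant comparison is a \emph{derived} HKR statement for $\bb E_{n+1}$-centers, which holds without the classical smoothness hypothesis. If you insist on classical smoothness you are proving a weaker statement than the one cited. Second, your degree bookkeeping is off by one: $\OO(T^*[n+1]X) \cong \mr{Sym}_A(\bb T_A[-n-1])$, not $\mr{Sym}_A(\bb T_A[-n])$; with your shift the bracket lands in the wrong degree and will not match the $\bb E_{n+1}$-side after the $[n+1]$ suspension. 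Neither issue invalidates the overall plan, but both should be fixed before this could be called a proof.
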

 
Here $C^{\bb E_{n}}(X)$ is the $\bb E_{n}$-Hochschild cochain complex, which is an $\bb E_{n+1}$-algebra, and hence $C^{\bb E_{n}}[n]$ has the structure of a dg Lie algebra. If $X=\spec A$, this controls the deformations of the $\bb E_{n}$-algebra structure on $A$. For the right-hand side, $\mr{Pol}(X,n-1) := \OO( T^* [n]X )$ is a $\bb P_{n+1}$-algebra and hence $\mr{Pol}(X,n-1)[n]$ has the structure of a dg Lie algebra controlling the deformations of $\bb P_{n}$-algebra structure on $A$.

The only case of importance for our arguments will be the following simple situation. Suppose a $\bb P_n$-algebra $A^{\mr{cl}}$ comes from a commutative algebra structure with a trivial Poisson bracket, and that it does not admit any deformations as a $\bb P_{n}$-algebra. Then it doesn't admit any $\bb E_{n}$-algebra deformations either, i.e. $A^{\mr{cl}}$ does not admit a non-trivial quantization.  In other words, any quantization $A^{\mr{q}}$ is homotopic to the $\bb E_{n}$-algebra whose structure is the trivial one coming from the commutative algebra structure.  In particular, as commutative algebras, $\mr H^\bullet(A^{\mr q})$ and $A^{\mr{cl}}$ are necessarily quasi-isomorphic. This is the claim that we'll want to use, in a specific example with $n=4$ (see Section \ref{restricting_KW_theories_section}, and in particular Proposition \ref{no_deformations_prop}).

\begin{remark}[Quantum mechanics and deformation quantization]
The original deformation quantization problem asks if, given a Poisson manifold $M$, one can promote the $\bb P_1$-algebra $(C^\infty(M), \{-,-\})$ to an associative algebra $(C^\infty(M)[[\hbar]],\ast)$, or $\bb E_1$-algebra, where $\ast$ is an $\hbar$-linear associative product satisfying $\lim_{\hbar\to 0}\frac{a\ast b - b\ast a}{\hbar } = \{a,b\}$ for $a,b\in C^\infty(M)$. A Poisson structure on $M$ corresponds to a Maurer--Cartan solution of the dg Lie algebra $\mr{Pol}_{C^\infty}(M,0)[1]$. The original formality theorem of \cite{KontsevichPoisson} provides a homotopy equivalence of dg Lie algebras $C^{E_1}(C^\infty(M))[1] \simeq \mr{Pol}_{C^\infty}(M ,0)[1]$ which solves the relevant deformation quantization problem. Note that here we're using the notation $\mr{Pol}_{C^\infty}(M,0)$ to denote the analogue of $\mr{Pol}(X,0)$ for a smooth manifold $M$ instead of a derived stack $X$; for more detail one should refer to the original paper.
\end{remark}

The claim is that from the data of a classical field theory $U\mapsto \EOM(U)$, one can construct the algebra of local operators as well as the dg category of boundary conditions. That is, functions on $\EOM(\pt)$ form a $\bb P_n$-algebra and its quantization forms an $\bb E_n$-algebra capturing the factorization algebra structure of local quantum observables. In codimension 2, $\EOM(Y^{n-2} )$ is 1-shifted symplectic, and we obtain the dg category of boundary conditions by categorified geometric quantization (in suitable examples).  Our main examples -- topologically twisted $\mc N=4$ super Yang--Mills -- yield examples of categories with $\bb E_2$-algebra actions by exactly this procedure.

\subsection{Support Conditions in dg Categories} \label{restriction_background_section}
In this section we'll recall some of the main constructions in the work of Benson, Iyengar, and Krause \cite{BIK} and of Arinkin and Gaitsgory \cite{ArinkinGaitsgory}.  We'll explain, as we alluded to in Section \ref{construction_intro_section} in the introduction what it means for an $\bb E_2$-algebra to act on a dg category, and how to think about the support of an object with respect to this action.  We'll begin by introducing, first informally and then in more detail, the notion of the \emph{scheme of singularities} of a derived stack.  We'll then analyse this notion in an important family of examples.

We should warn the reader that this subsection contains some details which are undeniably technical.  We suggest reading the motivational Section \ref{SS_motivation_section} first, then only coming back to the rest of this section in order to understand the proof of the main Theorem \ref{main_theorem}.  The most technical details occur in the proof of Proposition \ref{SS_complete_intersection_prop}; these details are not important for the remainder of the paper, so have been deferred to Appendix \ref{GCI_appendix}. 

\subsubsection{Motivation for Singular Support} \label{SS_motivation_section}

Recall from Section \ref{GL_intro_section} the ``best hope'' form of the geometric Langlands conjecture. This says, in particular, that there is an equivalence of dg categories 
\[\text{D}(\bun_{G^\vee}(C)) \simeq \QC( \Flat_{G}(C)),\] 
satisfying a number of additional conditions.  Experts knew even in the 90s that the conjecture cannot be true as written: the dg category $\text{D}(\bun_{G^\vee}(C))$ is too big (see for instance the example of $C = \bb{CP}^1$ studied by Lafforgue \cite{Lafforgue}.  One point of view on the problem is that the dg category $\QC( \Flat_{G}(C))$ is ``not big enough'' to properly capture the information at the singular points of $\Flat_{G}$. Arinkin and Gaitsgory described a technique that enlarged the category in order to resolve this shortcoming by studying support conditions in dg categories.

In general, when a stack $\mc X$ is not smooth, the natural functor $\perf(\mc X) \to \coh(\mc X)$ is not an equivalence\footnote{The natural functor is defined only when $\mc X$ is eventually coconnective, that is, its structure sheaf should have cohomology in bounded degrees in the derived direction}, but only a fully faithful embedding. Here a perfect complex is a complex finitely generated by vector bundles on $\mc X$, whereas a coherent complex is finitely generated by coherent sheaves on $\mc X$. The inclusion of the ind-completions yields the natural functor $\QC(\mc X) \to \indcoh(\mc X)$.  Recall the following elementary example, showing that this inclusion is strict.

\begin{example}
Consider $A=k[\eps]/(\eps^2)$. The $A$-module $k$ is coherent, but not perfect; a resolution of $k$ as an $A$-module involves infinite terms $\xymatrix{
\cdots \ar[r]^-\eps & A \ar[r]^-\eps &  A \ar[r]^-\eps & k}$. 
\end{example}

As a next step, one may try to compare $\indcoh(\Flat_G)$, rather than $\QC(\Flat_G)$ with the dg category on the other side of the Langlands equivalence, but then it turns out to be too big. One has to find -- as we discussed in the introduction -- an intermediate dg category between these two examples.

Arinkin and Gaitsgory described intermediate dg categories between $\QC(\mc X)$ and $\indcoh(\mc X)$ for derived stacks $\mc X$ which are \emph{quasi-smooth}. Being quasi-smooth means that the tangent complex is perfect and has Tor-amplitude in degrees $\leq 1$. More concretely, at least for a derived scheme, being quasi-smooth means that $\mc X$ can be written locally as a complete intersection. Namely, $\mc X$ can be Zariski locally written as a derived fiber product as follows:
\[\xymatrix{
\mc X \ar[r] \ar[d] & \bb A^n \ar[d]\\
\pt \ar[r] &  \bb A^m.
}\]

The following is the most important example.
\begin{example} 
Consider the derived fiber product
\[\xymatrix{
\spec k[\eta] \ar[r] \ar[d] &  \pt \ar[d] \\
\pt \ar[r] & \bb A^1}\]
where $\eta$ is a parameter of degree $-1$. The affine derived scheme $\mc X=\spec k[\eta]$ is, therefore, quasi-smooth. Its tangent complex is $\bb T_\mc X = k[-1]$ generated by $\frac{\del}{\del \eta}$. One can consider $\bb A^m$ to get derived schemes of the form $k[\eta_1,\cdots,\eta_m]$ as well.
\end{example}

\begin{remark}
This is the first time derived schemes have appeared in this paper.  The category of affine derived schemes is just the opposite category to the category of commutative dgas concentrated in degrees $\le 0$.  More generally, there's a notion of a not necessarily affine derived scheme, which consist of those derived stacks admitting a Zariski atlas of affine derived schemes.  Derived schemes form a subcategory of the category of derived stacks; for a detailed reference explaining this theory we refer the reader to the book of Gaitsgory and Rozenblyum \cite{GRvol1}.
\end{remark}

To get an idea of how one can build subcategories of $\indcoh(X)$ for quasi-smooth stacks $X$, let us recall a familiar idea from the representation theory of a simple Lie algebra $\mf g$. If we want to understand the theory of $U\mf g$-modules, then we can consider the center $Z\mf g = Z (U \mf g)$; the representation theory of $U\mf g$ can be decomposed into the study of the algebraic geometry of the affine scheme $\spec Z \mf g \cong \mf h / W \cong \bb A^r$ together with the representation theory of the fiber, that is, the category $ U\mf g\text{-mod}_\chi$ for $\chi \in \spec Z \mf g$, that one might study using Beilinson--Bernstein localization. 

In light of that, we would like to describe the ``center'' of the category $\indcoh(X)$. In fact, one can make a definition that makes sense at the level of a general dg category, not only a category of ind-coherent sheaves.  This is the theory of the Hochschild cochains of a dg category.  Let's recall the construction, following \cite[Appendix E]{ArinkinGaitsgory}. 

We'll start our description by recalling a fundamental category-theoretic fact.

\begin{prop}[{\cite[Section 4.8.5]{HigherAlgebra}}] \label{E_2_adjunction_prop} 
There is an adjunction 
\[L \colon \{\bb E_2 \text{-algebras}\}\rightleftarrows \{\text{monoidal dg categories}\} \colon R\]
where -- as an $\bb E_1$-algebra -- $R(\LL) = \eend_{\LL}(1_{\LL})$, and where -- as a dg category -- $L(\mc A) = \mc A^{\mr{op}} \text{-mod}$.  Recall that our convention is that ``algebra'' always means an algebra in cochain complexes.
\end{prop}

The additional product on the $\bb E_1$-algebra $\eend_{\LL}(1_{\LL})$ is inherited from the monoidal structure on $\LL$.  Indeed, as Arinkin and Gaitsgory explain, the above adjunction should be thought of as a specific example of an adjunction that makes sense internally in any $\infty$-category, applied here internally to $\bb E_1$-algebras.  Using this adjunction we can define Hochschild cochains as an $\bb E_2$-algebra.
 
\begin{definition}
The \emph{Hochschild cochains} $\mr{HC}^\bullet(\mc C)$ of a dg category $\mc C$ are the $\bb E_2$-algebra $R(\eend_{\mr{dgCat}_{\mr{cont}}}(\mc C))$ obtained from Proposition \ref{E_2_adjunction_prop} -- that is, the algebra $\eend_{\eend_{\mr{dgCat}_{\mr{cont}}}(\mc C)}(\id_{\mc C})$.  The cohomology of this $\bb E_2$-algebra is the \emph{Hochschild cohomology} $\HH^\bullet(\mc C)$ of $\mc C$.
\end{definition}

Roughly speaking, an element $a \in \mr{HC}^\bullet(\mc C)$ is a combatible collection of endomorphisms $(a_{\mc F} \colon \mc F\to \mc F)_{\mc F \in \mc C}$. The dg category structure of $\mc C$ induces the differential. One can easily check that if $\mc C=A\text{-mod}$ for an ordinary associative algebra $A$, then the center $Z(A)$ is the same as $\HH^0(\mc C)$.

Now, we would like to have a space to use in order to study $\mc C$ locally, as in our representation theoretic analogy, but its not yet clear how to relate $\mr{HC}^\bullet(\indcoh(X))$ to the geometry of $X$.  In order to draw this bridge, we'll use as motivation the following fundamental theorem.

\begin{theorem}[Hochschild--Kostant--Rosenberg]
If $X$ is a smooth affine scheme, then one has an isomorphism $\HH^\bullet(\coh(X)) \cong \Gamma(X, \wedge^\bullet T_X)  $.
\end{theorem}

More generally, if $X$ is an eventually coconnective affine derived scheme, then one has an equivalence \cite[Corollary G.2.7]{ArinkinGaitsgory}
\[\mr{HC}^\bullet(\indcoh (X)) \simeq  \Gamma (X , U_{\OO_X} ( \bb T_X[-1]) ).\] 
We'll discuss this equivalence a bit more in the next subsection.  For now, one should just be aware of the non-obvious fact that the $(-1)$-shifted tangent complex admits a natural Lie bracket \cite{Hennion,Kapranov}, which we're using to make sense of the enveloping algebra appearing in this expression. When $X$ is a smooth affine scheme, then the Lie algebra $\bb T_X[-1]$ is necessarily abelian (for degree reasons) and we recover the Hochschild--Kostant--Rosenberg theorem.

When $X$ is only quasi-smooth, $\bb T_X[-1] $ is still very simple. One has $\mr H^0(X, \OO_X)\to \HH^0(\indcoh(X))$. Moreover, one has $\mr H^2(X, \bb T_X[-1] ) \to \HH^2(\indcoh(X))$ as a module over $H^0(X,\OO_X)$. Recalling the defining property of the shifted cotangent space, i.e. $T^*[-1]X = \spec \left(\sym_{\OO_X} ( \bb T_X[1])  \right)$, we can make the following definition.

\begin{definition} \label{scheme_sings_def}
The \emph{scheme of singularities} of a quasi-smooth derived stack $\mc X$ is defined to be the classical truncation $\sing(\mc X) = (T^*[-1] \mc X)^{\mr{cl}}$ of the shifted cotangent space.  This classical truncation is a functor that produces an underived stack for a derived stack: the intuition is that one ``truncates'' the structure sheaf -- a cochain complex in non-positive degrees -- at degree 0; more rigorously one can define the functor as a left Kan extension (\cite[I.2]{GRvol1}).
\end{definition}

\begin{remark}
So what does this mean, and why is it called the scheme of singularities?  The basic idea is that, for quasi-smooth stacks $\mc X$, the degree 1 part of the tangent complex, or the degree $-1$ part of the cotangent complex, controls the singularities of $\mc X$.  In other words, there is a natural projection map $\pi \colon \mr{Sing}(\mc X) \to \mc X^{\mr{cl}}$ to the classical truncation of $\mc X$, and the fibers $\pi^{-1}(x)$ over points in $\mc X^{\mr{cl}}$ are the vector spaces $\mr H^{-1}(\bb L_{\mc X,x})$, and control ``how singular'' $\mc X$ is at the point $x$.  In particular, the fiber $\pi^{-1}(x)$ over $x$ is just a point if and only if $\mc X$ is smooth at $x$.
\end{remark}

The above story was intended to motivate the idea, that the scheme of singularities captures some of the data of the Hochschild cochains of $\indcoh(X)$ in a geometric way.  In particular, there is a canonical map $\mc O_{\sing X} \to \HH^{\mr{even}}(\indcoh (X))$. From this, one can state Arinkin and Gaitsgory's main definition of singular support conditions.

\begin{definition}
Let $X$ be quasi-smooth. For a conical (i.e. invariant under the action of $\CC^\times$ by scalar multiplication on the fibers of the map $\pi \colon \mr{Sing}(X) \to X^{\mr{cl}}$) Zariski-closed subset $Z\subset \sing X$, we define $\indcoh_Z(X) \subset \indcoh(X)$ to be the full dg subcategory consisting of objects whose set-theoretic supports are contained in $Z$.
\end{definition}

\begin{example}
For the fundamental example of $X=\spec k[\eta_1,\cdots,\eta_m]$, the scheme of singularities is $\sing(X) = \bb A^m$.  One can identify quasi-coherent sheaves as $\QC(X) = \indcoh_0(X)$. For any $0 \subsetneq Y \subsetneq \bb A^m$, we obtain an intermediate dg category between quasi-coherent and ind-coherent sheaves..
\end{example}

The discussion so far can be summarized by the following table (Table \ref{analogy_table}) of analogies relating the decomposition of the category of representations of $\gg$ (over the center) and of the dg category of ind-coherent sheaves on a quasi-smooth scheme $X$ (over the scheme of singularities):

\begin{table}[h]
\centering
\begin{tabular}{|c|c|}
\hline
 Simple Lie algebra $\mf g$ & Quasi-smooth scheme $X$ \\
\hline\hline
$A\text{-mod} = U\mf g\text{-mod}$ & $\mc C=  \indcoh(X)$ \\
\hline
 $Z(A ) = Z \mf g$ & $ \mr{HC}^\bullet(\mc C) = \mr{HC}^\bullet(X)$\\
\hline 
$ \spec Z \mf g \simeq  \mf h / W \simeq \bb A^r$ & $\sing X = (T^*[-1]X )^{\mr{cl}}$ \\
\hline
$ U\mf g\text{-mod}_\chi$ & $\indcoh_Z(X)$ \\
\hline
\end{tabular}
\caption{Analogy between singular support conditions and the decomposition of the category of representations of a simple Lie algebra.}
 \label{analogy_table}
\end{table}

\subsubsection{Singular Support Conditions} \label{main_SS_section}

With that somewhat informal motivation in hand, let's give a more formal introduction to singular support conditions.  Along the way we'll set up some notation that we'll use later in the paper. For the rest of this section, $\mc C$ will denote a dg category, and $\mc A$ will denote an $\bb E_2$-algebra.  As we've been discussing, at its heart, the idea of singular support conditions stems from the action of the Hochschild cohomology of the category: they are set-theoretic support conditions for this action.  In order to make sense of this, and to expand on some of the physical motivation for our story that we introduced in Section \ref{construction_intro_section}, we'll need to explain what ``support'' means in a general context.  This formalism was originally developed by Benson, Iyengar, and Krause \cite{BIK} in the setting of triangulated categories.  

\begin{definition}
A (left) \emph{action} of an $\bb E_2$-algebra $\mc A$ on $\mc C$ is an action of the monoidal dg category $\mc A\text{-mod}$ on $\mc C$, that is, a monoidal functor $\mc A\text{-mod} \to \eend_{\mr{dgCat}_{\mr{cont}}}(\mc C)$.
\end{definition}

\begin{prop}
There is a canonical action of $\mr{HC}^\bullet(\mc C)^{\mr{op}}$ on $\mc C$, where here the superscript op indicates the opposite $\bb E_1$-algebra \footnote{Arinkin and Gaitsgory write this as $\mc A^{\text{int-op}}$, where ``int'' stands for ``internal'', they write $\mc A^{\text{ext-op}}$ for the opposite using the other (``external'') $\bb E_1$-structure.}.
\end{prop}

\begin{proof}
Given any monoidal dg category $\LL$ the counit of the adjunction from \ref{E_2_adjunction_prop} is a monoidal functor 
\[\eps_{LR} \colon \eend_{\LL}(1_\LL)^{\mr{op}}\text{-mod} \to \LL.\] 
Consider the monoidal dg category $\LL= \eend_{\mr{dgCat}_{\mr{cont}}}(\mc C)$ to obtain the desired functor.
\end{proof}

From this point of view, an alternative characterization of the action of $\mc A$ on $\mc C$ is as an $\bb E_2$-homomorphism $\mc A \to \mr{HC}^\bullet(\mc C)^{\mr{op}}$, or as a monoidal functor $\mc A^{\mr{op}}\text{-mod} \to \mr{HC}^\bullet(\mc C)\text{-mod}$.  We can use this to give a suitable definition of support conditions for a dg category.  

\begin{definition}
In what follows we'll write $\mr{H}^{\mr{even}}(\mc A)$ for the even part of the cohomology of $\mc A$, viewed as a commutative algebra upon forgetting the grading.  In particular, after passing to the even cohomology we can ignore the superscript ``op''.
\end{definition}

As we discussed in our section on motivation, the idea behind singular support conditions for a dg category $\mc C$ was that, if we can find a commutative algebra $A$ acting on $\mc C$, then we can decompose $\mc C$ into subcategories indexed by points in $\spec A$.  According to the above discussion, we can produce a commutative algebra acting on $\mc C$ from an $\bb E_2$-algebra $\mc A$ by taking the even cohomology -- if we took the full cohomology it would only be graded commutative.

\begin{definition}
Let $\mc A$ be an $\bb E_2$-algebra, and let $Y$ be a closed and conical -- invariant for the $\CC^\times$ action on $\mc A$ coming from the grading -- subset of $\spec \mr{H}^{\mr{even}}(\mc A)$.  The \emph{restriction} $\mc A\text{-mod}_Y$ to the subset $Y$ is the full subcategory of $\mc A\text{-mod}$ consisting of modules $M$ where $M$, or rather its image in the homotopy category, is set-theoretically supported on $Y$ as a $\mr{H}^{\mr{even}}(\mc A)$-module.
\end{definition}

\begin{remark} \label{conical_def_remark}
The condition that $Y$ was conical was necessary for the restriction to be well-defined as a dg category.  The reason is that the algebra $\mr{H}^{\mr{even}}(\mc A)$ does not act merely as a commutative algebra, but as a graded commutative algebra in even degrees.  As a result, the only localizations compatible with the grading are localizations by homogeneous ideals: closed subsets compatible with the grading, or equivalently, conical subsets.  For comparison, see the discussion in \cite[Section 3.6]{ArinkinGaitsgory}. 
\end{remark}

\begin{remark}
We define restriction in terms of the set-theoretic support, but in principle we could equally well consider a stricter notion where we restrict the scheme-theoretic support of modules. This scheme-theoretic notion, however, does not appear to produce non-trivial restricted categories in natural examples: see Section \ref{set_vs_scheme_support_rmk} for a more detailed discussion.  One could define more general support conditions by taking the scheme-theoretic support conditions with respect to ideals which are not maximal, having the effect of allowing deformations in some but not all derived directions.  While this may be mathematically interesting in some examples, we don't currently have a physically meaningful reason to consider such constructions.
\end{remark}

\begin{definition} \label{restriction_def}
Suppose $\mc A$ is an $\bb E_2$-algebra acting on a dg category $\mc C$, and let $Y$ be a closed subset of $\spec \mr{H}^{\mr{even}}(\mc A)$.  The category $\mc C_Y$ obtained by \emph{restricting} $\mc C$ to the subset $Y$ is the tensor product
\[\mc C_Y =\mc A\text{-mod}_Y \otimes_{\mc A\text{-mod}} \mc C\]
with respect to the action of $\mc A\text{-mod}$ on $\mc C$.
\end{definition}

\begin{definition}
The \emph{support} of an object $c \in \mc C$ in the affine scheme $\spec \mr{H}^{\mr{even}}(\mc A)$ is the minimal closed subset $Y$ of $\spec \mr{H}^{\mr{even}}(\mc A)$ such that $c$ lies in $\mc C_Y$.
\end{definition}

The restriction to $Y$ is a full subcategory of $\mc C$ -- we think of it in the following way.  The Hochschild cohomology $\mr{HH}^\bullet(\mc C)$ satisfies a universal property: it maps to the endomorphism algebra $\eend_{\mr{Ho} (\mc C)}(c)$ for every object $c$ in $\mc C$.  Note that here we're abusing notation slightly, and writing $c$ for the image of $c \in \mc C$ in the homotopy category $\mr{Ho}(\mc C)$.  As we explained above, if $\mc A$ is any $\bb E_2$-algebra acting on $\mc C$, then we obtain an (ungraded) map $\mr{H}^{\mr{even}}(\mc A)\to \mr{HH}^\bullet(\mc C)$.  Thus, composing these two maps, for every object $c$ of $\mc C$ there is a morphism
\[\mr{H}^{\mr{even}}(\mc A) \to \eend_{\mr{Ho} (\mc C)}(c).\]
Then the restricted category can be viewed as the full subcategory spanned by objects $c$ such that $\eend_{\mr{Ho}(\mc C)}(c)$ has support in $Y$ with respect to this composite action.

Now, let us return to the story from Section \ref{SS_motivation_section} and consider the special case where $\mc C = \IC(X)$ for a derived stack $X$.  We'll define support conditions using the universal $\bb E_2$-algebra acting on $\IC(X)$, namely $\mc A = \mr{HC}^\bullet(\IC(X))$ -- the Hochschild cochains of the category $\IC(X)$.  The singular support of a sheaf on $X$ is essentially going to be the support with respect to the action of the even Hochschild cohomology, but we'll be able to identify this more geometrically using the scheme of singularities of $X$ (as defined in Definition \ref{scheme_sings_def}).

\begin{remark}
From now on, when $X$ is a derived stack we'll simply write $\mr{HC}^\bullet(X)$ for the Hochschild cochain complex $\mr{HC}^\bullet(\IC(X))$.
\end{remark}

The relationship between the Hochschild cochains of a derived stack and its scheme of singularities is derived from the following version of the Hochschild--Kostant--Rosenberg theorem.  We'll use the fact that for any derived Artin stack \footnote{There's also a smallness condition: they should be locally finitely presented.} the shifted tangent complex $\bb T_X[-1]$ has the structure of a Lie algebra in the dg category $\QC(X)$.  In this generality the result is due to Hennion \cite{Hennion}, though the idea goes back at least to Kapranov \cite{Kapranov}. 

\begin{lemma}[{\cite[Corollary G.2.7]{ArinkinGaitsgory}}]
If $Z$ is an eventually coconnective affine derived scheme then there is a canonical isomorphism of associative algebras (in cochain complexes)
\[\Gamma(Z; U_{\OO_Z}(\bb T_Z[-1])) \to \mr{HC}^\bullet(Z)\]
from the derived global sections of the universal enveloping algebra of the Lie algebra object $\bb T_Z[-1]$ to the underlying associative algebra of Hochschild cochains of $Z$.
\end{lemma}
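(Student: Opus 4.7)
The plan is to identify both sides via the geometry of the derived loop space $\mc L Z := Z \times_{Z \times Z} Z$. First, rewrite the Hochschild cochains as $\mr{HC}^\bullet(Z) \simeq \mr{End}_{\mr{QC}(Z \times Z)}(\Delta_* \OO_Z)$ and apply the $(\Delta^*, \Delta_*)$ adjunction to obtain $\mr{HC}^\bullet(Z) \simeq \mr{Hom}_{\mr{QC}(Z)}(\Delta^*\Delta_*\OO_Z, \OO_Z) \simeq \mr{Hom}_{\mr{QC}(Z)}(\OO_{\mc L Z}, \OO_Z)$. Under this adjunction, composition of endomorphisms of $\Delta_*\OO_Z$ translates into the convolution product dual to the coproduct on $\OO_{\mc L Z}$ induced by the group structure (composition of loops) $\mc L Z \times_Z \mc L Z \to \mc L Z$. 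The eventual coconnectivity of $Z$ ensures that $\Delta_* \OO_Z$ is well-behaved and the formal completion of $Z$ inside $Z \times Z$ agrees with $\mc L Z$ in the relevant sense.

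Next, invoke Lurie's equivalence between pointed formal moduli problems over $Z$ and Lie algebra objects in $\mr{QC}(Z)$ (valid in characteristic zero). Under this equivalence, the derived loop space $\mc L Z$, viewed as a formal group over $Z$, corresponds to the formal group $B_{\mr{form}} \bb T_Z[-1]$, whose Lie algebra is precisely $\bb T_Z[-1]$; in other words, the Lie algebra structure on the shifted tangent complex postulated in the statement is exactly the one coming from viewing $\mc L Z$ as a group in formal moduli problems over $Z$.

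Third, match algebra structures: the convolution algebra of a formal group over $Z$ is, by construction, the universal enveloping algebra of its Lie algebra in $\mr{QC}(Z)$, with its PBW-compatible associative structure. Hence the convolution product on $\mr{Hom}_{\mr{QC}(Z)}(\OO_{\mc L Z}, \OO_Z)$ is identified with $U_{\OO_Z}(\bb T_Z[-1])$ as an associative algebra object, and the PBW filtration provides the filtered isomorphism whose associated graded is the classical HKR identification $\mr{Sym}_{\OO_Z}(\bb T_Z[-1]) \simeq \mr{gr}\,\mr{HC}^\bullet(Z)$. Taking derived global sections, which is exact since $Z$ is affine, produces the desired isomorphism of associative algebras.

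The main obstacle is the careful tracking of the associative algebra structure through these equivalences: the underlying graded isomorphism is the standard HKR theorem, but the statement promotes this to an isomorphism of associative algebras, which requires showing precisely that the composition product of endomorphisms of the diagonal corresponds to the convolution/enveloping algebra multiplication, rather than merely producing an abstract isomorphism of underlying $\OO_Z$-modules. Thus the heart of the argument lies in the formal-group-to-Lie-algebra correspondence and the functoriality of convolution, with the eventual coconnectivity of $Z$ playing the role of a finiteness hypothesis that makes the loop space and its formal-group structure well-behaved enough for this machinery to apply.
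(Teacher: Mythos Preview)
The paper does not give its own proof of this lemma: it is quoted verbatim as \cite[Corollary G.2.7]{ArinkinGaitsgory} and used as a black box to build the map $\Gamma(\mr{Sing}(Z);\OO_{\mr{Sing}(Z)}) \to \HH^{2\bullet}(Z)$. So there is no in-paper argument to compare against.

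That said, your sketch is essentially the argument Arinkin--Gaitsgory give in their Appendix G. They identify $\mr{HC}^\bullet(Z)$ with endomorphisms of $\Delta_*\OO_Z$, pass by adjunction and base change to the loop space $\mc L Z = Z \times_{Z\times Z} Z$, regard $\mc L Z$ as a formal group over $Z$ whose Lie algebra is $\bb T_Z[-1]$, and then identify the convolution algebra of distributions with $U_{\OO_Z}(\bb T_Z[-1])$. The role of eventual coconnectivity is exactly the finiteness input that makes the ind-coherent/quasi-coherent comparison and the formal-group machinery behave; your remark that it makes ``the formal completion of $Z$ inside $Z\times Z$ agree with $\mc L Z$'' is the right intuition, though in the reference this is packaged slightly differently (via the ind-coherent pushforward and the inf-scheme formalism). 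The one place where your write-up is a bit loose is the claim that composition on $\mr{End}(\Delta_*\OO_Z)$ matches convolution on $\mr{Hom}_{\OO_Z}(\OO_{\mc L Z},\OO_Z)$: this is true, but it is precisely the content that needs the groupoid/formal-group interpretation of $\mc L Z$ and is not a formal consequence of the adjunction alone, so in a full proof that step deserves its own lemma.
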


This induces, for any (eventually coconnective) affine derived scheme $Z$, an isomorphism
\[\mr H^\bullet(Z; U_{\OO_Z}(\bb T_Z[-1])) \to \HH^\bullet(Z),\]
therefore, in particular, a map $\Gamma(Z; \OO_{Z^{\mr{cl}}}) \to \HH^0(Z)$ of commutative algebras, and a map $\Gamma(Z; \mr{H}^0(\bb T_Z[1])) \iso \Gamma(Z; H^1(\bb T_Z)) \to \HH^2(Z)$ of $\Gamma(Z;\OO_{Z^{\mr{cl}}})$-modules if $Z$ is quasi-smooth.  Taking the symmetric algebra, this defines a canonical map of commutative algebras
\[\Gamma(\mr{Sing}(Z); \OO_{\mr{Sing}(Z)}) \to \HH^{\mr{even}}(Z).\]
Therefore we can define singular support in the following way.

\begin{definition}
The \emph{singular support} of a sheaf $\mc F \in \IC(Z)$ for a quasi-smooth affine derived scheme $Z$ is a closed subspace of $\mr{Sing}(Z)$ defined as the support of $\eend(\mc F)$ as a module over $\Gamma(\mr{Sing}(Z); \OO_{\mr{Sing}(Z)})$ with respect to the composition. The category of sheaves with singular support in $Y \sub \mr{Sing}(Z)$ is denoted by $\IC_Y(Z)$.
\end{definition}

Now, suppose $X$ is a quasi-smooth derived stack.  We can define the singular support of a sheaf $\mc F \in \IC(X)$ using a smooth atlas of affine derived schemes.
\begin{definition}
If $Y$ is a closed subspace of $\mr{Sing}(X)$, we define the category $\IC_Y(X)$ to be the limit
\[\IC_Y(X) = \lim_{Z \to X} \IC_{Y \times_X Z}(Z)\]
over smooth maps from affine derived schemes to $X$.  Here the inclusion $Y \inj \mr{Sing}(X)$ defines a map $Y\to X$ since $X$ is quasi-smooth.
\end{definition}

To conclude this section, we'll describe singular support conditions for a special sort of stack: those which are not just locally described by a complete intersection, but by a coherent family of complete intersections parameterized by a base stack $\mc X$.  The main result about such stacks closely follows results in \cite[Section 9]{ArinkinGaitsgory}, though we'll describe a slightly modified version here: introducing an auxiliary parameter that will be important in Section \ref{degree_shift_section}.  We'll use notation that is consistent with the notation of Arinkin and Gaitsgory.

The following definition should be compared to the definition of a local complete intersection, in the derived geometry language, that we gave in Section \ref{SS_motivation_section}.  Note that we're working in a much more general setting than we described there: we won't just allow global derived intersections of the form $\pt \times_{\bb A^m} \bb A^n$.  We'll replace the two affine spaces by any smooth stacks, and rather than simply taking the derived fiber, we'll allow families of fibers parameterized by a smooth base stack. 

\begin{definition}
A derived stack $\mc Z$ is a \emph{global complete intersection} if it can be obtained as a derived fiber product of smooth stacks 
\[\xymatrix{
\mc Z \ar[r] \ar[d] &\mc U \ar[d] \\
\mc X\ar@<0.6ex>[r]&\mc V \ar@<0.6ex>[l]
}\]
where $\mc X \to \mc V$ is a section of a smooth schematic map \footnote{A map of derived stacks is schematic if base changing by a derived scheme produces a derived scheme.} $\mc V\to \mc X$.
\end{definition}

Singular support conditions for global complete intersections can be described using the action of an algebra associated to the base stack $\mc X$ of the complete intersection.
\begin{definition}
Form the derived fiber product $\mc G_{\mc X/ \mc V} = \mc X \times_\mc V \mc X$.  This fiber product has the structure of a groupoid over $\mc X$.  In particular, the category $\IC(\mc G_{\mc X/\mc V})$ has the structure of a monoidal dg category over $\QC(\mc X)$.  We define $\mr{HC}^\bullet(\mc X / \mc V)$ to be the $\bb E_2$-algebra $R(\IC(\mc G_{\mc X/\mc V}))$ in the dg category $\QC(\mc X)$ defined by Proposition \ref{E_2_adjunction_prop} (that is, the algebra of endomorphisms of the monoidal unit).
\end{definition}

The description of singular support conditions in terms of this algebra appears in \cite[Lemma 9.2.6]{ArinkinGaitsgory}.  We'll state a slightly modified version of this description that incorporates an auxilliary degree 2 parameter $t$.  This is a technical detail, but it will be essential for Section \ref{degree_shift_section}.

Thus, we let $t$ be a degree 2 parameter, and write $\IC(\mc Z)[t,t^{-1}]$ for the tensor product $\IC(\mc Z) \otimes \CC[t,t^{-1}]\text{-mod}$.  We'll describe the restricted category $\IC_{Y[t,t^{-1}]}(\mc Z)[t,t^{-1}]$ associated to a closed subset $Y \sub \sing(\mc Z)$, given by restricting to the closed subset generated by $Y$ under the action of $\CC[t,t^{-1}]$. This restriction is defined as follows.

\begin{definition} \label{t_parameter_support_def}
If $\mc Z$ is a quasi-smooth affine derived scheme we define $\IC_{Y[t,t^{-1}]}(\mc Z)[t,t^{-1}]$ to be the restriction of $\IC(\mc Z)[t,t^{-1}]$ with respect to the ideal $\mc I_Y[t,t^{-1}]$ in $\OO(\sing(\mc Z))[t,t^{-1}]$ generated by $\mc I_Y$ under the $\CC[t,t^{-1}]$ action, using the natural composite morphism \[\OO(\sing(\mc Z))[t,t^{-1}] \to \HH^\bullet(\IC(\mc Z))[t,t^{-1}] \to \HH^\bullet(\IC(\mc Z)[t,t^{-1}]).\]  We'll often refer heuristically to the ``closed set $Y[t,t^{-1}]$'' even though this does not literally make sense since $\CC[t,t^{-1}]$ is not connective. If $\mc Z$ is a more general quasi-smooth derived stack we define $\IC_{Y[t,t^{-1}]}(\mc Z)[t,t^{-1}]$ as the limit 
\[\IC_{Y[t,t^{-1}]}(\mc Z)[t,t^{-1}] = \lim_{Z \to \mc Z} \IC_{(Y \times_{\mc Z} Z)[t,t^{-1}]}(Z)[t,t^{-1}]\]
over smooth affine charts $Z \to \mc Z$.
\end{definition}

\begin{remark} \label{conical_t_remark}
In Arinkin and Gaitsgory's setup the supports of objects are necessarily conical (i.e. $\CC^\times$-invariant).  As we mentioned above in Remark \ref{conical_def_remark} this is because $\HH^{\mr{even}}(\mc Z)$ is naturally a graded object and it's important to remember that grading, and consider only homogeneous ideals.  Upon introducing the parameter $t$ this still applies, but now $\CC^\times$ acts not only on the Hochschild cohomology $\HH^{\mr{even}}(\mc Z)$ but also on the parameter $t$ with weight 2.  

This means that we can consider support conditions for any conical subset of $\sing(\mc Z)[t,t^{-1}]$ (speaking heuristically -- more precisely we work with a smooth affine cover and locally specify homogenous ideals in $\OO(\sing(\mc Z))[t,t^{-1}]$), and one can obtain such a subset from \emph{any} closed subset of the classical scheme of singularities $Y \sub \sing(\mc Z)$ provided $\CC^\times$ acts with even weights.  Indeed, let $A$ be a dg commutative algebra with $\CC^\times$-action and an ideal $I$, and choose generators $\{a_i\}$ where $a_i$ has weight $2w_i$.  Embed $A$ into $A[t,t^{-1}]$ by sending $a_i$ to the weight zero element $a_it^{-w_i}$.  The image of $I$ under this map is $\CC^\times$-invariant, or equivalently generates a homogenous ideal in $A[t,t^{-1}]$.  This ideal recovers $I$ when we ``set the parameter $t$ to 1'', i.e. when we invariants for the $\CC^\times$ action purely in the $t$ direction. 

The restricted categories we obtain by this procedure are somewhat more general than the restrictions to closed sets $Y[t,t^{-1}]$ described above.  However, using the modified embedding $A \inj A[t,t^{-1}]$ giving the generators weight zero -- in the case of singular support a modified embedding $\OO(\sing(\mc Z)) \to \OO(\sing(\mc Z))[t,t^{-1}]$ -- we can make our homogeneous ideals into ideals of the form $\mc I_Y[t,t^{-1}]$, and therefore find ourselves in the situation described in Definition \ref{t_parameter_support_def} once more.  Essentially, we modified the action of the algebra $\OO(\sing(\mc Z))[t,t^{-1}]$ using an automorphism that becomes trivial after composition with the evaluation map $t \mapsto 1$.
\end{remark}

Now that we've introduced our auxiliary parameter, we can state the main result that we'll need, characterizing singular support conditions for global complete intersection stacks. 

\begin{definition}
Let $V$ denote the total space of the vector bundle $\bb V \to \mc X$ freely generating $\mr{HC}^\bullet(\mc X / \mc V)$, in the sense that $\mr{HC}^\bullet(\mc X / \mc V) \iso \sym_{\OO_{\mc X}}(\bb V[-2])$ (see \cite[9.1.4]{ArinkinGaitsgory} for an explicit description).
\end{definition}

\begin{prop} \label{SS_complete_intersection_prop}
There is a canonical embedding $\sing(\mc Z) \to V \times_{\mc X} \mc Z$, and therefore a canonical morphism $f \colon \sing(\mc Z) \to V \times_{\mc X} \mc U$. If $\wt Y$ is a closed subset of $V$, and $Y = f^{-1}(\wt Y \times_{\mc X}\mc U)$ is the corresponding closed subset of $\mr{Sing}(\mc Z)$, then there is a canonical action of the category $\mr{HC}^\bullet(\mc X / \mc V)^{\mr{op}}\text{-mod}$ on $\IC(Z)$.  If we include the degree 2 parameter $t$ then there is an equivalence of dg categories
\[\IC_{Y[t,t^{-1}]}(\mc Z)[t,t^{-1}] \iso  \IC(\mc Z)[t,t^{-1}]  \otimes_{\mr{HC}^\bullet(\mc X / \mc V)[t,t^{-1}]^{\mr{op}}\text{-mod}}   \mr{HC}^\bullet(\mc X / \mc V)[t,t^{-1}]^{\mr{op}}\text{-mod}_{\wt Y[t,t^{-1}]}.\]
\end{prop}

We defer the proof to Appendix \ref{GCI_appendix}.

\begin{remark}
Let us try to explain the role this result will play in the present paper.  We'll use this result in order to calculate categories with a singular support condition, using the fact that the action of the groupoid $\mc G_{\mc X / \mc V}$, and therefore the action of the category $\mr{HC}^\bullet(\mc X / \mc V)^{\mr{op}}\text{-mod}$, can be described geometrically.  Specifically we'll be interested in the case (appearing in geometric Langlands) where the stack $\mc Z$ is the stack $\flat_G(C)$ of flat $G$-bundles on a curve $C$.  We'll describe in Section \ref{geometry_of_flat_section} how this can be written as a global complete intersection.  For this example, we'll be able to describe the groupoid $\mc G_{\mc X /\mc V}$ and its action on $\flat_G(C)$.  Even better, we'll be able to identify this action with the action of the category of line operators in the B-twisted $\mc N=4$ gauge theory.

The auxiliary parameter $t$ appears -- as we'll explain in Section \ref{degree_shift_section} -- when we calculate the topological twist of the full $\mc N=4$ theory.  Normally one kills this parameter when describing the twist, but by keeping it around we can use it to make sense of singular support conditions for non-conical subsets $Y$ of $\sing(\mc Z)$.
\end{remark}
 
\subsection{Restriction to a Vacuum} \label{loc_and_vacuum_section}
\subsubsection{Compatibility with a Vacuum in the Factorization Algebra Context} \label{fact_algebra_restriction_section}
In the functorial context, the notion of boundary conditions admits a precise mathematical model \emph{if} one is working with a purely topological theory.  However, many theories that one wishes to study are not actually topological in the sense of Atiyah--Segal. For instance, an $n$-dimensional topological quantum field theory in the sense of Atiyah--Segal would assign a number as a partition function to an $n$-dimensional closed manifold, but for a general field theory, there is not sufficient data to produce a number unless we first choose a vacuum state \cite[Section 4.9]{CostelloGwilliam1}. Accordingly, once we make a choice of vacuum state then only the boundary conditions compatible with the choice would be relevant, picking out a particular class of boundary conditions. The connections to geometric representation theory that we have in mind can be captured by the following slogan:

\begin{center}
\textbf{Support conditions for categories of boundary conditions in a TQFT arise by demanding compatibility with a choice of vacuum state.}
\end{center}

To put it another way, our main aim in this paper is to explain a physical motivation for certain kinds of condition that one can impose whenever one interprets a category as the category of boundary conditions in a TQFT. This will have non-trivial content because not every TQFT arising in physics is of Atiyah--Segal type: we will make the idea more precise. In this subsection we'll describe the notion of a ``vacuum state'' in a TQFT.  In the following subsection we'll observe that in a 2-extended TQFT of Atiyah--Segal type this notion is not very interesting, because the moduli of vacua is necessarily discrete.

Let $\mc A = \obsq(\RR^n)$ be a locally constant factorization algebra on $\RR^n$, or equivalently an $\bb E_n$-algebra.  Let $\mc B$ be a dg category such that $\mc A$ acts on $\mc B$, i.e. there exists an $\bb E_2$-algebra map $\mc A^{\mr{op}} \to \mr{HC}^\bullet(\mc B)$.  We'll usually construct such a setup by quantizing a \emph{classical} field theory: $\mc A$ will quantize the local classical observables and $\mc B$ will arise by categorical geometric quantization. That is, $\mc B$ is a category (geometrically) quantizing $\EOM(Y)$, the space of solutions to the equations of motion on an $(n-2)$-manifold $Y$, which is expected to be 1-shifted symplectic. The general theory of categorified geometric quantization in this case involves choosing a polarization and taking sections of a gerbe (see  \cite{Wallbridge} and forthcoming work of Safronov). On the other hand, often $\EOM(Y)$ will be a 1-shifted cotangent bundle $T^*[1]\mc X$, for which there's a natural choice of Lagrangian given by the zero-section $\mc X$ so that the categorified geometric quantization yields $\mc B=\IC(\mc X)$; in particular this will be the case for our B-twisted theory.

We begin by introducing the moduli space of vacua. We recall the following definition, specialized for topological field theories, from our discussion in Section \ref{construction_intro_section}.  First, recall from Definition \ref{states_def} that the space of \emph{states} in a quantum field theory is the graded vector space of graded linear functionals $\phi \colon \mr H^\bullet(\obsq(B^n)) \to \CC$ of degree zero.  In a topological quantum field theory, a state $\phi$ is called a \emph{vacuum state} if it is a graded ring homomorphism, meaning that vacuum states are in canonical bijection with maximal ideals in $\mr H^0(\obsq(B^n))$.  We will broaden this definition by considering families of vacuum states over a derived base, in order to probe the cohomology of the algebra of local observables outside degree zero.  To that end, if $\eps$ is a parameter of cohomological degree $k < 0$, a $\CC[\eps]$-point of the moduli stack of vacua will be a degree zero graded ring homomorphism $\phi \colon \mr H^\bullet(\obsq(B^n)) \to \CC[\eps]$.  This motivates the following definition.

\begin{definition}
The \emph{moduli space $\vac$ of vacua} is the affine derived stack $\spec \mr{H}^\bullet(\obs^{\mr{q}}(B^n))$ (where $\spec$ is used in the sense of \cite[Section 3.2]{BZNLoopSpaces1}). 
\end{definition}

\begin{remark}
We originally learned the related idea -- that for an $n$-dimensional functorial TQFT, the moduli space of vacua should be given by $\spec Z(S^{n-1})$ -- from David Ben-Zvi.  Forthcoming work of Ben-Zvi--Neitzke \cite{BenZviNeitzke} will discuss this idea in more detail.  Note that, ideally, we would not need to pass to the cohomology of the algebra of local observables, but would work with the $\bb E_n$-algebra directly.  In this paper we work with the cohomology in order to make a definition in terms of ordinary (dg) commutative algebras, rather than being forced to grapple with a version of ``$\bb E_n$ algebraic geometry''. 
\end{remark}

One can now define the restricted category $\mr{H}^\bullet(\mc A)\text{-mod}_{\{v\}}$ for a choice $v \in \vac$ of vacuum following the formalism developed above, i.e. by applying Definition \ref{restriction_def}.

\begin{definition} \label{vacuum_restriction_def}
The \emph{restriction} of the category $\mc B$ to a closed subset $V \sub \vac$ is the tensor product $\mc B \otimes_{\mc A\text{-mod}} \mc A\text{-mod}_V$.  As usual this can be thought of as the full subcategory of objects $\mc F \in \mc B$ such that $\eend_{\mc B}(\mc F)$ is supported at $V$ with respect to the action of $\mc A$ through Hochschild cohomology.
\end{definition}

We spent some time discussion the motivation behind this definition in Section \ref{construction_intro_section}.  The idea is that the restricted category of boundary conditions is the full subcategory consisting of objects $\mc F$ compatible with a vacuum $v$ in the specified subset $V \sub \vac$.  Here ``compatibility'' can be thought of as saying that the vacuum in the bulk theory (modelled by the factorization algebra $\mc A$) extends to the bulk-boundary theory with boundary condition $\mc F$ (modelled by the endomorphism algebra $\eend_{\mc B}(\mc F)$).  We could also think of the localized ring $\mr H^\bullet(\mc A)_V$ at the closed subset $V$ as modelling the local behavior of observables near the subset $V \sub \vac$, and the restricted category as the full subcategory on boundary conditions whose bulk-boundary system survives to a non-trivial module over the localized ring $\mr H^\bullet(\mc A)_V$. 

\begin{remark}
We need to take a moment to discuss conical versus non-conical subsets of $\vac$, and genuine dg categories vs 2-periodic dg categories.  In Definition \ref{vacuum_restriction_def} there is no requirement that the subset $V$ is conical, only that it's a closed subset in the graded sense (determined by a homogenous ideal in the graded ring $\mr{H}^\bullet(\mc A)$).  In Section \ref{degree_shift_section} we'll discuss restrictions corresponding to closed subsets of $\mr{H}^\bullet(\mc A)$ viewed as an \emph{ungraded} ring, so to not-necessarily homogenous ideals.  The corresponding restricted categories are only $\ZZ/2\ZZ$-graded dg categories, not ordinary $\ZZ$-graded dg categories.  We'll make this story precise by adjoining an auxiliary degree 2 parameter $t$, like we saw in Proposition \ref{SS_complete_intersection_prop} (indeed, that was exactly why we needed to include that parameter in our singular support discussions in Section \ref{main_SS_section}).
\end{remark}

\subsubsection{Set and Scheme Theoretic Support}\label{set_vs_scheme_support_rmk}
In this subsection we will discuss the distinction between set-theoretic support and scheme-theoretic support. This discussion is parenthetical; it is not necessary to understand the content here in order to follow the main flow of this paper.

Definition \ref{vacuum_restriction_def} used the set-theoretical notion of support, rather than the scheme-theoretical notion.  Being scheme-theoretically supported on $Y$ is a stronger condition, that imposes a restriction on the non-reduced directions in $\vac$, not just the ordinary support.  In terms purely of commutative algebra, the set of modules $M$ over a commutative ring $A$ where the localization $M_v$ of $M$ at a point $v \in \spec A$ is non-zero consists of the set of modules set-theoretically supported at $v$, not scheme-theoretically.

Often, if one uses a scheme-theoretical support condition instead of a set-theoretical support condition then the category of boundary conditions compatible with the vacuum often turns out to be trivial.  We'll explain this in a family of examples in a way which we hope is suggestive of a more general story.

For simplicity we'll consider the case where $A$ is a local ring, so $\spec A$ consists of a single closed point (but of course is generally non-trivial as an affine scheme). For instance, consider the topological B-model with target a compact Calabi--Yau manifold $X$, so $\mc B= \coh(X)$ is the category of boundary conditions and $A\cong \HH^\bullet(X)\cong\PV(X)$ is a finite-dimensional but not semi-simple ring. Then the category of boundary conditions is always compatible with the unique choice of a vacuum in our sense by definition. On the other hand, to understand the meaning of being scheme-theoretically supported at the point, let us investigate the map $\HH^2(\coh(X))\to \mr{Ext}^2_{\coh(X)}(\mc F,\mc F)$ for $\mc F \in \coh(X)$ and try to see for which $\mc F$ this map vanishes. It is possible to proceed rigorously, but it will be enough to provide a heuristic, but conceptual, explanation for this remark.

If $\mc B$ is a dg category, there is a fiber sequence
\[\mr{Def}_{\mc B}(\mc F) \to \mr{Def}(\mc B, \mc F)\to \mr{Def}(\mc B)\] 
of dg Lie algebras governing the corresponding deformation theories (see \cite[Proposition 4.3]{BlancKatzarkovPandit} for a statement in the context of spectral geometry), where $\mr{Def}_{\mc B}(\mc F)$ is the shifted tangent complex $ \bb T_{\mc F}[-1] \mc M_{\mc B} =  \mr{End}(\mc F)$ to the moduli $\mc M_{\mc B}$ of objects in $\mc B$ (see for instance \cite{ToenVaquieModuli}), $\mr{Def}(\mc B)$ is $\mr{HC}^\bullet(\mc B)[1]$ \cite[Section 3.3]{BlancKatzarkovPandit}, and $\mr{Def}(\mc B, \mc F)$ is their extension governing deformations of the category $\mc B$ together with the object $\mc F$. In particular, when $\mc B = \coh(X)$, we find the boundary map $\HH^2(X)\to \mr{Ext}^2(\mc F,\mc F)$, where $\HH^2(X)$ describes the deformation of $X$ and $\mr{Ext}^2(\mc F,\mc F)$ describes the obstruction of $\mr{Def}(\mc F)$. Now by the Hochschild--Kostant--Rosenberg theorem, we can decompose the Hochschild cohomology as 
\[\HH^2(\coh(X)) \iso H^0(X,\wedge^2 \bb T_X) \oplus H^1(X, \bb T_X) \oplus H^2(X,\OO_X);\] 
from the perspective of deformation theory, these three terms correspond to a noncommutative deformation of $\OO_X$, a complex structure deformation of $X$, and a gerbal deformation of $\coh(X)$, respectively.  Then the map $\HH^2(X)\to \mr{Ext}^2(\mc F,\mc F)$  would measure whether, for a given element of $\HH^2(X)$, corresponding to a deformation of $\coh(X)$, the object $\mc F$ survives as a non-trivial object of the deformed category. In particular, looking for an object scheme-theoretically supported on the unique closed point in the moduli space of vacua amounts to finding an object $\mc F$ which survives all such deformation of $\coh(X)$, which is an extremely strong constraint.

For instance, consider the category $\coh(X)$ for a K3 surface $X$. It is enough to check that vector bundles on irreducible subvarieties do not survive all three types of deformation. No curve class survives a complex structure deformation. A global vector bundle doesn't survive a gerbal deformation. Finally, the skyscraper sheaf at a point $x$ does not survive a noncommutative deformation.  To see this, one should note that by taking an open affine neighborhood of $x$, if the skyscraper sheaf were to survive the deformation, then it would still define a one-dimensional representation of the non-commutative deformation of the ring of functions on the open patch. This deformation is equivalent to the Weyl algebra, which doesn't admit any finite-dimensional representations. In sum, the category of boundary conditions on K3 surfaces scheme-theoretically supported on the unique point is trivial. 

\begin{example} \label{2d_TQFT_examples}
If the theory is semi-simple, which corresponds to the case where $\spec A$ consists of a finite number of points, then one can decompose the category $\mc C$ into the corresponding finite number of simpler subcategories living over each vacuum. For instance, this happens when one considers the A-model or B-model with target a Fano manifold, or a Landau--Ginzburg theory. One can think of the subcategory as describing the category of boundary conditions in the IR limit with the given vacuum (for a related discussion and more for the example of Fukaya--Seidel type categories, one should refer to the work of Gaiotto, Moore, and Witten \cite{GaiottoMooreWitten}; we expect their finite set $\bb V$ of vacua to be the vacuum moduli space in our sense and we interpret their work as an attempt to recover the entire category $\mc C$ from those simpler categories over each point of $\bb V$). 

There's an interesting example of a semi-simple fully extended 2d TQFT that can be considered from the point of view of support in the moduli space of vacua.  The 2d \emph{Landau--Ginzburg model} with $n$-dimensional target and superpotential $W$ has, as its category $\mc C$ of boundary conditions, the (2-periodic) dg category $\mr{MF}(\CC[z_1, \ldots, z_n], W)$ of matrix factorizations of $W$.  The $\bb E_2$-algebra of local observables is given by the Hochschild cochains of this category.  By a theorem of Dyckerhoff \cite[Corollary 6.4]{Dyckerhoff} the Hochschild cohomology of this category is quasi-isomorphic to the \emph{Jacobian ring} of $W$.  This ring is semi-simple but typically not simple, so one can restrict the category of matrix factorizations at a point in its spectrum and obtain a proper subcategory of the category of matrix factorizations.  According to a Orlov's comparison theorem \cite[Theorem 3.9]{Orlov} and its dg enhancement \cite[Theorem 2.49]{BRTV}, the category of matrix factorizations is equivalent to the singularity category of the zero fiber $W^{-1}(0)$, i.e. the quotient of the dg category of coherent complexes by the dg category of perfect complexes.  We speculate that the restriction of the category of matrix factorizations to a singular point -- i.e. a point in the spectrum of the Jacobian ring -- corresponds to restricting to those objects of the singularity category supported at said singularity.
\end{example}

\subsubsection{Motivation from Functorial Field Theory} \label{loc_functorial_section}
In this section we'll explain an alternative construction of the restricted category of boundary conditions in a topological field theory, from the functorial point of view.  Given some assumptions on the compatibility between the algebraic and functorial perspectives on quantum field theory we expect this definition to be a special case of Definition \ref{vacuum_restriction_def} in a precise sense, but for the present work we won't need a precise result of this nature; this section should be thought of as providing additional motivation for our constructions.  

Let $Z$ be a 2-extended $n$-dimensional TQFT as defined in Section \ref{functorial_background_section}.  If $Z$ comes from the restriction of a fully extended TQFT then full dualizability forces the algebra $Z(S^{n-1})$ to be cohomologically finite-dimensional, and therefore to have discrete spectrum.  In many interesting examples (including the A-model and B-model with compact Calabi--Yau targets) the spectrum of the algebra $Z(S^{n-1})$ consists of a single point, but for more general cases where it doesn't, we can try to investigate a procedure to \emph{restrict} the theory $Z$ at a point in this spectrum, thus splitting our topological field theory into sectors of this form.

More concretely, if we choose a compact $(n-2)$-manifold $Y$ we can consider the dimensional reduction of $Z$ along $Y$, that is, the fully extended 2-dimensional TQFT $Z_Y$ defined by $Z_Y(X) = Z(Y \times X)$.  We can now consider the restrictions of this theory with respect to the action of $Z(S^{n-1})$.  As we discussed in Section \ref{functorial_background_section} there is an $\bb E_2$-action of $Z(S^{n-1})$ on the dg category $Z(Y)$, where the $\bb E_n$-algebra $Z(S^{n-1})$ is broken down to $\bb E_2$ by the reduction along the $(n-2)$-manifold $Y$.  We can therefore apply the definitions of Section \ref{restriction_background_section}, where now the role of the space of vacua is played by $\mr{H}^{\mr{even}}(Z(S^{n-1}))$.

\begin{definition}
If $v$ is a closed point in $\spec\mr{H}^{\mr{even}}(Z(S^{n-1}))$, the space of boundary conditions along $Y$ \emph{compatible} with the point $v$ is defined to be the restricted category $Z(Y)_{v}$. 
\end{definition}

\begin{examples}
As we noted above, in the fully extended context the algebra $Z(S^{n-1})$ is finite-dimensional, so the moduli space $\mr{H}^{\mr{even}}(Z(S^{n-1}))$ is always discrete.  The theories we discussed in Example \ref{2d_TQFT_examples} are 2d examples of this form, with discrete moduli spaces of vacua.
\end{examples}

\begin{remark} \label{state_operator_E_n_comparison_remark}
According to the state-operator correspondence, we anticipate an equivalence of cochain complexes between $Z(S^{n-1})$ and the complex of local quantum observables $\obs^{\mr{q}}(B^n)$ in any topological context where both can be defined (as discussed in Section \ref{comparing_background_section})\footnote{We don't have a class of theories which have two such descriptions in mind.  While it's possible to construct factorization algebras of quantum observables explicitly for, e.g, topologically twisted supersymmetric theories, there isn't a known procedure for building corresponding functorial field theories assigning a number to a compact $n$-manifold.  It would be interesting to check this expectation for the example of Chern--Simons theory.}.  Our expectation is that this equivalence should be compatible with the $\bb E_n$-structures coming from $n$-dimensional pairs of pants (on $Z(S^{n-1})$) and from factorization (on $\obs^{\mr{q}}(B^n)$)  In particular, we conjecture that for 2-extended topological quantum field theories there is a graded algebra isomorphism $f$ making the following triangle commute:
\[\xymatrix{\mr H^\bullet(Z(S^{n-1})) \ar[r] \ar[d]_f & \HH^\bullet(Z(Y)), \\ \mr H^\bullet(\obs^{\mr{q}}(B^n)) \ar[ur] &}\]
meaning that the support conditions defined in terms of functorial field theory and in terms of factorization algebras can be made to exactly coincide.  We emphasize again that the parallel functorial story is intended to motivate some of the support condition constructions in this paper in parallel with the motivation described in Section \ref{construction_intro_section}, and our results don't rely on this conjectural equivalence.
\end{remark}

\begin{remark}
One can define these categories separately for each $(n-2)$-manifold $Y$; it's natural to ask whether there's a restricted 2-extended $n$-dimensional functorial field theory $Z_v$ which assigns $Z(Y)_v$ to each $Y$.  We hope to return to this question in future work. 
\end{remark}

\section{Vacua and Singular Supports} \label{SS_section}
Having described the abstract idea of restriction to a vacuum in a few different ways, we'll explain how to implement the definition for our main example, which is the B-twisted $\mc N=4$ gauge theory we constructed in \cite{EY1}.  We'll comment on how the definitions work for the A-twisted theory, but there are still a number of unanswered questions that must be addressed before that example can be treated in depth.

\subsection{Twisted $\mc N=4$ Gauge Theories} \label{twisting_section}
According to the definitions given in Section \ref{fact_algebra_restriction_section}, we have to describe a locally constant factorization algebra of quantum observables, along with the category of boundary conditions on a curve $C $ acted upon by the algebra of observables.  We'll motivate these objects in the B-twisted theory using our construction of the classical B-twisted theory, along with some formality results specifying the quantization.  The model we'll derive in this section will need a somewhat subtle correction, which we'll address in Section \ref{degree_shift_section} below.

Let's begin by recalling what it means to twist a supersymmetric field theory, and our construction of the classical geometric Langlands twisted theories after Kapustin and Witten.  Suppose we're given a perturbative classical field theory $E$ with an action of the super group $H=\CC^\times \ltimes \Pi \CC$ which we denote by $(\alpha, Q)$, where $\alpha$ denotes the action of $\CC^\times$ and $Q$ denotes the action of $\Pi \CC$. Following Costello \cite{CostelloSH}, we can give a construction of the \emph{twisted theory} as a family of perturbative classical field theories parametrized by $\CC[u,u^{-1}]$, where $u$ is a parameter of $\ZZ/2\ZZ \times \ZZ$ bidegree $(1,1)$.  Here the parameter $u$ is a generator for the algebra of algebraic functions on the classifying stack $B(\Pi \CC)$, or equivalently, the Chevalley--Eilenberg cochain complex of the super Lie algebra $\Pi \CC$ acting on our classical field theory.

\begin{remark}
Note that here we use the uncompleted version of the Chevalley--Eilenberg cochain complex which gives $\CC[u]$, not $\CC[[u]]$ (we can view this latter object as a pro graded algebra; it's not a graded algebra in the usual sense since it's not equivalent to the direct sum of its homogenous components). From the point of view of deformation theory, using the completed version is more natural, see for instance the analysis of Calaque and Grivaux \cite{CalaqueGrivaux}. However, for a nilpotent Lie algebra $\gg$ like $\Pi \CC$ the completion is less important than it is in the general case, since the formal moduli problem $B\gg$ can be represented by the uncompleted version of the Chevalley--Eilenberg complex, according to a theorem of Getzler \cite{Getzler}.  
\end{remark}

\begin{definition}
The \emph{twisted family} deforming $E$ associated to the twisting data $(\alpha,Q)$ is the perturbative classical field theory over the ring $\CC[u,u^{-1}]$ defined by
\[E^Q_u = E^{\Pi \CC}\otimes_{\CC[u]}  \CC[u,u^{-1}].\]
Here we note that the space $E^{\Pi \CC}$ of derived $\Pi \CC$ invariants lives over the classifying stack $B (\Pi \CC)$, so it has the structure of a module over $\OO( B (\Pi \CC) ) \iso \CC[u]$.
\end{definition}

Now, let's specialize this definition to a single twisted theory over $\CC$, without an auxiliary parameter.

\begin{definition}
The theory \emph{twisted} by the action $(\alpha, Q)$ is the ``generic fiber'' of this $\CC[u,u^{-1}]$ -family, that is, it is the space $E^Q=(E^{\Pi \CC}\otimes_{\CC[u]}  \CC[u,u^{-1}] )^{\CC^\times}$ of $\CC^\times$-invariants after inverting the twisting parameter $u$, or heuristically restricting to the odd shifted punctured line.
\end{definition}

\begin{remark}
Again, this definition is apparently different from the one of Costello \cite{CostelloSH}, because we use the uncompleted version of Chevalley--Eilenberg cochain complex. However, one can identify $(A \otimes \CC[u,u^{-1}])^{\CC^\times}\cong (A \otimes \CC(\!(u)\!))^{\CC^\times}$ for a $\CC(\!(u)\!)$-module $A$ as long as one has a weight decomposition $A=\bigoplus A_n$ of $A$ into the sum of its irreducible components under the $\bb C^\times$ action, which is true for the examples considered in \cite{EY1} (for instance, see Remark \ref{B_twist_u_family_remark} below). \vspace{-1em}

In alternative language, we can define the twisted family via a graded mixed complex, meaning a cochain complex $E$ with an additional $\ZZ \times \ZZ/2$-grading and a square-zero endomorphism $Q$ of degree $(1,1)$ and cohomological degree 0 \cite{PTVV}.  This in particular defines a $\ZZ/2\ZZ$-graded cochain complex $E_u$ with an action of $\CC[u]$, where now $u$ has $\ZZ/2\ZZ$-degree 1.  In examples coming from supersymmetric gauge theory $E$ will be defined to be the classical BV complex.  The module $E_u$ then has the property referred to above, of admitting a decomposition into the direct sum of its weight spaces, as long as $E$ has this property, which is the case since the space of fields decomposes into the sum of finitely many irreducible modules for the group of R-symmetries.
\end{remark}

\begin{remark}
As we'll see in Section \ref{degree_shift_section}, when we discuss exactly how to think about the moduli space of vacua in such a twisted theory, for some applications it is valuable to remember the entire twisted family, rather than taking $\CC^\times$-invariants.
\end{remark}

In \cite{EY1}, we extended this definition to define twists \emph{globally}, meaning twists of $(-1)$-shifted symplectic derived stacks with an action of the super group $H$.  Using this extended definition, we described the geometric Langlands A- and B-twists as deformations of a single, minimal, ``holomorphically twisted'' version of the $\mc N=4$ theory.  This is an important construction, since the holomorphic twist is the minimal twist of $\mc N=4$ super Yang--Mills theory which admits a purely algebraic description in four dimensions.  There is a natural origin story for this algebraic structure arising from the fact that $\mc N=4$ super Yang--Mills can be obtained by compactifying a six-dimensional theory (a gauge theory on twistor space) which admits a canonical algebraic structure.

Before we summarize our results from \cite{EY1}, we'll need a piece of terminology from derived geometry.

\begin{definition}
The \emph{de Rham stack} of a derived stack $\mc X$ is intended to capture the idea ``identify infinitesimally close points in $\mc X$''.  Formally, if we think of $\mc X$ in terms of its functor of points -- so for every commutative dga $R$ in degrees $\le 0$, the $R$-points of $\mc X$ define a simplicial set $\mc X(R)$ -- the de Rham stack is defined to be the functor $\mc X_{\dR}(R) = \mc X(R^{\mr{red}})$, where $R^{\mr{red}}$ is the quotient of $R$ by its nilradical.

One can think of $\mc X_{\dR}$ as appearing by deforming something a bit more geometric.  There is a derived stack $\mc X_{\mr{Hod}}$ over $\bb A^1$ called the \emph{Hodge stack} of $\mc X$.  This family over $\bb A^1$ is $\bb G_m$-equivariant, the fiber over 1 is $\mc X_{\dR}$, and the fiber over 0 is the formal completion $T_{\mr{form}}[1]\mc X$ of the 1-shifted tangent space along the 0-section.  The informal idea here is that, at least for a smooth complex variety $X$, the structure sheaf of $T_{\mr{form}}[1]\mc X$ looks like the complex $(\Omega^{\bullet, \bullet}(X), \ol \dd)$ of $(p,q)$-forms on $X$, and turning on the deformation means turning on the differential $\dd$ (so at $\lambda = 1$ we recover the de Rham complex of $X$).
\end{definition}

\begin{example}
The \emph{holomorphically twisted} $\mc N=4$ gauge theory on a smooth proper complex algebraic surface $X$ has the following $(-1)$-shifted symplectic moduli stack of solutions to its equations of motion
\[\EOM_{\mr{hol}}(X) = T_{\mr{form}}^*[-1]\higgs^{\mr{fer}}_G(X)\]
where $\higgs_G^{\mr{fer}}(X)$ is the moduli stack of $G$-Higgs bundles on $X$ where the Higgs field is fermionic: that is, the derived mapping stack $\ul{\mr{Map}}(\Pi TX, BG)$.  The notation $T_{\mr{form}}^*[-1]\mc X$ means the formal completion of the shifted cotangent stack $T^*[-1]\mc X$ along its zero section.  There are two natural actions of the supergroup $H$ on this moduli stack: $\CC^\times$ acts on the Higgs field with weight one, and the supertranslation either acts on the base or the cotangent fiber.  We proved that the corresponding global twists are given by
\begin{align*}
\EOM_B(X) &= T_{\mr{form}}^*[-1]\flat_G(X) \\
\text{and } \EOM_A(X) &= \higgs_G(X)_{\mr{dR}}
\end{align*}
respectively.

One can also compute the twist of the space of solutions to the equations of motion on a smooth proper curve $C$.  The stacks that arise are now 1-shifted symplectic: one finds
\begin{align*}
\EOM_B(C) &= T_{\mr{form}}^*[1]\flat_G(C) \\
\text{and } \EOM_A(C) &= \higgs_G(C)_{\mr{dR}}  = T^*_{\mr{form} }[1] (\bun_G(C)_{\mr{dR}} ) .
\end{align*}
One might note that the last identification can be further reduced to $\bun_G(C)_{\mr{dR}}$, since the tangent complex of a de Rham stack is always trivial. However, this above identification has nontrivial content since it extends to an identification compatible with the realization of the de Rham stack as a specialization of the Hodge stack; for a further discussion on this point see \cite[Section 4.3]{EY1}.

Finally, one can compute the twists of the space of solutions to the equations of motion on a point.  In this case one obtains 3-shifted symplectic stacks:
\begin{align*}
\EOM_B(\pt) &= T^*[3]BG \iso \gg^*[2]/G \\
\text{and } \EOM_A(\pt) &= (BG)_{\mr{dR}}.
\end{align*}
The latter statement follows because we can identify $\EOM_A(\pt) = \underline{\mr{Map}}( \bb D^2 , BG )_{\mr{dR}} = \underline{\mr{Map}}( \bb D^2 , BG_{\mr{dR}} )$, where $\bb D^2$ stands for a 2-dimensional formal disk: $\bb D^2 = \spec \CC[[x_1,x_2]]$. Since $\bb D^ 2= \colim \bb D^2_{(n)}$, where $\bb D^2_{(n)}$ is the $n^\text{th}$ infinitesimal neighborhood of a point on a smooth surface, this moduli space is equivalent to 
\[\underset {n \to \infty}\lim\, \underline{\mr{Map}}( \bb D^2_{(n)} , BG_{\mr{dR}} ) \iso \underset{n \to \infty} \lim \,\underline{\mr{Map}}( \pt , BG_{\mr{dR}} ) = \underline{\mr{Map}}( \pt , BG_{\mr{dR}} ) .\]

The factorization algebras of local observables are the $\bb P_4$-algebras of functions on these stacks. 
\end{example}

\begin{remark} \label{B_twist_u_family_remark}
Note that the $(-1)$-shifted tangent complex (i.e. the classical BV complex) of the twisted family associated to the B-twist can be calculated directly. In a smooth category, it is equivalent to the $\CC[u,u^{-1}]$-module 
\[E^Q_u = (\Omega^{\bullet,\bullet}(X; \gg_P \oplus \gg^*_P[-1])[u,u^{-1}], u \dd + \ol \dd).\]
\end{remark}

\begin{remark}
Note that 
\[\OO(\EOM_{A,G^\vee}(\pt)) \iso \mr{H}^\bullet_{\mr{dR}}(BG^\vee) \iso \sym( (\hh^\vee)^*[-2] )^W \iso \sym( \hh[-2] )^W \iso \OO( \hh^*[2]/W )\]
which is the same as $\OO( \EOM_{B,G}(\pt) ) = \OO(\hh^*[2]/W  )$ at the classical level.  One might think of this as a coincidence since there is no reason to expect such an equivalence at the classical level.  However, as we will see in Proposition \ref{no_deformations_prop}, these algebras of local classical observables do not admit any quantum corrections, so in fact we have an equivalence between the algebras of \emph{quantum} observables in the dual theories.  There \emph{is} a physical reason to expect a relationship between the algebras of local quantum observables in the A- and B-twisted theories, namely the fact that the two quantum field theories are expected to be S-dual to one another. 
\end{remark}

\subsection{Moduli Space of Flat Bundles and Hecke Symmetry}\label{geometry_of_flat_section}

In this section we'll recall some key properties of the derived stack $\flat_G(C)$ of flat $G$-bundles on a smooth complex curve $C $.  In the local case, where we consider a formal bubble $\bb B = \bb D \sqcup_{\bb D^\times} \bb D$ instead of $C $, we'll also recall the derived statement of the geometric Satake correspondence due to Bezrukavnikov and Finkelberg \cite{BezrukavnikovFinkelberg}.  We will deliberately omit the most technical details from derived algebraic geometry: instead we refer the reader to Sections 10 to 12 of \cite{ArinkinGaitsgory} for a fully detailed account.

\subsubsection{Geometry of $\mathrm{Flat}_G$} \label{flat_G_section}

We'll begin by describing the tangent and cotangent complexes of $\flat_G(C)$.  Recall that $\flat_G(C)$ is defined as a derived mapping stack: that is, as 
\[\flat_G(C) = \ul{\mr{Map}}(C_{\dR}, BG).\] 
We can, alternatively describe its $R$-points directly.  Let $\spec R$ be an affine derived scheme.  The $R$-points of $\flat_G(C)$ are given by the simplicial set of algebraic $G$-bundles on $\spec R\times C $ with a partial flat connection in the direction of $C$.  That is, they're given by the simplicial set $\mr{Map}(\spec R \times C_{\dR}, BG)$.  The \emph{tangent complex} at an $R$ point $(P, \nabla)$ can be identified as the complex
\begin{align*}
\bb T_{(P, \nabla)} \flat_G(C) &\iso (\Gamma(\spec R \times C _{\mr{dR}}; \gg_P)[1], \d_R) \\
&\iso (\Omega^\bullet(\spec R \times C ; \gg_P)[1], \d_{\nabla} + \d_R)
\end{align*}
where $\d_{\nabla}$ is the covariant derivative in the direction of $C$ and $\d_R$ is the internal differential on $R$ (or, more precisely,the differential on the dg $R$-module determined by $P$).  Similarly, the \emph{cotangent complex} can be described as
\[\bb L_{(P, \nabla)} \flat_G(C) \iso (\Omega^\bullet(\spec R \times C ; \gg^*_P)[1], \d_{\nabla} + \d_R).\]
Since $C$ is smooth and proper and $BG$ is 2-shifted symplectic, the derived stack $\flat_G(C)$ is 0-shifted symplectic by the AKSZ argument (i.e. by Theorem \ref{PTVV_AKSZ}). The shifted symplectic structure is easy to identify from this point of view: it's the isomorphism between the tangent and cotangent complexes induced from the Killing form isomorphism $\gg \iso \gg^*$.  We can also easily see the following, since for any $R$ the cohomology of the tangent complex is concentrated in degrees less than or equal to 1.

\begin{prop}
 The stack $\flat_G(C)$ is quasi-smooth.
\end{prop}

Let's now consider the shifted cotangent \emph{spaces}, i.e. the derived stacks $T^*[k]\flat_G(C)$, especially in the cases where $k = \pm 1$.  If $\spec R$ is a \emph{classical} affine scheme, an $R$-point of the classical part $(T^*[k]\flat_G(C))^{\mr{cl}}$ is an $R$-point of $\flat_G(C)$ -- i.e. a $G$-bundle $P$ on $C  \times \spec R$ with a partial flat connection $\Delta$ in the $C $ direction -- along with an element of $\mr H^{1+k}_\nabla(\spec R \times C ; \gg^*_P)$.  If $k=-1$, the classical truncation of this space is nothing but the \emph{scheme of singularities} that we defined in Definition \ref{scheme_sings_def}.  The scheme of singularities of $\flat_G(C)$ has another name.

\begin{definition}\cite[Section 1.1.5]{ArinkinGaitsgory}
The space of \emph{Arthur parameters} for the group $G$ and curve $C$ is the classical part of the $-1$-shifted cotangent stack, that is, $\mr{Arth}_G(C) := (T^*[-1]\flat_G(C))^{\mr{cl}}$, whose $R$-points consist of a $G$ bundle $P$ on $C  \times \spec R$ with a partial flat connection in the $C $ direction and a flat section $\phi$ of the coadjoint bundle $\gg^*_P$.
\end{definition}

The refined statement of the geometric Langlands conjecture involves sheaves with singular support in the following substack of $\arth_G(C)$.
\begin{definition}
The \emph{global nilpotent cone} is the subspace of $\mr{Arth}_G(C)$ where the Arthur parameter $\phi$ is \emph{nilpotent}.  In other words, there is a map $\mr{Arth}_G(C) \to \gg^*/G$ given by evaluating the flat section $\phi$ at a point $x \in C $, and one defines the global nilpotent cone as the fiber product
\[\mc N_G = \mr{Arth}_G(C) \times_{\gg^*/G} \mr{Nilp}/G \]
where $\mr{Nilp}$ is the nilpotent cone in $\gg^*$, which is defined as the image of the nilpotent cone in $\gg$ under the $G$-invariant isormorphism $\gg \cong \gg^*$ \footnote{We fixed such an identification when we specified the AKSZ shifted symplectic structure above: the AKSZ shifted symplectic structure required choosing a 2-shifted symplectic structure on $BG$: this structure is equivalent to the choice of a $G$-invariant isormorphism $\gg \cong \gg^*$.}.
\end{definition}

Note that this is independent of the choice of a point $x \in C $ because $\phi$ is a flat section.

\begin{remark}
In the original Langlands program, Langlands reciprocity involves not just a Galois representation but also an action of the so-called ``Arthur $\SL_2$''. In the geometric story, following Arinkin and Gaitsgory, we call the entire singularity space the space of Arthur parameters. In this case, allowing only nilpotent singular support condition corresponds to the action of a nilpotent element of $\SL_2$, so having the nilpotent singular support condition corresponds to having an Arthur parameter in the classical sense. 
\end{remark}

We'll conclude this subsection by noting that $\flat_G(C)$ forms a global complete intersection stack, in the sense that we described in Section \ref{main_SS_section}.  In brief, if one chooses a point $x \in C$, one can write $\flat_G(C)$ as a derived pullback $\flat_G^{\mr{RS}}(C) \times_{\gg/G} BG$, where $\flat_G^{\mr{RS}}(C)$ models flat $G$-bundles with regular singularity at the point $x$, and the map to $\gg/G$ sends such a flat bundle to its residue at $x$.  In other words, there is a derived pullback square (see also diagram \cite[(12.11)]{ArinkinGaitsgory}):
\[\xymatrix{
\flat_G(C) \ar[r] \ar[d] &\flat_G^{\mr{RS}}(C) \ar[d] \\
BG \ar@<1ex>[r]&\mc \gg/G. \ar@<1ex>[l]
}\]
For our purposes, we only need to know the existence of this pullback square in order to apply Proposition \ref{SS_complete_intersection_prop}.  So we need to understand the groupoid associated to the pullback square, and its action on $\flat_G(C)$, but we won't need to carefully understand the stack $\flat_G^{\mr{RS}}(C)$.  The important fact is, while the stack $\flat_G^{\mr{RS}}(C)$ isn't quite smooth, it is smooth in a Zariski neighborhood of $\flat_G(C)$ \cite[Proposition 10.6.6]{ArinkinGaitsgory}.  

The groupoid associated to this diagram -- which we'll use to describe singular support conditions -- is the groupoid $\mr{Hecke}^{\mr{spec}}_G = BG \times_{\gg/G} BG$.  We'll now discuss the action of this groupoid on $\flat_G(C)$ -- and the action of the category $\IC(\mr{Hecke}^{\mr{spec}}_G)$ -- in the context of the geometric Satake correspondence.

\subsubsection{Geometric Satake Equivalence} \label{satake_section}
This is one of the few parts of the paper where the group $G$ and its Langlands dual group $G^\vee$ can appear in the same expression.  In our notation we'll usually just write $G$, and write $G^\vee$ for the automorphic side (A-side), and $G$ for the Galois side (B-side) only when they are compared. Note that this notation is in contrast to the choice most often appearing in the geometric Langlands literature, (for instance, \cite[Section 12]{ArinkinGaitsgory}), but we prefer this choice because of our extensive focus on the Galois side.

Consider the affine Grassmannian $\mr{Gr}_{G,x} = G(\KK_x) / G(\OO_x) $ for a point $x\in C $. One of the main characters of the geometric Langlands theory is the \emph{spherical Hecke category} at a point $x\in C $ defined by $\mr{Sph}_{G,x} = \mr{D}(\bun_G(\bb B ))$, where $\bb B = \bb D \sqcup_{\bb D^\times} \bb D$ for a formal disk $\bb D$ and a formal punctured disk $\bb D^\times$. More precisely, we think of $\bun_G(\bb B)$ as the double quotient stack $G(\OO_x)\backslash G(\KK_x) / G(\OO_x)$ and $\mr{D}(\bun_G(\bb B )) = \mr{D}_{G(\OO_x)}( \mr{Gr}_{G,x})$ as the category of $G(\OO_x)$-equivariant D-modules on the affine Grassmannian $\mr{Gr}_{G,x}$. Its monoidal structure is given by the pull-push diagram 
\[\xymatrix{
&G(\OO_x)  \backslash  G(\KK_x) / G(\OO_x) & \\
&G(\OO_x) \backslash G(\KK_x) \times_{G(\OO_x)} G(\KK_x) / G(\OO_x) \ar[dl]_{p_1} \ar[dr]^{m} \ar[u]_{p_2} &  \\
G(\OO_x)  \backslash G(\KK_x) / G(\OO_x) &&  G(\OO_x)  \backslash  G(\KK_x) / G(\OO_x),
}\] 
that is by $M_1 \ast M_2 = m_*(p_1^*M_1 \otimes p_2^* M_2)$.  There are some technical issues involved in working with D-modules on general infinite-dimensional spaces, but one can make sense of them in this case by writing the affine Grassmannian as an ind-finite type ind-scheme which is $G(\OO_x)$-invariant at each level (for instance this guarantees that the spherical Hecke category is compactly generated).

For the na\"ive geometric Satake equivalence, we consider the heart $\mr{Sph}_{G,x}^\heartsuit := \mr{D}_{G(\OO_x)}( \mr{Gr}_{G,x} ) ^\heartsuit$ of the natural t-structure which is the abelian category of $G(\OO_x)$-equivariant D-modules on $\mr{Gr}_{G,x}$. It is a theorem of Lusztig \cite{Lusztig} that the monoidal structure described above preserves the heart $\mr{Sph}_{G,x}^\heartsuit$. The geometric Satake theorem in this context refers to the following fundamental theorem relating the group $G$ and its Langlands dual group $G^\vee$.

\begin{theorem}[Ginzburg \cite{Ginzburg2}, Mirkovic--Vilonen \cite{MV}] 
There exists a canonical equivalence of monoidal 1-categories \[\mr{Sat}_x^\heartsuit \colon \mr{Sph}_{G^\vee,x}^{\heartsuit} \simeq \mr{Rep}(G)^\heartsuit.\]
\end{theorem}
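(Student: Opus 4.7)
The plan is to construct a fiber functor from $\mr{Sph}_{G^\vee,x}^{\heartsuit}$ to vector spaces, promote the convolution monoidal structure to a symmetric monoidal structure compatible with this fiber functor, invoke Tannakian reconstruction to identify the source as $\mr{Rep}(H)^\heartsuit$ for some pro-reductive group $H$, and finally match $H$ with $G$ by computing its root datum explicitly from the geometry of the affine Grassmannian. First I would define the candidate fiber functor $F = \bigoplus_i \mr H^i(\mr{Gr}_{G^\vee,x}, -)$ valued in $\mr{Vect}$. Exactness of $F$ on the abelian category of $G^\vee(\OO_x)$-equivariant perverse sheaves (equivalently regular holonomic D-modules) is the key analytic input: it follows from Lusztig's parity vanishing theorem, which asserts that the IC sheaves of closures of $G^\vee(\OO_x)$-orbits in $\mr{Gr}_{G^\vee,x}$ have stalk cohomology concentrated in a single parity. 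Faithfulness is automatic from the stratification, and exactness together with faithfulness make $F$ a well-defined fiber functor on the heart.

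Next I would upgrade the monoidal structure. The convolution product makes $F$ a tensor functor (computing $F(M_1 \ast M_2)$ via the proper pushforward along $m$ and using the K\"unneth formula on the twisted product), but a priori this is only a monoidal, not symmetric monoidal, structure. To produce a commutativity constraint I would invoke Beilinson and Drinfeld's global construction: the Beilinson--Drinfeld Grassmannian $\mr{Gr}_{G^\vee, C^n}$ is a family over $C^n$ whose fiber over the small diagonal is $\mr{Gr}_{G^\vee, x}$ and whose fiber over a point with pairwise distinct coordinates is $(\mr{Gr}_{G^\vee, x})^n$. Applying nearby cycles along the approach to the diagonal identifies the convolution product with the \emph{fusion product}, which is manifestly commutative (coming from swapping points in $C^2$). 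A subtle sign issue appears here: to make the commutativity constraint match the cohomological symmetry one must modify the fiber functor by the parity of cohomological degree (equivalently, work in the category of super vector spaces and then correct by the sign representation), and this is compatible with the parity vanishing above.

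Tannakian reconstruction then yields an equivalence $\mr{Sph}^\heartsuit_{G^\vee, x} \simeq \mr{Rep}(H)^\heartsuit$ for an affine group scheme $H$ over $\CC$; pro-reductivity of $H$ follows from the semisimplicity of the convolution category, itself a consequence of the decomposition theorem applied to the proper maps $m$. The remaining and hardest step is to identify $H$ canonically with $G$. I would first pin down a maximal torus $T \subset H$ by using the \emph{weight functors} coming from semi-infinite orbits: for each coweight $\lambda$ of $G^\vee$ the functor $F_\lambda(-) := \mr H^\bullet_c(S_\lambda, -)$, where $S_\lambda$ is the corresponding semi-infinite cell, is a direct summand of $F$, and Mirkovic--Vilonen's theorem identifies its dimensions with weight multiplicities; this exhibits the character lattice of $T$ as the coweight lattice of $G^\vee$, which is the character lattice of $G$. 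To recover the coroot data and thus reconstruct $G$ on the nose, I would restrict to minuscule coweights where the Schubert variety is smooth and the corresponding representation of $H$ is minuscule, and compute directly using the geometry of rank one subgroups. The main obstacle is exactly this root-datum identification: verifying that the abstract reductive group reconstructed from $\mr{Sph}^\heartsuit_{G^\vee, x}$ is literally $G$ (as opposed to an isogenous form) requires a careful analysis of MV cycles and their dimensions in order to match simple coroots against the corresponding generators of $H$.
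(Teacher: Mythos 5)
The paper does not prove this theorem; it simply cites the classical geometric Satake equivalence as established by Ginzburg and by Mirkovi\'c--Vilonen and uses it as a black box. There is therefore no ``paper's own proof'' to compare against.

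That said, your outline is an accurate, high-level reconstruction of the Mirkovi\'c--Vilonen argument and hits the main structural moves in the right order: the total-cohomology fiber functor with exactness via Lusztig's parity vanishing, the passage from convolution to the fusion product over the Beilinson--Drinfeld Grassmannian in order to produce a commutativity constraint, the sign normalization needed to make that constraint match the one on $\mr{Vect}$, semisimplicity from the decomposition theorem, Tannakian reconstruction, the weight functors $F_\lambda = \mr H^\bullet_c(S_\lambda,-)$ built from semi-infinite orbits to split off the torus, and the hard final step of pinning down the root datum (rather than merely an isogeny class) via MV cycles. Two places where you are being slightly glib and would need to say more in a genuine proof: (i) parity vanishing of stalks of IC sheaves does not by itself give exactness of $F$ on the heart --- one needs the finer statement that the hypercohomology of each convolution factor lives in a single parity \emph{and} an argument that the weight functors $F_\lambda$ are individually exact, which MV prove using the hyperbolic localization/attracting-cell picture for the $\mb G_m$-action defining $S_\lambda$; ``faithfulness is automatic'' is likewise compressing the observation that semisimplicity reduces faithfulness to nonvanishing of $F$ on each simple $\mathrm{IC}_\lambda$. (ii) Tannakian reconstruction in the bare form you invoke applies to a rigid abelian tensor category with finite-dimensional Homs and produces an affine group scheme; to land on the algebraic group $G$ rather than a pro-algebraic quotient one should either restrict to the finite-length (equivalently, finitely supported) subcategory and ind-complete at the end, or argue directly that the reconstructed group is of finite type using the finite generation of the coweight monoid. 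Neither of these is a fatal gap, but both are real steps in the published proof.
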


However, one should not immediately try to promote this to an equivalence of derived categories. For the unit object $\delta_1$ given by the delta function at $1 \in \mr{Gr}_{G,x}$, one computes its endomorphisms in the derived category to be
\begin{align*}
\mr{Hom}_{ \mr{Sph}_{G,x} } (\delta_1,\delta_1) &= \mr{End}_{\mr{D}(\pt)^{G(\OO_x)} }(\delta,\delta) \\ 
&\iso \mr{H}^\bullet(BG(\OO_x))\\ 
&\iso \mr{H}^\bullet(BG).
\end{align*} 
On the other hand, since $\mr{Rep}(G)$ for a reductive group $G$ is a semisimple category, one has $\mr{Hom}_{\mr{Rep}(G)}(1,1) =\bb C$. Therefore in order to sensibly promote the geometric Satake equivalence to an equivalence of derived categories one needs to enhance the right-hand side, so that one has an equivalence encoding all the higher Ext information on the left-hand side.

In order to do that, it is convenient to enlarge the category $\mr{Sph}_{G,x}$. This process is usually called ``renormalization'' (see e.g. \cite{ArinkinGaitsgory, BZGunningham}). We consider the full subcategory $\mr{Sph}_{G,x }^{ \mr{loc,c} } \inj \mr{Sph}_{G,x}$ of locally compact objects, that is, those objects which become compact, or equivalently coherent, under the forgetful functor $\mr{D}_{G(\OO_x)}( \mr{Gr}_{G,x} ) \to \mr{D}( \mr{Gr}_{G,x})$. Note that it is closed under the convolution monoidal structure, since the pullback and pushforward functors $p_i^*$ and $m_*$ defining the monoidal structure preserve coherence. Then we define the renormalized spherical Hecke category $\mr{Sph}_{G,x }^{\mr{ren}}$ to be the ind-completion of $\mr{Sph}_{G,x }^{\mr{loc,c}}$.

On the other side of the geometric Satake equivalence, one considers the dg category $\IC(\mr{Hecke}^{\mr{spec}}_G)$, where $\mr{Hecke}^{\mr{spec}}_G$ is the derived fiber product $BG \times_{\gg/G} BG$ (the groupoid mentioned in the previous subsection).  We can think of this groupoid as modelling flat $G$-bundles on the formal bubble $\bb B$: i.e. as modelling pairs of flat $G$-bundles on the formal disk $\bb D$ that agree with restricted to the formal punctured disk.  In other words
\begin{align*}
\mr{Hecke}^{\mr{spec}}_G &= BG \times_{\gg/G} BG \\
&\iso \flat_G(\bb D) \times_{\flat^{\mr{RS}}_G(\bb D^\times)} \flat_G(\bb D) \\
&\iso \flat_G(\bb D \sqcup_{\bb D^\times} \bb D) \\
&= \flat_G(\bb B).
\end{align*}
Here we're identifying $\gg/G$ with the stack $\flat^{\mr{RS}}_G(\bb D^\times)$ of flat $G$-bundles on $\bb D^\times$ with regular singularity at 0. The category $\IC(\mr{Hecke}^{\mr{spec}}_G)$ has a convolution monoidal structure whose unit is the skyscraper sheaf at the trivial flat bundle.

Here is what is referred to as the derived geometric Satake equivalence. 

\begin{theorem}[Bezrukavnikov--Finkelberg \cite{BezrukavnikovFinkelberg}]
There is an equivalence of monoidal dg categories 
\[\mr{Sph}_{G^\vee,x }^{\mr{ren}}  \simeq \IC(\mr{Hecke}^{\mr{spec}}_G).\]
\end{theorem}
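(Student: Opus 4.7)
The approach is to bootstrap the equivalence from the classical (abelian) geometric Satake equivalence by promoting it to a derived statement via Koszul duality on the spectral side and a formality argument on the spherical side, with the algebra $\mr H^\bullet(BG) \iso \OO(\gg^*[2]/G)$ acting as the common bridge.

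First I would re-present the spectral side. The stack $\mr{Hecke}^{\mr{spec}}_G = BG \times_{\gg/G} BG$ is a global complete intersection in the sense of Section \ref{global_complete_intersection_section}: it is the self-intersection of the zero section of $T^*[1](\gg/G) \iso \gg^*[-1]/G$. By the Koszul duality discussed there, $\IC(\mr{Hecke}^{\mr{spec}}_G)$ is equivalent, as a monoidal dg category, to modules for the $\bb E_2$-algebra $\mr{HC}^\bullet(BG/(\gg/G))$, whose underlying graded commutative algebra is $\sym(\gg[-2])^G = \OO(\gg^*[2]/G)$. Under this identification the monoidal unit -- the skyscraper at the trivial local system -- corresponds to the augmentation module, whose endomorphism algebra is exactly $\mr H^\bullet(BG)$.

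Next I would match this with data on the spherical side. The endomorphism algebra of the unit $\delta_1 \in \mr{Sph}_{G^\vee,x}^{\mr{ren}}$ is computed in the statement of the theorem as $\mr H^\bullet(BG(\OO_x)) \iso \mr H^\bullet(BG)$. More structurally, using Gaitsgory's construction of a central functor $Z \colon \mr{Rep}(G) \to Z(\mr{Sph}_{G^\vee,x}^{\mr{ren}})$ from nearby cycles on the Beilinson--Drinfeld Grassmannian, together with the loop-rotation $\bb G_m$-action on $\mr{Gr}_{G^\vee, x}$, the renormalized spherical Hecke category acquires an $\bb E_2$-linear action of the same algebra $\OO(\gg^*[2]/G)$. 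Passing to abelian hearts, the classical Ginzburg--Mirkovi\'c--Vilonen equivalence $\mr{Sph}_{G^\vee,x}^\heartsuit \simeq \mr{Rep}(G)^\heartsuit$ becomes linear over this algebra, and this can be checked on the compact generators given by the intersection cohomology sheaves $\mr{IC}_\lambda$ of spherical Schubert cells.

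To conclude, I would lift the abelian equivalence to a derived one. The idea is to exhibit matching compact generators on each side -- the $\mr{IC}_\lambda$ on the spherical side and their Koszul duals on the spectral side, which under classical Satake correspond to the irreducible representations $V_\lambda$ deformed along the $\OO(\gg^*[2]/G)$-action -- and to check that their $\bb E_2$-algebras of derived endomorphisms are formal and quasi-isomorphic. The main obstacle will be exactly this last point: granting the abelian equivalence and the matching Ext of the unit, one still has to verify the formality of the $\bb E_2$-endomorphism algebra and the monoidal compatibility of the identification. This is the technical crux of Bezrukavnikov--Finkelberg, where the loop-rotation equivariant structure is used to realise the relevant dg monoidal category as a mixed/equivariant version that is formal, and a choice of Drinfeld associator (compare Theorem \ref{thm:Toen}) rigidifies the $\bb E_2$-formality so that the abelian equivalence lifts monoidally to the derived level.
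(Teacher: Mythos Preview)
The paper does not prove this theorem; it is stated as a cited result of Bezrukavnikov--Finkelberg and used as background input for the subsequent discussion of Hecke actions and singular support. There is therefore no in-paper proof to compare your proposal against.

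Your sketch is a reasonable outline of the actual Bezrukavnikov--Finkelberg argument: Koszul duality identifies $\IC(\mr{Hecke}^{\mr{spec}}_G)$ with modules over a graded algebra built from $\sym(\gg[-2])^G$, the equivariant cohomology of a point matches this on the spherical side, and one lifts the abelian Satake equivalence through a formality statement. You correctly flag that the formality of the relevant Ext-algebra (and its monoidal compatibility) is the real content and is established using the loop-rotation equivariant structure. That said, your sketch remains just that: the passage from ``matching endomorphisms of the unit'' and ``matching compact generators'' to a monoidal equivalence of the full dg categories requires substantially more than you indicate, and invoking a Drinfeld associator via Theorem~\ref{thm:Toen} is not how Bezrukavnikov--Finkelberg proceed -- their formality is obtained by purity/weight arguments in equivariant cohomology rather than by operadic formality of $\bb E_2$. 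So while nothing you wrote is wrong as an orientation, it should be understood as a heuristic summary of the cited reference rather than a proof, and the paper itself makes no claim to supply one.
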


The importance of these monoidal categories is that they act on the categories of interest on any curve $C $ as follows. Let us explain this point for the Galois side (the corresponding action on the automorphic side is defined verbatim). If we choose a point $x\in C $, then there is a natural action of the monoidal category $ \IC(\flat_G(\bb B))$ on the category $ \IC(  \Flat_G(C) )$ by pull-tensor-push along the natural maps
\[\xymatrix{
&\flat_G(\bb B)&\\
&\flat_G(C  \underset {\bb D^\times} \sqcup \bb D_x) \ar[u]^{q_x} \ar[dl]_{p_1} \ar[dr]^{p_2} &\\
\flat_G(C) &&\flat_G(C),
}\] 
all of which are induced from the natural inclusion maps on the corresponding curves; note that there are two different inclusion maps from $C $ to $C  \sqcup_{\bb D^\times} \bb D_x$.  These inclusion maps are shown in Figure \ref{Hecke_curve_fig}.

 \begin{figure} [!h]
 \centering
\begin{tikzpicture}[scale=.7]
\begin{scope}[yshift=3.5cm]
\draw (0,0) circle (.6);
\draw (-.6,0) .. controls (0,-0.2) .. (.6,0);
\draw[dashed] (-.6,0) .. controls (0,0.2) .. (.6,0);
\end{scope}

\begin{scope}
\draw[rounded corners=28pt] (-1.1,.1)--(0,-.6)--(1.1,.1);
\draw[rounded corners=24pt] (-.9,0.17)--(0,.6)--(.9,0.17);
\draw(0,0) ellipse (3 and 1.5);
\draw[fill=white] (0,1.1) circle (.6);
\draw (-.6,1.1) .. controls (0,0.9) .. (.6,1.1);
\draw[dashed] (-.6,1.1) .. controls (0,1.3) .. (.6,1.1);
\end{scope}

\begin{scope}[yshift=-2cm,xshift=-7cm]
\draw[rounded corners=28pt] (-1.1,.1)--(0,-.6)--(1.1,.1);
\draw[rounded corners=24pt] (-.9,0.17)--(0,.6)--(.9,0.17);
\draw(0,0) ellipse (3 and 1.5);
\draw (-.6,1.1) .. controls (0,0.9) .. (.6,1.1);
\draw[dashed] (-.6,1.1) .. controls (0,1.3) .. (.6,1.1);
\clip (-.6,1.1) rectangle (.6,.5);
\draw (0,1.1) circle (.6);
\end{scope}

\begin{scope}[yshift=-2cm,xshift=7cm]
\draw[rounded corners=28pt] (-1.1,.1)--(0,-.6)--(1.1,.1);
\draw[rounded corners=24pt] (-.9,0.17)--(0,.6)--(.9,0.17);
\draw(0,0) ellipse (3 and 1.5);
\draw (-.6,1.1) .. controls (0,0.9) .. (.6,1.1);
\clip (-.6,1.1) rectangle (.6,1.7);
\draw[fill=white] (0,1.1) circle (.6);
\draw[dashed] (-.6,1.1) .. controls (0,1.3) .. (.6,1.1);
\end{scope}

\draw[<-] (0,2) -- (0,2.5);
\draw[<-] (-2.9,-.8) -- (-3.9,-1.3);
\draw[<-] (2.9,-.8) -- (3.9,-1.3);

\end{tikzpicture}
\caption{The curve $C$ includes into $C \sqcup_{\bb D^\times} \bb D_x$ -- the curve with the formal neighborhood of the point $x$ doubled -- in two different ways.  There is also an inclusion of the formal bubble $\bb D \sqcup_{\bb D^\times} \bb D$.}
\label{Hecke_curve_fig}
\end{figure}
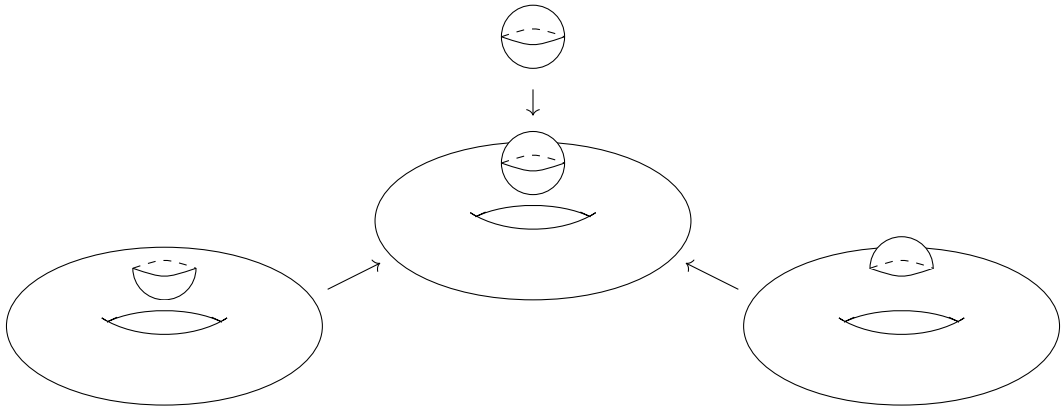

\begin{remark}
Indeed, this action preserves the subcategory $ \IC_{\mc N_{G}}(\Flat_G(C)) \subset  \IC(\Flat_G(C)) $ and Arinkin and Gaitsgory's version of the geometric Langlands correspondence asserts an equivalence of two dg categories \[D(\bun_{G^\vee}(C))\simeq \IC_{\mc N_{G}}(\Flat_G(C))\] as module categories over those equivalent monoidal categories.
\end{remark}

\subsection{Vacua for Twisted $\mc N=4$ Theories} \label{restricting_KW_theories_section}

Now we'd like to apply the constructions of Section \ref{loc_and_vacuum_section} to examples modelling the Kapustin--Witten twisted 4d gauge theories described in Section \ref{twisting_section}. According to the recipe, we need to specify an $\bb E_4$-algebra $\mc A$ of 4d local observables and a dg category $\mc B$ with an action of $\mc A$ modelling \emph{boundary conditions} along the curve $C $.  We'll build this data starting from a dg category $\mc B$ quantizing the space of solutions $\EOM(C)$ along a curve in the twisted theory, which is a module for a monoidal dg category $\LL$ quantizing the local space $\EOM(\bb B)$, so that $\eend_{\LL}(1_{\LL})$ quantizes the factorization algebra of local classical observables.

For most of this section we'll focus on the B-twisted theory (we'll discuss what is known in the A-twisted case in Remark \ref{rmk:A-twist}).  Set
\[\mc B = \IC(\flat_G(C))\]
viewed as the (na\"ive) categorical geometric quantization of the moduli space $\EOM_B(C) = T^*_{\mr{form}}[1]\flat_G(C)$ as discussed in Section  \ref{functorial_background_section}.  Likewise set $\LL = \IC(\flat_G(\bb B))$, which acts on $\mc B$ after choosing a point $x \in C $ as discussed above.

The action of $\LL$ on $\mc B$ leads to a map of $\bb E_2$-algebras $\mr{End}_{\LL}(1_\LL)\to \mr{HC}^\bullet(\mc B)$ as in Section \ref{restriction_background_section}. For the unit $1_\LL = \delta$ of the monoidal category $\LL = \IC(\flat_G(\bb B))$, one can compute its endomorphism algebra.
\begin{lemma}[{\cite[Section 12.4]{ArinkinGaitsgory}}] \label{endomorphism_unit_hecke_lemma}
There is an equivalence 
\[\hom_{\LL}(\delta,\delta) \iso \OO(\gg^*[2]/G) \iso \OO(\hh^*[2]/W)\]
as $\bb E_2$-algebras, where the algebra on the right-hand side is actually commutative, and therefore $\bb E_2$ in a trivial way.
\end{lemma}
Therefore there's a natural map from $\mr{End}_{\LL}(1_\LL) \iso \OO(\hh^*[2]/W)$ to the Hochschild cochains $\mr{HC}^\bullet(\mc B)$ . 

\begin{remark} \label{scheme_direction_HH_remark}
Heuristically, if $\flat_G(C)$ were a scheme, as opposed to a stack, then the HKR theorem would identify its Hochschild cohomology with $\OO(T^*[1]\flat_G(C))$.  In these terms the action of $\OO(\hh^*[2]/W) = \OO(\gg^*[2]/G)$ would be given by pullback along the map
\[\xymatrix{ T^*[1]\flat_G(C) \ar[r]^-{\ev_x} & \gg^*[2]/G,}\]
defined by evaluating the section at a point $x$ in $C $ to obtain an element of $\gg^*$ well-defined up to conjugation, where the degree $-2$ part of the cotangent fiber parametrizes sections of the coadjoint bundle.
\end{remark}

We need to quantize the $\bb P_4$-algebra of classical observables -- the $\bb P_4$ structure associated to the 3-shifted symplectic structure on $\gg^*[2]/G$ -- to an $\bb E_4$-algebra.  We'll do this using a very simple example of a general body of theory about $\bb E_n$-deformations.  We refer to To\"en's ICM address \cite{ToenICM} and Francis \cite{Francis} for details.

\begin{prop} \label{no_deformations_prop}
The commutative algebra $\OO(\hh^*[2]/W)$ with its trivial 3-shifted Poisson bracket doesn't admit any $\bb P_4$ deformations. 
\end{prop}

\begin{proof}
The $\bb P_4$ deformations of the given $\bb P_4$-algebra structure on $\OO(\hh^*[2]/W)$ correspond to Maurer--Cartan elements in the dg Lie algebra whose underlying graded vector space is the homotopy fiber of the natural map $\mr{Pol}(\hh^*[2]/W, 3)[4]\to \OO(\hh^*[2]/W  )[4]$. As $\mr{Pol}(\hh^*[2]/W, 3)[4]= \OO( T^*[4]( \hh^*[2]/W))$ is a symmetric algebra freely generated by elements of cohomological degree 2 and the natural map surjects onto $\OO(\hh^*[2]/W  )[4]$, there are no elements in degree 1 in the corresponding dg Lie algebra, so the claim immediately follows.
\end{proof}

Therefore by Theorem \ref{thm:Toen} and the discussion immediately following it, we should take $\mc A = \OO(\hh^*[2]/W)$ with the commutative algebra structure as the $\bb E_4$-algebra of quantum local observables of our theory.  Before defining the restriction, however, we'll need to perform this calculation in a slightly more sophisticated fashion which remembers the origin of the theory in question as a topological twist.

\begin{remark}
There's a natural interpretation of this action in terms of the map from 4d local operators to 2d local operators (after compactification along $C$) for the B-twisted theory.  Indeed, the classical B-twisted theory assigns to a closed curve $C$ the 1-shifted symplectic stack $T^*[1]\mr{Flat}_G(C)$, or alternatively the (sheaf of) $\bb P_2$-algebras $\OO(T^*[1]\mr{Flat}_G(C))$.  At the classical level, there's a map of algebras $\OO(\gg^*[2]/G) \to  \OO(T^*[1]\mr{Flat}_G(C))$ for every point $x \in C $ which we think of as $\obscl(B^4)\to \obscl( B^2 \times C)$, where the choice of a point $x\in C $ gives a map $B^4\to B^2\times C $.  We know that the algebra $\OO(\gg^*[2]/G)$ doesn't receive any quantum corrections. Let's suppose that this is also true for $\OO(T^*[1]\mr{Flat}_G(C))$, so this map is exact at the quantum level.  We claim that this map is precisely the map into the Hochschild cohomology described in our heuristic Remark \ref{scheme_direction_HH_remark}.  Indeed, if we assume a state-operator correspondence then in the 2d theory obtained from reduction along $C$ the local operators are precisely given by the Hochschild cochains of the category assigned to the point (since these are states on the circle with an appropriate framing).  Note that this interpretation ignores some of the subtleties present in our story: the difference between Hochschild cochains and functions on the shifted cotangent bundle occuring because of the stacky directions in $\flat_G(C)$ (and, for that matter, the stacky directions present in $\gg^*[2]/G$).

\end{remark}

\subsection{Extension to the Twisted Family} \label{degree_shift_section}

Using this moduli space of vacua, we can finally connect compatibility with a vacuum and Arinkin--Gaitsgory's nilpotent singular support condition. However, in order to find a more general setting with which we can find some new structures on the geometric Langlands theory, we still need to expand this definition slightly. In fact, this extension is a meaningful process in greater generality, but we mostly work with our main example.

In order to motivate this extension we'll return to the construction of the twisted theory.  Recall that in the perturbative setting (and hence in the nonperturbative setting as well) we took $\CC^\times$-invariants of the twisted family 
\[E ^{\Pi \CC}\otimes_{\CC[u]}  \CC[u,u^{-1}]\]
built from the B-supercharge, where $u$ is a parameter of bidegree $(1,1)$ in $\ZZ/2\ZZ \times \ZZ$. This process was taken as the definition of the B-twist. Based on this definition, we have seen that the resulting moduli space of vacua is equivalent to $\hh^*[2]/W$, as opposed to the underived space $\hh^*/W$ physicists might more naturally consider. This is not a mistake; the twisting procedure necessarily involves a meaningful degree shift; for instance, the correct cohomological grading of the Hochschild cohomology (or polyvector fields) as the local observables of the B-model topological open string theory is introduced precisely by the topological twisting construction for a 2-dimensional supersymmetric $\sigma$-model.  Without this particular choice of $\CC^\times$-action, one would not obtain the usual cohomological grading. 

On the other hand, it is possible to recover the ungraded space physicists think of as the moduli space of vacua. Our perspective will be that we needn't restrict to the $\CC^\times$-invariants for a particular choice of R-symmetry.  Twisted theories are sometimes described in this way in the physics literature; a priori twisted theories needn't admit a $\ZZ$-grading, instead they'll only be $\ZZ/2\ZZ$-graded (for instance this is the case for Baulieu's twist of 10d $\mc N=1$ super Yang--Mills \cite{Baulieu}).  One can promote a $\ZZ/2\ZZ$-graded theory to a $\ZZ$-graded theory by equipping it with a suitable $\CC^\times$-action, see for example \cite{KapustinSaulina}.  We'll discuss $\ZZ/2\ZZ$-graded theories a little more later in this subsection.

Because the twisted family is locally constant in the $u$ direction, the twisted field theory ``lives over'' $\CC[u,u^{-1}]$. For example, if we don't take $\CC^\times$ invariants, we have 
\[\obsq_u(B^4) : = \obsq(B^4)^{\Pi \CC} \otimes_{\CC[u]}  \CC[u,u^{-1}] \iso \OO(\hh^*[2]/W)[u,u^{-1}]\] 
as the full algebra of quantum observables.  Likewise, one can similarly understand the full category of boundary conditions along $C$ as 
\[\IC(\flat_G(C))[u,u^{-1}] := \IC(\flat_G(C)) \otimes_{\vect} \CC[u,u^{-1}]\text{-mod}.\]

The action of local observables in this family is the same as in the previous section.  That is, after choosing a point $x \in C$ the spectral Hecke category $\LL[u,u^{-1}] = \IC(\flat_G(\bb B))[u,u^{-1}]$ with the parameter $u$ adjoined acts on $\mc B[u,u^{-1}] = \IC(\flat_G(C))[u,u^{-1}]$, which induces a map from $\OO(\hh^*[2]/W)[u,u^{-1}]$ to the Hochschild cochains $\mr{HC}^\bullet(\mc B[u,u^{-1}])$ by taking endomorphisms of the unit.  Heuristically (as in Remark \ref{scheme_direction_HH_remark}) we can think of this as an evaluation map at the chosen point $x$.  That is, there is an evaluation map 
\[\ev_{x} \colon T^*[1]\flat_G(C) \to \gg^*[2]/G\]
which induces algebra maps
\begin{align*}
\ev_{x}^* \colon \OO(\hh^*[2]/W) &\to \OO(T^*[1]\flat_G(C)) \\
\text{and hence } \OO(\hh^*[2]/W)[u,u^{-1}] &\to \OO(T^*[1]\flat_G(C))[u,u^{-1}][s] 
\end{align*}
where now $s$ is a parameter of $\ZZ/2\ZZ \times \ZZ$ bidegree $(1,-2)$.  As before we should only think of this description as a heuristic since we can't literally compare $\OO(T^*[1]\mc X)$ and $\mr{HC}^\bullet(X)$ when $X$ is a stack and we don't have a version of the HKR theorem available.  For a precise definition of the action of local observables we'll use the description via the action of the spectral Hecke category.

\begin{remark} \label{odd_formal_disk_remark}
Geometrically we can think of the algebra $\OO(\hh^*[2]/W)[u,u^{-1}]$ of twisted local observables as the algebra of functions on a space of \emph{twisted vacua} of the form $\hh^*[2]/W \times \Pi \bb C^\times_1$, where $\Pi \bb C^\times_1 = \text{``}\spec\CC[u,u^{-1}]\text{''}$ is a ``shifted fermionic punctured line''.  Of course, this doesn't literally make sense because the algebra $\CC[u,u^{-1}]$ is not connective and hence does not define an affine derived scheme \footnote{One might try to define it as an affine stack in the sense of \cite{BZNLoopSpaces1, ToenAffine}, but we expect this will not be a productive way to think about it.  For instance there are no non-zero maps from $\CC[u,u^{-1}]$ to a connective cdga, so if one tries to define $\spec\CC[u,u^{-1}]$ as a functor in the na\"ive way, the result is the empty stack.}.  Nevertheless we think this is a useful perspective for understanding the twisting procedure -- we think of the twist as the restriction of a family of theories over a shifted fermionic line to the shifted fermionic punctured line.
\end{remark}

In order to reconnect with geometry and with the vacua story from Section \ref{loc_and_vacuum_section} we'd like to replace our nonexistent space $\hh^*[2]/W \times \Pi \bb C^\times_1$ of twisted vacua by something more meaningful.  The na\"ive thing to do from the point of view of the usual twisting story would be to take spec of the $\CC^\times$-invariants in our algebra of twisted local observables: in this case $u$ has weight 1 and $\OO(\hh^*[2]/W)$ has weight zero, so this recovers the na\"ive space $\hh^*[2]/W$ of vacua from the previous section.  Alternatively, we can get an actual scheme of vacua by considering the ordinary (that is, underived) directions in our $\CC[u,u^{-1}]$ family of derived stacks (which, of course, does ``see'' the derived directions in the algebra of twisted local observables).  In other words, we make the following definition.

\begin{definition} \label{refined_vacua_def}
The \emph{refined space of vacua} in a twisted field theory is the classical affine scheme
\[\mc V^{\mr{ref}} = \spec \mr{H}^0(\obsq_u(B^4) ) \]
obtained by taking the degree zero part of the total algebra of twisted observables as a $\CC[u,u^{-1}]$-module. 
\end{definition}

\begin{figure}[h]
\centering
\begin{tikzpicture}[scale=1]
 \draw [thin, white, shade, shading=axis, left color=lightgray, right color=white,shading angle=0] (0,0) rectangle (8,1);
 \draw [thin, white, shade, shading=axis, left color=lightgray, right color=white,shading angle=180] (0,1) rectangle (8,2);
 \draw [ultra thick] (0,1) -- (8,1);
 \draw [ultra thick] (0,0.1) -- (8,1.9);
 \draw [ultra thick, dashed, red] (2,0) -- (2,2);
 \draw [fill] (2,1) circle (0.1);
 \draw [red, fill] (2,0.55) circle (0.1);
 \node at (2.3,1.3) {$v$};
 \node [red] at (2.4,2.4) {$v[u,u^{-1}]$};
 \node at (8.5,2) {$\mc V^{\mr{ref}}$};
 \node at (8.3,1) {$\mc V$};
 \node at (-0.8,1.5) {\scalebox{1.1}{${\mc V[u,u^{-1}]}$}};
 \draw [<->] (9,0.5) -- (9,1.5);
 \node at (9.35,1) {$u^{\pm 1}$};
\end{tikzpicture}

\caption{The refined moduli space of vacua $\mc V^{\mr{ref}}$ is viewed geometrically as the classical part of the thickened space $\mc V[u,u^{-1}]$ of twisted vacua.  Here we've drawn $\mc V$ horizontally and the degree 2 direction with coordinate $t$ vertically.  We've indicated the closed subset $v[u,u^{-1}]$ inside of $\mc V[u,u^{-1}]$ associated to a point $v$ of $\mc V$ by a red dashed line.}
\label{refined_vacua_fig}
\end{figure}
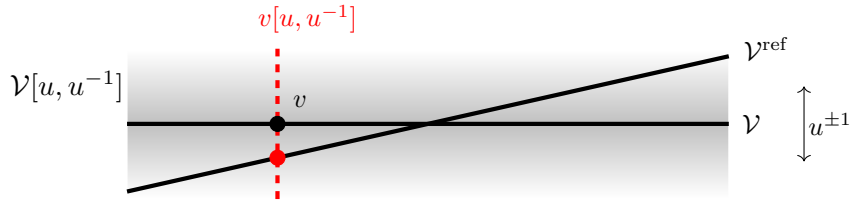

Note that in our main example, one can see that $\mc V^{\mr{ref}}$ is $\hh^*/W$ as expected. That is, if $\{x_i\}$ is the set of linear functions on $\hh^*[2]/W$, now the ideal $(x_i u^{-2})$ defines a corresponding actual point in $\hh^*/W$ (in Fig. \ref{refined_vacua_fig} this is the red point). 

\begin{remark}
In the arguments we made in Section \ref{main_SS_section} the parameter $t$ was a parameter of $\ZZ/2\ZZ \times \ZZ$ bidegree $(0,2)$ rather than a parameter of bidegree $(1,1)$.  We can pass to this purely bosonic situation by taking the bosonic part of the algebra of twisted local observables (meaning the bosonic part with respect to the super grading). Namely, if $u$ has bidegree $(1,1)$ then the bosonic part of $\OO(\hh^*[2]/W)[u,u^{-1}]$ is nothing but $\OO(\hh^*[2]/W)[t,t^{-1}]$ where now $t=u^2$ is a bosonic parameter of degree 2.  From now on we'll restrict the action of the full algebra of twisted local observables to the action of its bosonic subalgebra.
\end{remark}

Now, we can use the action of the algebra $\obsq_t(B^4)$ of twisted local observables on the category $\mc B[t,t^{-1}]$ of twisted boundary conditions to define the category compatible with a closed point $v \in \mc V^{\mr{ref}}$ in a natural way.  However, one must take care: it's a somewhat subtle issue to define a subcategory of the usual category $\mc B$ of boundary condition compatible with a closed point in $\mc V^{\mr{ref}}$.  The definition we'll give below will give a sensible dg subcategory of $\mc B[t,t^{-1}]$. To proceed further we'll use the following fact.

\begin{lemma}\label{lemma:2-periodic}
A cochain complex with an invertible degree 2 automorphism is equivalent to a $\bb Z/2$-graded cochain complex.
\end{lemma}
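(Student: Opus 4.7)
The plan is to construct explicit functors in both directions and verify they are mutually quasi-inverse. Let $\mathrm{Ch}_\phi$ denote the category of cochain complexes $(V^\bullet, d)$ equipped with a chain automorphism $\phi \colon V^\bullet \to V^{\bullet+2}$ of cohomological degree $2$ which is invertible, and let $\mathrm{Ch}^{\mathbb{Z}/2}$ denote the category of $\mathbb{Z}/2$-graded cochain complexes. I claim these are equivalent.

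First I would define a functor $F \colon \mathrm{Ch}_\phi \to \mathrm{Ch}^{\mathbb{Z}/2}$ by ``folding.'' Given $(V^\bullet, d, \phi)$, set
\[
F(V)^{\bar 0} = V^0, \qquad F(V)^{\bar 1} = V^1,
\]
with $\mathbb{Z}/2$-graded differential $d_0 \colon V^0 \to V^1$ given by the restriction of $d$, and $d_1 \colon V^1 \to V^0$ given by $\phi^{-1} \circ d|_{V^1} \colon V^1 \to V^2 \to V^0$. The relation $d_1 \circ d_0 = 0$ follows from $d^2 = 0$ together with the compatibility $\phi d = d \phi$, and similarly for $d_0 \circ d_1$. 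In the reverse direction, I would define $G \colon \mathrm{Ch}^{\mathbb{Z}/2} \to \mathrm{Ch}_\phi$ by ``unfolding'': given $(W^{\bar 0} \rightleftarrows W^{\bar 1})$, set $G(W)^{2k} = W^{\bar 0}$ and $G(W)^{2k+1} = W^{\bar 1}$ for all $k \in \mathbb{Z}$, with differential given in each degree by the appropriate $\mathbb{Z}/2$-graded component of the differential on $W$, and with $\phi$ being the tautological identification shifting the grading index by one.

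The verification that $F$ and $G$ are mutually quasi-inverse is then a straightforward unwinding of definitions: $F \circ G$ is manifestly the identity on $\mathrm{Ch}^{\mathbb{Z}/2}$, and for $G \circ F$ one has a canonical natural isomorphism $(G \circ F)(V)^n \cong V^n$ given in even degrees $n = 2k$ by $\phi^{-k}$ and in odd degrees $n = 2k+1$ by $\phi^{-k}$, which is well-defined precisely because $\phi$ is invertible. Compatibility with the differentials and with the automorphism $\phi$ is straightforward.

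There is no real obstacle here; the only thing requiring care is to keep track of the $\mathbb{Z}/2$-grading conventions and the sign rules for the two components of the folded differential, and to note that invertibility of $\phi$ is exactly what lets the ``unfolding'' functor be essentially surjective (without invertibility one would only get a $\mathbb{Z}_{\geq 0}$-graded, or at best $\mathbb{Z}$-graded but non-periodic, complex). In the body of the paper I expect this lemma to be used to conclude that the module categories appearing after inverting the degree $2$ parameter $t$ are naturally $\mathbb{Z}/2$-graded, so that the subsequent identification with Arinkin--Gaitsgory's singular support categories (which are $\mathbb{Z}/2$-periodic) makes sense.
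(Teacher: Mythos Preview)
Your proposal is correct and follows essentially the same approach as the paper, which simply observes that a $\mathbb{Z}/2$-graded complex is the same thing as a $2$-periodic complex, which in turn is the same as a complex with an invertible degree $2$ automorphism. You have spelled out explicitly the folding/unfolding functors that realize this, whereas the paper leaves this as a one-line remark; your anticipated application to the $\mathbb{C}[t,t^{-1}]$-linear setting is also exactly how the lemma is used.
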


\begin{proof}
A $\bb Z/2$-graded cochain complex is equivalent to a 2-periodic complex, which in turn is equivalent to a complex with an invertible degree 2 automorphism.
\end{proof}

Using this, any dg category enriched over $\bb C[t,t^{-1}]$ can be canonically identified with a $\bb Z/2$-graded category, since one can think of $t$ as giving an invertible degree 2 automorphism. In particular, in the context of topologically twisted field theory where our categories of twisted boundary conditions are naturally enriched over $\bb C[t,t^{-1}]$, we can apply the lemma to obtain a $\bb Z/2$-graded category. 

\begin{definition} \label{twist_restriction_def}
The \emph{restricted category of twisted boundary conditions} $\mc B[t,t^{-1}]_{v}$ with respect to a closed point $v \in \mc V^{\mr{ref}}$ is the restriction of $\mc B[t,t^{-1}]$ at $v$ with respect to the composition $\mc V^{\mr{ref}}\to H^\bullet(\obsq_t(B^4) )\to \mc B[t,t^{-1}]$ via the embedding $\mc V^{\mr{ref}}\to H^\bullet(\obsq_t(B^4) )$ discussed in Remark \ref{conical_t_remark}. Then the \emph{category $\mc B_v$ of boundary conditions compatible with $v\in \mc V^{\mr{ref}}$} is defined to be the $\ZZ/2$-graded category associated to $\mc B[t,t^{-1}]_{v}$ using the above discussion.
\end{definition}

This definition deserves a few remarks.  The first remark is aimed at readers approaching this paper from a string-theoretic background, and can safely be ignored by readers with different interests.

\begin{remark} \label{IIB_string_remark}
One perspective on supersymmetric 4d $\mc N=4$ gauge theory with gauge group $\mr{U}_n$ is as the worldsheet theory of $n$ D3 branes in type IIB string theory, with the choice of vacuum arising as the position of these branes in the remaining six dimensions. After twisting, there are two  directions left which the branes to move in: those directions which are invariant for the topological supercharge.  This twisted theory on a complex surface $X$ with group $\GL_n$ corresponds to the worldsheet theory of $n$ D3 branes wrapping $X$ in a certain twist of type IIB string theory on the background $T^*X \times\bb C$ (for the notion of a twist of a supergravity theory, one should refer to Costello and Li \cite{CostelloLiSUGRA}) and the two invariant directions are the two real directions corresponding to the factor of $\bb C$. However, the residual R-symmetry $\CC^\times$ acts non-trivially by rotating this complex plane $\CC$.  As such taking the full R-symmetry invariants would freeze any possible motions of D3 branes, which is unnatural from the point of view of string theory. The R-symmetry $\CC^\times$ is the unique copy of $\CC^\times$ which makes the theory as $\bb Z$-graded theory, as tautologically it is responsible for cohomological grading of the theory.

In sum, as long as we only care about one particular theory, which corresponds to looking at a given configuration of D3 branes, there is a unique way of making things $\bb Z$-graded using the original $\CC^\times$-action, but once we start to consider several different theories on an equal footing by moving D branes around, which is inevitable from the perspective of string theory, we should work with $\bb Z/2$-graded theories.
\end{remark}

\begin{remark}
On the other hand the geometric Langlands program (and indeed derived algebraic geometry as a whole) is developed centered around $\bb Z$-graded objects. There are two complementary solutions we take. One is that for a fixed theory, one can just work with $\mc V$ rather than $\mc V^{\mr{ref}}$ to obtain a $\ZZ$-graded theory. For instance, if we work with $0\in \hh^*[2]/W$, this would define a well-defined dg category, and this is what we are going to do to recover the category Arinkin and Gaitsgory introduced for the geometric Langlands correspondence. Another is just to work with the dg category $\mc B[t,t^{-1}]_v$ for $v\in \mc V^{\mr{ref}}$ throughout.
\end{remark}

We'll conclude this section with the key result describing the category of boundary conditions compatible with an arbitrary vacuum, and in particular recovering Arinkin and Gaitsgory's categories of sheaves with nilpotent singular support from the compatibility with the zero vacuum state.  In order to state this we'll first introduce a piece of notation.

\begin{definition}
We write $\mr{Arth}^v_G(C) \to \mr{Arth}_G(C)$ for the pullback $\mr{Arth}_G(C) \times_{\gg^*/G} \tilde{v}$, where $\tilde v$ is the image in $\gg^*/G$ of the pullback of $v \in \hh^*/W$ under the Chevalley map $\gg^* \to \hh^*/W$ (so in particular if $v=0$ we recover the global nilpotent cone mod $G$)\footnote{If $G$ is semisimple we can define $\wt v$ as the pullback of $v$ under $\gg^*/G \to \hh^*/W$, but this map doesn't quite make sense if $G$ is abelian.}. 
\end{definition}

\begin{remark}
We'll see in the next section that the subspace $\mr{Arth}^v_G(C)$ is not so geometrically mysterious.  It's supported over the locus in $\flat_G(C)$ consisting of $G$-bundles that break to the Levi subgroup $L$ centralizing $v$.  For example, in the generic case where $v$ is a \emph{regular} element in $\hh^*/W$, the only flat $G$-bundle admitting a flat section conjugate to $v$ is the trivial bundle (that is, the unique completely reducible flat bundle, where the gauge symmetry breaks to the maximal torus $H$).  Over this locus, $\mr{Arth}^v_G(C)$ is just a copy of the global nilpotent cone for $L$ translated by the constant section $v$, viewing $v$ as an element of the center of the Lie algebra of $L$.
\end{remark}

\begin{theorem} \label{main_theorem}
The restriction $\IC(\flat_G(C))[t,t^{-1}]_v$ of the category of twisted boundary conditions to the refined vacuum $v \in \hh^*/W$ is equivalent to the category $\IC_{\arth^v_G(C)[t,t^{-1}]}(\flat_G(C))[t,t^{-1}]$.  Moreover, the restriction $\IC(\flat_G(C))_0$ of the category of boundary condition at the vacuum $0 \in \hh^*[2]/W$ (in the $\ZZ$-graded sense) is equivalent to the category $\IC_{\mc N_G}(\flat_G(C))$ of sheaves with nilpotent singular support. 
\end{theorem}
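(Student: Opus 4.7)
The plan is to apply Proposition \ref{SS_complete_intersection_prop} to the global complete intersection presentation of $\flat_G(C)$. Recall from Section \ref{geometry_of_flat_section} the pullback square
\[
\xymatrix{
\flat_G(C) \ar[r] \ar[d] & \flat_G^{\mr{RS}}(C) \ar[d] \\
BG \ar@<0.5ex>[r] & \gg/G \ar@<0.5ex>[l]
}
\]
obtained after fixing the point $x \in C$, where $BG \to \gg/G$ is the zero section and $\flat_G^{\mr{RS}}(C) \to \gg/G$ is smooth in a Zariski neighborhood of $\flat_G(C)$. In the notation of Section \ref{global_complete_intersection_section}, we have $\mc X = BG$, $\mc V = \gg/G$, $\mc U = \flat_G^{\mr{RS}}(C)$, and the associated groupoid $\mc G_{\mc X/\mc V} = BG \times_{\gg/G} BG$ is exactly the spectral Hecke stack $\flat_G(\bb B)$. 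The vector bundle underlying $\mc V \to \mc X$ is the adjoint bundle $\gg$ over $BG$, so $V^* \iso \gg^*/G$.

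First I would verify that the natural embedding $f \colon \sing(\flat_G(C)) = \arth_G(C) \inj V^* \times_{\mc X} \mc U \iso \gg^*/G \times_{BG} \flat_G^{\mr{RS}}(C)$ sends an Arthur parameter $(P,\nabla,\phi)$ to the pair $(\phi(x),(P,\nabla))$, where $\phi$ is a flat section of the coadjoint bundle $\gg^*_P$. For $v \in \hh^*/W$, let $\wt Y \sub \gg^*/G$ denote the preimage of $\{v\}$ under the Chevalley map $\gg^*/G \to \hh^*/W$, equivalently the $G$-saturation of a semisimple lift $\tilde v$. Since $\phi$ is flat, the condition $\phi(x) \in \wt Y$ is independent of the chosen point $x$, and a direct check from the definition of $\arth^v_G(C)$ gives
\[
f^{-1}\bigl(\wt Y \times_{BG} \flat_G^{\mr{RS}}(C)\bigr) = \arth^v_G(C);
\]
in the special case $v = 0$ this recovers $\mc N_G$ since $\wt Y$ is then the nilpotent cone.

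Next I would match the action of $\OO(\hh^*[2]/W)(\!(t)\!)$ used in Definition \ref{twist_restriction_def} with the action arising from Proposition \ref{SS_complete_intersection_prop}. By construction both are derived from the natural map $\OO(\hh^*[2]/W) \iso \OO(\gg^*[2]/G)^W \inj \eend_{\LL}(\delta) \iso \OO(\gg^*[2]/G)$ acting on $\IC(\flat_G(C))$ through the monoidal action of $\LL = \IC(\flat_G(\bb B))$; Proposition \ref{SS_complete_intersection_prop} rephrases this via the Koszul-dual identification $\IC(\mc G_{\mc X/\mc V}) \iso \mr{HC}(BG/(\gg/G))^{\mr{op}}\text{-mod}$, and one computes $\mr{HC}(BG/(\gg/G)) \iso \OO(\gg^*[2]/G)$ as $\bb E_2$-algebras. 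Granted this compatibility, Proposition \ref{SS_complete_intersection_prop} immediately yields
\[
\IC(\flat_G(C))(\!(t)\!)_v \iso \IC_{\arth^v_G(C)(\!(t)\!)}(\flat_G(C))(\!(t)\!),
\]
which is the first assertion. For the second assertion, the defining ideal of $0 \in \hh^*/W$ is generated by the fundamental $W$-invariants of $\hh^*$, which are homogeneous for the R-symmetry $\CC^\times$; as discussed in Remark \ref{conical_t_remark}, such a conical condition yields a well-defined $\ZZ$-graded restriction obtained by setting $t=1$, and combining with the identification $\arth^0_G(C) = \mc N_G$ one recovers $\IC_{\mc N_G}(\flat_G(C))$.

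The main obstacle is the action-compatibility step in the previous paragraph. Both the spectral Hecke action on $\IC(\flat_G(C))$ and the setup of Proposition \ref{SS_complete_intersection_prop} are natural, but the two candidate $\bb E_2$-algebra maps $\OO(\gg^*[2]/G) \to \mr{HC}^\bullet(\IC(\flat_G(C)))$ must be shown to agree canonically rather than differing by a Weyl group action or Koszul-dual twist. Verifying this requires careful tracking of the Koszul duality isomorphism $\IC(\mc G_{BG/(\gg/G)}) \iso \mr{HC}(BG/(\gg/G))^{\mr{op}}\text{-mod}$ and its compatibility with the natural pullback functor $\IC(\mc G_{BG/(\gg/G)}) \otimes_{\QC(BG)} \QC(\flat_G^{\mr{RS}}(C)) \to \IC(\mc G_{\flat_G(C)/\flat_G^{\mr{RS}}(C)})$; the remaining geometric steps, while notationally intensive, are then essentially automatic.
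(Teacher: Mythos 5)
Your proposal is correct and follows essentially the same approach as the paper: present $\flat_G(C)$ as a global complete intersection over $BG \to \gg/G$ with $\mc U = \flat_G^{\mr{RS}}(C)$, apply Proposition \ref{SS_complete_intersection_prop} with $\wt Y = \chi^{-1}(v)$ for the Chevalley map $\chi\colon \gg^*/G \to \hh^*/W$, and use the spectral Hecke action to identify the $\OO(\hh^*[2]/W)(\!(t)\!)$-module structure. The ``action-compatibility obstacle'' you flag is exactly what the commutativity of the triangle (\ref{triangle_diagram}) established in Section \ref{global_complete_intersection_section} is designed to resolve, so once that proposition is granted (as you do) there is no remaining gap; you in fact make the identification $f^{-1}(\wt Y \times_{BG} \mc U) = \arth^v_G(C)$ slightly more explicit than the paper does.
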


\begin{proof}
We've set things up so that we can apply Proposition \ref{SS_complete_intersection_prop}.  On one side of the claimed equivalence we have the restricted category of twisted boundary conditions:
\[\IC(\flat_G(C))[t,t^{-1}]_v = \IC(\flat_G(C))[t,t^{-1}] \otimes_{\OO(\hh^*/W)\text{-mod}} \OO(\hh^*/W)\text{-mod}_{v}.\]
Here, the action of the algebra $\OO(\hh^*/W)$ is defined using the action of the spectral Hecke category.  That is, it's defined through the embedding of $\OO(\hh^*/W)$ into $\OO(\hh^*[2]/W)[t,t^{-1}]$ as explained in Remark \ref{conical_t_remark}, where $\OO(\hh^*[2]/W)$ is identified with the endomorphism algebra for the unit object of the spectral Hecke category by Lemma \ref{endomorphism_unit_hecke_lemma}, (and where $t$ acts by multiplication).  In particular we can make the identification
\begin{align*}
&\IC(\flat_G(C))[t,t^{-1}] \otimes_{\OO(\hh^*/W)\text{-mod}} \OO(\hh^*/W)\text{-mod}_{v} \\
&\iso \IC(\flat_G(C))[t,t^{-1}]   \otimes_{\OO(\hh^*[2]/W)[t,t^{-1}]\text{-mod}} \OO(\hh^*[2]/W)[t,t^{-1}]\text{-mod} _{v[t,t^{-1}]}.
\end{align*}

On the other side of the equivalence we have the category of sheaves with a singular support condition.  In order to apply Proposition \ref{SS_complete_intersection_prop} we identify $\flat_G(C)$ as a global complete intersection stack as we explained in Section \ref{flat_G_section}.  Explicitly, in the setup for Proposition \ref{SS_complete_intersection_prop}
\[\xymatrix{
\mc Z \ar[r] \ar[d] &\mc U \ar[d]^*+<4em>{^{=}}  &\flat_G(C) \ar[r] \ar[d] &\flat^{\mr{RS}}(C) \ar[d] \\
\mc X\ar@<0.6ex>[r]&\mc V \ar@<0.6ex>[l] &BG \ar@<1ex>[r]&\mc \gg/G. \ar@<1ex>[l]
}\]
In the notation of the Proposition, the groupoid $\mc G_{\mc X/\mc V}$ is identified with the spectral Hecke stack $\mr{Hecke}_G^{\mr{spec}}$, the category $\mr{HC}^\bullet(\mc X / \mc V)[t,t^{-1}]^{\mr{op}}\text{-mod}$ is the spectral Hecke category -- isomorphic to ${\OO(\hh^*[2]/W)[t,t^{-1}]\text{-mod}}$.  Now, we'll start with the category of twisted boundary conditions.  By the above dicussion we can identify the restriction as a tensor product over the category $\mr{HC}^\bullet(\mc X / \mc V)[t,t^{-1}]^{\mr{op}}\text{-mod}$.  That is, so far we've established that
\begin{align*}
 \IC(\flat_G(C))[t,t^{-1}]_v &= \IC(\flat_G(C))[t,t^{-1}] \otimes_{\OO(\hh^*/W)\text{-mod}} \OO(\hh^*/W)\text{-mod}_{v}\\
& \iso \IC(\flat_G(C))[t,t^{-1}]   \otimes_{\OO(\hh^*[2]/W)[t,t^{-1}]\text{-mod}} \OO(\hh^*[2]/W)[t,t^{-1}]\text{-mod} _{v[t,t^{-1}]}\\
& \iso \IC(\flat_G(C))[t,t^{-1}]   \otimes_{\mr{HC}^{\bullet}(\mr{Hecke}_G^{\mr{spec}} )[t,t^{-1}]\text{-mod}} \mr{HC}^{\bullet}(\mr{Hecke}_G^{\mr{spec}})[t,t^{-1}]\text{-mod}_{\chi^{-1}(v)[t,t^{-1}]}.
\end{align*}
In the last step, $\chi$ is the Chevalley map $\gg^*/G \to \hh^*/W$, and we're using the isomorphism $\OO(\gg^*/G) \iso \OO(\hh^*/W)$ identifying the ideal associated to the closed subset $\chi^{-1}(v)$ inside of $V = \gg^*/G$ with the ideal associated to the closed subset $v$ inside of $\hh^*/W$.

We're now in the situation where Proposition \ref{SS_complete_intersection_prop} can be applied.  We use support conditions corresponding the subset $\wt Y = \chi^{-1}(v)$ inside of $V = \gg^*/G$, so that $Y = \mr{Arth}_G^v(C)$.  This yields the equivalence
\begin{align*}
 &\IC(\flat_G(C))[t,t^{-1}]   \otimes_{\mr{HC}^{\bullet}(\mr{Hecke}_G^{\mr{spec}} )[t,t^{-1}]\text{-mod}} \mr{HC}^{\bullet}(\mr{Hecke}_G^{\mr{spec}})[t,t^{-1}]\text{-mod}_{\chi^{-1}(v)[t,t^{-1}]} \\
 &\quad \iso \IC_{ \mr{Arth}^v_G[t,t^{-1}] } ( \flat_G(C) )[t,t^{-1}],
\end{align*}
and so
\[\IC(\flat_G(C))[t,t^{-1}]_v \iso \IC_{ \mr{Arth}^v_G[t,t^{-1}] } ( \flat_G(C) )[t,t^{-1}],\]
as required.  In the case of $v=0\in \hh^*[2]/W$, by considering the natural map $\OO(\hh^*[2]/W)\to \mr{HC}^\bullet( \Flat_G(C) )$, we obtain a dg category $\IC_{\mc N_G}(\flat_G(C))$ as required. 
\end{proof}

\begin{remark}\label{rmk:supportandvacuum}
The fact that nilpotent singular support condition arises from compatibility with the point $0\in \hh^*/W$ was known to Ben-Zvi and Nadler up to an even grading shift in the Betti context \cite[Remark 3.16]{BZNBettiLanglands}. 
\end{remark}

\begin{remark}\label{rmk:A-twist}
What can we say for the A-twisted theory?  All our definitions, and many of our calculations still make sense in that context; for instance one can still check that the algebra of local observables in the A-twisted theory with gauge group $G$ is equivalent to $\OO(\hh[2]/W)$ (this is easiest to see as the endomorphism algebra of the unit in the category $\LL$ of line operators, which is the spherical Hecke category $D_{G[[z]]}(\Gr_G)$ of equivariant D-modules on the affine Grassmannian).  The degree shifting trick still makes sense, and one can still abstractly define the category of boundary conditions restricted to a point in $\hh/W$.  However, there's now a technical issue that one must deal with: the ambient category $\mc B$ one works in on the A-side -- the category of boundary conditions without imposing a vacuum condition -- should be equivalent (under geometric Langlands) to the \emph{full} category of ind-coherent sheaves on the B-side.  In particular it should be strictly larger than the category of D-modules on $\bun_G(C)$.  \vspace{-0.5em}

In the local story -- that of geometric Satake -- the category one considers is the ind-completion of the category of equivariant D-modules on $\Gr_G$ which are coherent after forgetting the equivariant structure (see Bezrukavnikov--Finkelberg \cite{BezrukavnikovFinkelberg}).  In Arinkin--Gaitsgory this is called the ``renormalized'' spherical Hecke category.  There are several ways one might try to globalize this definition, and it's not clear which globalization is correct.  \vspace{-0.5em}

Another difficulty is that, whatever choice one makes for this renormalization, even for ordinary D-modules there isn't a nice way of accessing the Hochschild cohomology $\mr{HH}^\bullet(D(\bun_G(C)))$ as there was on the B-side, where we had access to the Hochchild cohomology via the scheme of singularities (at least if $G$ is non-abelian -- in the abelian case one can use a result of Koppensteiner \cite{Koppensteiner}).  One can abstractly define the category restricted to a vacuum using the Hecke action, but actually computing it is more complicated without this explicit description.  We hope in the future to make some sensible calculations at least for the vacuum $0 \in \hh^*/W$, where we should recover the usual category of D-modules on $\bun_G(C)$ as the restriction.

In this direction, since the first version of this paper was released, Beraldo has developed a theory of singular support through ``$\bb H(\mc Y)$-module categories'' \cite{Beraldo1, Beraldo2}, where $\bb H(\mc Y)$ for a quasi-smooth derived stack $\mc Y$ is a monoidal dg category that acts as a refined version of the category of modules for Hochschild cochains.  The geometric Satake equivalence will make a version of the category $D^{\mr{ren}}(\bun_G(C))$ of renormalized $D$-modules on $\bun_G(C)$ into a $\bb H(\flat_{G^\vee}(C))$-module, and therefore allow one to make sense of singular support conditions on the A-side of geometric Langlands.  It's possible that one could pursue a field-theoretic understanding of singular support conditions on the A-side using these techniques.
\end{remark}

\section{Consequences}\label{consequences_section}
\subsection{Gauge Symmetry Breaking Conjecture} \label{sym_breaking_section}

We've argued that requiring compatibility with the vacuum $0 \in \hh^*[2]/W$ in the B-twisted $\mc N=4$ theory naturally gives us the category $\IC_{\mc N_G}(\flat_G(C))$ considered by Arinkin and Gaitsgory and described the category of boundary conditions compatible with arbitrary vacuum $v\in \hh^*/W$. However, the latter category is less geometric, possibly involving some $\ZZ/2$-graded derived algebraic geometry, and it is natural to ask to find a more geometric description of the category in terms of $\bb Z$-graded derived algebraic geometry.

We conjecture that there is a clean answer in terms of the usual objects appearing in the geometric Langlands theory, but with gauge symmetry broken to a Levi subgroup depending on $v$.

\begin{conjecture} \label{gauge_sym_breaking_conjecture}
The category $\IC_{\arth^v_G(C)[t,t^{-1}]}(\flat_G(C))[t,t^{-1}]$ obtained by restricting $\IC(\flat_G(C))[t,t^{-1}]$ at the vacuum $v \in \hh^*/W$ is equivalent to $\IC_{\mc N_L}(\flat_L(C))[t,t^{-1}]$, where $L$ is the centralizer of its semisimple lift $x$ in $\gg^*$.
\end{conjecture}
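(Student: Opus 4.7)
The plan is to build the equivalence in two stages: first a geometric equivalence of the singular-support loci $\arth^v_G(C) \simeq \mc N_L(C)$, and then a categorical lift via the spectral Eisenstein functor. For the geometric stage, I would use Jordan decomposition. A point of $\arth^v_G(C)$ over $\spec R$ consists of a flat $G$-bundle $(P, \nabla)$ together with a flat section $\phi$ of $\gg^*_P$ which is pointwise conjugate to a semisimple lift $x$ of $v$. Writing $\phi = \phi_s + \phi_n$ for the (flatness-preserving) Jordan decomposition, the stabilizer of $\phi_s$ gives a canonical reduction of $P$ to an $L$-bundle $P_L$ with $L = Z_G(x)$, while $\phi_n$ -- which commutes with $\phi_s$ and hence takes values in $\mathfrak{l}^*$ -- becomes a flat nilpotent section of $\mathfrak{l}^*_{P_L}$, exhibiting a point of $\mc N_L(C)$. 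The inverse sends $(P_L, \nabla_L, \phi_n)$ to $(P_L \times^L G, \nabla, x + \phi_n)$, with $x$ viewed as a constant central section of $\mathfrak{l}^* \subset \gg^*$; the flat section $x + \phi_n$ is pointwise conjugate to $x$ precisely because $[x,\phi_n]=0$.

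This equivalence of stacks is compatible with the spectral Eisenstein diagram
\[
\flat_L(C) \xleftarrow{p} \flat_P(C) \xrightarrow{q} \flat_G(C)
\]
for any parabolic $P$ with Levi quotient $L$. Combining this with Proposition \ref{SS_complete_intersection_prop} (applied with $\mc X = BG$, $\mc V = \gg/G$, $\mc Z = \flat_G(C)$, and analogously for $L$), both sides of the conjecture admit parallel descriptions as restrictions of Hecke-category actions along the equivalent closed subsets $\chi^{-1}(v) \simeq \mc N_L/L$ in the respective Hochschild cohomology spectra. The proposed equivalence is then the spectral Eisenstein functor $\mr{Eis}_P^{\mr{spec}} = q_* p^!$ restricted to these subcategories; the $(\!(t)\!)$-version follows automatically since the formal parameter simply tensors through.

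The main obstacle is proving that this restricted Eisenstein functor is genuinely an equivalence. The unrestricted $\mr{Eis}_P^{\mr{spec}}$ is far from fully faithful, so the argument must rest on the singular-support constraints on both sides cutting out precisely the subcategories on which invertibility holds. Two natural strategies present themselves: (i) construct an explicit inverse using the constant-term functor and verify compatibility with the support conditions, or (ii) reduce to a local derived-Satake statement -- an equivalence between the $G$-spherical Hecke category restricted to the Chevalley fiber over $v \in \hh^*/W$ and the $L$-spherical Hecke category restricted to $0 \in \mathfrak{l}^*/L$ -- and then globalize using the complete-intersection machinery of Section \ref{global_complete_intersection_section}. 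Either path requires technical arguments extending beyond the methods developed in the present paper, which is why the statement is offered here as a conjecture.
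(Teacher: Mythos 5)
The statement is a conjecture in the paper; the authors do not offer a complete proof, and you correctly end by acknowledging a gap, so the right comparison is between your plan and the paper's evidence.

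Your Jordan-decomposition argument for the stack equivalence $\arth^v_G(C) \simeq \mc N_L$ is close in content to what the paper proves via the orbit formula $H \cdot Y \iso H \times Y / H_Y$ applied to a framed moduli scheme (one small imprecision: a point of $\arth^v_G(C)$ has $\phi$ whose \emph{semisimple part} is conjugate to the lift $x$, not $\phi$ itself). But the paper goes further: it proves that the \emph{formal completions} $\mr{Arth}_G(C)^{\wedge}_{\mr{Arth}^v_G(C)}$ and $\mr{Arth}_L(C)^{\wedge}_{\mc N_L}$ coincide, by identifying the relative $(-1)$-shifted tangent complexes of $T^*[-1]\flat_G(C)$ and $T^*[-1]\flat_L(C)$ at matching points. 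That is the operative piece of evidence, since singular support conditions see the formal neighborhood of the support locus inside the scheme of singularities rather than just its reduced subscheme, and your geometric step stops at the reduced level.

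Your proposal to realize the equivalence by the spectral Eisenstein functor $\mr{Eis}^{\mr{spec}}_P$ is genuinely different from what the paper offers. The paper is deliberately noncommittal about the functor; its only statement tying the conjecture to Eisenstein series is the final Corollary of the section, which is proved \emph{assuming} the conjecture and asserts compatibility with $\mr{Eis}^{\mr{spec}}_P$ for an \emph{arbitrary} parabolic $P$ whose Levi may differ from the centralizer $L$ of $v$. Using $\mr{Eis}^{\mr{spec}}_P$ as the equivalence itself introduces a dependence on a choice of parabolic with Levi $L$ (the central element $v \in \mathfrak{l}$ does pick out a distinguished one via nonnegative $\mr{ad}(v)$-eigenspaces, but canonicity needs checking), and more pressingly requires showing that the restricted functor is an equivalence, which is exactly the hard step you acknowledge.

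Finally, your plan omits the one case the paper actually proves, namely $C = \bb P^1$: since $\flat_G(\bb P^1) \simeq \gg[-1]/G$, Koszul duality identifies $\IC(\flat_G(\bb P^1))$ with $\QC(\gg^*[2]/G)$, singular support becomes ordinary support, and the conjecture reduces to Corollary \ref{support_corollary} about the coadjoint quotients. Any general attempt should at least recover this, so checking whether your Eisenstein-functor proposal matches the Koszul-dual equivalence on $\bb P^1$ would be a useful sanity test.
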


\begin{remark}
Here $\IC_{\mc N_L}(\flat_L(C))[t,t^{-1}]$ can be specialized to $\IC_{\mc N_L}(\flat_L(C))$ as a dg category with a geometric object $\mc N_L$, as one obtains the actual dg category with nilpotent singular support condition, whereas there's no immediate way to extract such a geometric object from $\IC_{\arth^v_G(C)[t,t^{-1}]}(\flat_G(C))[t,t^{-1}]$. In this sense, the conjecture in particular tells us a way to find the restriction of the original category $\mc B$ at the refined vacuum $v$ as a dg category. Note that this implies that the conjectural equivalence will not generally be $\CC[t,t^{-1}]$-linear.
\end{remark}

The idea of symmetry breaking according to the stratification of $\hh^*/W$ was previously described in classical $\mc N=4$ gauge theory in work of Balasubramanian \cite{Balasubramanian} in terms of the Higgs mechanism for the $\gg_{\RR}$-valued scalars in the $\mc N=4$ vector multiplet (although rather than explicitly considering a twisted $\mc N=4$ theory he considers the untwisted theory in a certain subspace of the full space $\gg^3$ of untwisted vacua -- the ``aligned'' vacua).  Balasumramanian also discussed the geometry of this stratification and its preimage in the dual Lie algebra $\gg^*$, and the correspondence between these strata under Langlands duality.  It would be interesting to understand the precise connection between those results and the present work.

\begin{remark}
The situation for geometric Langlands categories -- where the full category of boundary conditions ``decomposes'' according to parameters in the moduli space of vacua -- is comparable to other related stories in geometric representation theory.  For instance, the character theory of Ben-Zvi, Gunningham, and Nadler with gauge group $G$ \cite{BZNCharacter, BZGN} (in particular, see Section 1.4.2 in \cite{BZNCharacter}) lies in a family parameterized by points in the quotient stack $H/W$.  The Hecke categories that these TQFTs assign to the circle arise as a family depending on this parameter (according to the arguments in \cite{BZNLoopSpaces2}) by a twisting procedure essentially equivalent to the twisting procedure we discussed in Section \ref{twisting_section}.
\end{remark}

\begin{example}
If the semisimple lift of $v \in \hh^*/W$ is a central element in $\gg^*$, e.g., $v=0$, then the singular support of a sheaf compatible with the vacuum $v$ will now be contained in the derived pullback $ \arth^v_G(C)$ which is isomorphic to the global nilpotent cone $\mc N_G$ by addition of the constant section of the coadjoint bundle. This is a plausible evidence for the conjecture in this simple case.
\end{example}

There's a nice point of view motivating this conjecture coming from string theory in the case where $G = \GL_n$ (this perspective is not essential; for readers not interested in the connection to string theory it is safe to ignore this brief discussion).  For the gauge group $\GL_n$ we can view the space $\hh^*/W$ of vacua as the configuration space of $n$ points in $\CC$. As mentioned in Remark \ref{IIB_string_remark}, a string-theoretic origin for B-twisted $\mc N=4$ supersymmetric gauge theory for $\GL_n$ is as the effective theory of $n$ coincident D3 branes in a certain twist of type IIB string theory, with two direction left for D3 branes to move in.

Now, we can apply a version of the Higgs mechanism to this setup.  Turning on a vacuum expectation value in $\hh^*/W$ corresponds to moving these $n$ D3 branes away from the origin in this complex plane, which in particular breaks the gauge symmetry group to a Levi subgroup (which one depends on which branes are still coincident after this motion). Hence we expect that motion away from 0 in the moduli of vacua corresponds to this gauge symmetry breaking, motivating our conjecture at the level of categories of boundary conditions. In sum, one can say that both sides try to describe the situation for the same configuration of D branes, but $\IC_{\arth^v_G(C)[t,t^{-1}]}(\flat_G(C))[t,t^{-1}]$ is from the point of view of moving branes from the original configuration, where one cannot geometrically capture the $\ZZ$-grading any more, and $\IC_{\mc N_L}(\flat_L(C))[t,t^{-1}]$ is from the point of view of static branes, where one thinks of the configuration as defining a single particular theory for which one can find a $\ZZ$-grading.

From a different perspective, let's explain some mathematical evidence for this claim. To begin with, we'll explain what we can say for sure about the category restricted to $v$.  By Theorem \ref{main_theorem}, we know that the restriction of the category  $\mc B[t,t^{-1}] = \IC(\flat_G(C))[t,t^{-1}]$ of twisted boundary conditions at the vacuum $v$ is equivalent to the category $\IC_{\mr{Arth}^v_G(C)[t,t^{-1}]}(\flat_G(C))[t,t^{-1}]$ of sheaves with singular support in $\mr{Arth}^v_G(C)$.  What isn't yet known is whether this category is equivalent to the category of sheaves with singular support in $\mc N_L$ (after adjoining $t$).  We have some geometric evidence for this conjecture however.  We know that singular support is defined as support within the space of singularities, in this case $\mr{Arth}_G(C)$.  In fact the loci $\mr{Arth}^v_G(C)$ and $\mc N_L$ inside the space $\mr{Arth}_G(C)$ are not just equivalent, but so are their formal neighborhoods.

\begin{prop}
The formal completions $\mr{Arth}_G(C)^\wedge_{\mr{Arth}^v_G(C)}$ and $\mr{Arth}_L(C)^\wedge_{\mc N_L}$ are equivalent, where $L$ is the centralizer of a semisimple lift $v$ in $\gg^*$ (we won't distinguish $v$ and its lift notationally).
\end{prop}

\begin{proof}
We use a result from derived deformation theory (see Gaitsgory--Rozenblyum {\cite[IV.1 1.6.4, IV.3 3.1.4]{GRvol2}}) which identifies pointed formal moduli problems over a fixed base stack with sheaves of dg Lie algebras on that stack.  Concretely, if $\pi: \, \mc Y \leftrightarrows \mc X:\,\sigma$ is a pointed formal moduli problem over $\mc X$, the associated sheaf of dg Lie algebras over $\mc X$ is the relative restricted shifted tangent complex $\sigma^*\bb T_{\mc Y/\mc X}[-1]$.  

We need to show that there exists an identification $\mc N_L \iso \mr{Arth}^v_G(C)$ so that the two formal completions can be compared.  Using the fact that a central element $v$ in the Lie algebra $\mf l$ of $L$ (identifying $\mf l$ with $\mf l^*$ using an invariant pairing) gives a section of the map $\arth_L(C)\to \Flat_L(C)$, there is a canonical isomorphism $\mr{Arth}_L^v(C) \iso \mc N_L$, by translating the Arthur parameter by $v$.

Moreover, if $L$ is the centralizer of $v$ then the map $\mr{Arth}_L^v(C) \to \mr{Arth}_G^v(C)$ is an equivalence. For this, we observe that for a group scheme $H$ acting on a derived stack $X$ and the orbit $H \cdot Y$ of a substack $Y \inj X$, there is a natural isomorphism
\[H \cdot Y \iso H \times Y / H_Y\]
where $H_Y$ is the stabilizer of $Y$ in $H$, and where $H_Y$ acts on $H \times Y$ by $h \cdot (g,y) = (gh^{-1}, h \cdot y)$. Now by choosing a point $p \in C $, we view $\arth^v_G(C)$ as the quotient $\arth^{v,p}_G(C)/G$, where $\arth^{v,p}_G(C)$ is the derived scheme of Arthur parameters equipped with a framing at the point $p$.  This is directly analogous to the presentation of moduli stacks of Higgs bundles and flat bundles as global quotients due to Simpson \cite[Theorems 9.6 and 9.10]{SimpsonModuliII}. Using the observation with $X = \arth_G^{p}$, $Y = \arth_L^{v,p}$, and $H=G$, so that $H \cdot Y = \arth_G^{v,p}$ holds, we obtain the equivalence we want as follows:
\[\arth^{v,p}_G(C)/G \iso L \backslash (\arth^{v,p}_L(C) \times G)/G \iso L \backslash \arth^{v,p}_L(C).\]

Now it remains to compare the $(-1)$-shifted relative tangent complexes. It suffices to work with the full spaces $T^*[-1] \Flat_G(C)$ and $T^*[-1]\Flat_L(C)$ instead of the Arthur stacks. First of all, the shifted tangent complex at a closed point $(P,\nabla,\phi) \in T^*[-1]\flat_G(C)$ is equivalent to
\[\Omega^\bullet(C ; \mf m_P) \oplus \Omega^\bullet(C ; \mf m^*_P)[-1]\]
where $\mf m$ is the Lie algebra centralizing the value of the section $\phi$ at any point (independent of the choice of point by flatness). In particular, if $(P,\nabla,\phi) \in \mr{Arth}^v_G(C)$ (which should correspond to $ (P_L ,\nabla_L,\phi_L ) \in \mc N_L$) so that $P$ lands in the image of $\Flat_L(C)\to \Flat_G(C)$, then the shifted tangent complex at a point $(P,\nabla,\phi)\in T^*[-1] \Flat_G(C)$ and $ (P_L ,\nabla_L,\phi_L )\in T^*[-1]\Flat_L(C)$ coincide. This yields the desired equivalence of dg Lie algebras, which establishes the equivalence of formal completions.
\end{proof}

At a heuristic level therefore, where the singular support of an ind-coherent sheaf on $X$ is thought of in terms of the ordinary support of an associated object living over $\mr{Sing}(X)$, this suggests that categories of sheaves on $\flat_G(C)$ with singular support in $\mr{Arth}^v_G(C)$ and of sheaves on $\flat_L(C)$ with singular support in $\mc N_L$ should be closely related.

Indeed, one can directly prove Conjecture \ref{gauge_sym_breaking_conjecture} in the case of $C=\bb P^1$ using this proposition. Let us first note a simple consequence of the proposition.

\begin{corollary} \label{support_corollary}
There is an equivalence $\QC(\gg^*/G)_{\tilde{v}} \simeq \QC(\mathfrak{l}^*/L)_{\mc N_L}$ of categories of quasi-coherent sheaves on the coadjoint quotients for $\gg$ and $\mf l$ supported on the orbit $\tilde v$ and the nilpotent cone respectively.
\end{corollary}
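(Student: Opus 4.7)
The plan is to follow the same strategy as the preceding proposition, suitably adapted to the Lie-algebraic setting without the curve. First, I would reduce the claim to a statement about formal completions: for a quasi-smooth stack $X$ with closed substack $Z$, the category $\QC(X)_Z$ of quasi-coherent sheaves set-theoretically supported on $Z$ is equivalent to $\QC(X^\wedge_Z)$. So it suffices to produce an equivalence $(\gg^*/G)^\wedge_{\tilde v} \simeq (\mathfrak{l}^*/L)^\wedge_{\mc N_L}$ of formal completions, whence the desired equivalence of categories will follow.

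Next, I would identify the underlying substacks. By Jordan decomposition, any element of $\tilde v \subset \gg^*$ is $G$-conjugate to $v + n$ for some $n \in \mc N(\mf l)$, so $\tilde v = G \cdot (v + \mc N_L)$ as a subset of $\gg^*$. Applying the orbit formula $G \cdot Y \simeq G \times^{G_Y} Y$ from the proof of the proposition, with $Y = v + \mc N_L$ and stabilizer $G_Y = L$ (by definition of $L$ as the centralizer of $v$), gives $\tilde v / G \simeq (v + \mc N_L)/L$. Translation by the central element $v \in Z(\mf l)$ then identifies this with $\mc N_L / L$, so the classical substacks coincide.

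To promote this to an equivalence of formal completions, I would invoke derived deformation theory, as in the proof of the proposition: a pointed formal moduli problem over a base stack is equivalent data to its relative $(-1)$-shifted tangent complex, regarded as a sheaf of dg Lie algebras on the base. The key step — which I expect to be the main technical point — is to verify that the ambient tangent complexes agree pointwise. At $\phi = v + n \in v + \mc N_L$, these are the two-term complexes $[\gg \xrightarrow{\ad^*\phi} \gg^*]$ and $[\mf l \xrightarrow{\ad^*\phi} \mathfrak{l}^*]$ respectively. Since $v$ is central in $\mf l$ with centralizer $\mf l$ in all of $\gg$, the operator $\ad v$ restricts to an isomorphism on any $\mf l$-stable complement $\mf m$ of $\mf l$ in $\gg$, so that $\ad^* v$ identifies $\mf m$ with $\mf m^* \subset \gg^*$. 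Combined with the observation that $\ad^*(\mf m)\,n \subseteq \mf m^*$ as well (which follows from $n \in \mathfrak{l}^* = \mf m^\perp$ and $\mf l$-stability of $\mf m$), a direct computation shows that the two complexes have the same $H^{-1}$ (both equal to the centralizer $\mf l_n$ of $n$ in $\mf l$) and the same $H^0$ (both equal to $\mathfrak{l}^*/\ad^*(\mf l) n$).

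Finally, assembling these isomorphisms into a natural equivalence of sheaves of dg Lie algebras on $\tilde v / G \simeq \mc N_L/L$ and appealing to the Gaitsgory–Rozenblyum correspondence between pointed formal moduli problems and such sheaves (as used in the proof of the proposition), I obtain the required equivalence of formal completions, and hence by the first step the desired equivalence $\QC(\gg^*/G)_{\tilde v} \simeq \QC(\mathfrak{l}^*/L)_{\mc N_L}$.
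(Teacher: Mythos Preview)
Your proposal is correct and follows essentially the same approach as the paper, which states the corollary as a ``simple consequence'' of the preceding proposition and implicitly intends one to re-run that proposition's argument (formal completion, orbit identification, and comparison of relative tangent complexes via Gaitsgory--Rozenblyum) in the simpler Lie-algebraic setting without the curve.

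One small sharpening: merely matching $H^{-1}$ and $H^0$ pointwise does not by itself produce a quasi-isomorphism of sheaves of dg Lie algebras. What your computation actually shows is that the natural inclusion of two-term complexes $[\mf l \xrightarrow{\ad^*\phi} \mf l^*] \hookrightarrow [\gg \xrightarrow{\ad^*\phi} \gg^*]$ (using the splitting $\gg = \mf l \oplus \mf m$) is a chain map whose cokernel $[\mf m \xrightarrow{\ad^*\phi} \mf m^*]$ is acyclic: on $\mf m$ the operator $\ad^*\phi = \ad^*v + \ad^*n$ is the sum of an invertible operator and a commuting nilpotent one (since $[v,n]=0$ and $n$ is nilpotent in $\gg$), hence invertible. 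Phrasing it this way gives an honest map of complexes, compatible with the Lie structures, which is what the deformation-theoretic step needs.
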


To concretely see what is going on here, let's work out an example.

\begin{example}
Suppose $G=\GL_2$ so that we can write $v=(x_1,x_2)$.
\begin{itemize}
\item When $x_1 = x_2$, we have $L=G$ for which the corollary is a tautology.
\item When $x_1 \neq x_2 $, we have $L=H$, the maximal torus.  The corresponding orbit $\tilde{v}$ is a hyperboloid with a transitive action of $\GL_2$; as a stack this is isomorphic to $\{v\}/\stab(v) \iso BH$.  Likewise, the global nilpotent cone for the torus is just $\mc N_H \iso BH$. Their relative tangent complexes are both equivalent to $\hh^*$; This is immediate for $\hh^*/H$ since the adjoint $H$-action is trivial, and for $\gg^*/G$ it follows by considering deformations of the point $v$: the tangent complex $\bb T_{v, \gg^*/G}$ is equivalent to $\hh[1] \oplus \hh^*$.
\end{itemize}
\end{example}

In this example of the curve $C=\bb P^1$, since $\flat_G(C) \simeq \gg[-1]/G$ has no underived directions (i.e. trivial classical truncation) the degree shifting doesn't play any significant role. Then the desired equivalence would follow from Corollary \ref{support_corollary}, which up to this shift is the Koszul dual statement of Conjecture \ref{gauge_sym_breaking_conjecture} for the curve $\bb P^1$.  That is:

\begin{corollary}
Conjecture \ref{gauge_sym_breaking_conjecture} holds for the curve $C = \bb P^1$.  That is, there is an equivalence of categories
\[\IC_{\arth^v_G(\bb P^1) [t,t^{-1}]}(\flat_G(\bb P^1))[t,t^{-1}] \iso \IC_{\mc N_L}(\flat_L(\bb P^1))[t,t^{-1}]\] 
where $L$ is the centralizer of its semisimple lift $x$ in $\gg^*$.
\end{corollary}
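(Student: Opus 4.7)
The plan is to reduce the claim, via Proposition~\ref{SS_complete_intersection_prop} and Koszul duality, to Corollary~\ref{support_corollary}, exploiting the fact that $\flat_G(\bb P^1)$ has trivial classical truncation so that the $\CC(\!(t)\!)$-enrichment becomes essentially formal.

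First I would recall the presentation $\flat_G(\bb P^1) \iso BG \times_{\gg/G} BG \iso \gg[-1]/G$, which identifies the moduli of flat bundles on $\bb P^1$ with the spectral Hecke groupoid $\mr{Hecke}_G^{\mr{spec}}$ of Section~\ref{satake_section}. This exhibits $\flat_G(\bb P^1)$ as a global complete intersection in the sense of Section~\ref{global_complete_intersection_section} with $\mc X = \mc U = BG$ and $\mc V = \gg/G$. Under this description, the scheme of singularities is $\sing(\flat_G(\bb P^1)) \iso \gg^*/G$, and the substack $\arth^v_G(\bb P^1) = \arth_G(\bb P^1) \times_{\gg^*/G} \tilde v$ is exactly $\tilde v \subset \gg^*/G$. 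I would make parallel identifications for $L$: $\flat_L(\bb P^1) \iso BL \times_{\mathfrak{l}/L} BL$, $\sing(\flat_L(\bb P^1)) \iso \mathfrak{l}^*/L$, and $\mc N_L \subset \mathfrak{l}^*/L$ the nilpotent cone mod $L$.

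Next, I would apply Proposition~\ref{SS_complete_intersection_prop} on each side. On the $G$-side, with $\wt Y = \tilde v$, it identifies $\IC_{\arth^v_G(\bb P^1)(\!(t)\!)}(\flat_G(\bb P^1))(\!(t)\!)$ with the tensor product of $\IC(\flat_G(\bb P^1))(\!(t)\!)$ over $\mr{HC}^\bullet(BG/(\gg/G))(\!(t)\!)^{\mr{op}}\text{-mod}$ with its restriction to the support $\tilde v(\!(t)\!)$. Since in this case $\mc Z = \mc G_{\mc X/\mc V}$ coincides with the groupoid itself, Koszul duality directly identifies $\IC(\flat_G(\bb P^1)) \iso \mr{HC}^\bullet(BG/(\gg/G))^{\mr{op}}\text{-mod}$, so the tensor product collapses to the category of modules supported at $\tilde v$, which after the formality identification of the even Hochschild cohomology becomes $\QC(\gg^*/G)_{\tilde v}$ (after adjoining $t$). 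An identical argument on the $L$-side produces $\QC(\mathfrak{l}^*/L)_{\mc N_L}$. Corollary~\ref{support_corollary} then supplies the desired equivalence between the two restricted quasi-coherent sheaf categories.

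The final step is to promote this $\CC(\!(t)\!)$-enriched equivalence to the stated equivalence of dg categories with $t$ adjoined. This is essentially automatic for $\bb P^1$: because $\flat_G(\bb P^1)^{\mr{cl}} = BG$ has no ``underived'' positive-dimensional directions, all generators of the relevant Hochschild cochain algebras live in a single pair of even cohomological degrees coming from $\gg^*$ (resp.\ $\mathfrak{l}^*$), so the action of $t$ merely rescales gradings without introducing new data; the restricted category described by Definition~\ref{twist_restriction_def} therefore coincides with the naive restriction once $t$ is adjoined. The main technical point, I expect, is establishing Corollary~\ref{support_corollary} itself in sufficient detail, i.e.\ genuinely identifying $\QC(\gg^*/G)_{\tilde v}$ with $\QC(\mathfrak{l}^*/L)_{\mc N_L}$ as categories rather than merely at the level of formal completions. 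The geometric input is the transitivity of the $G$-action on the semisimple orbit of $v$ with stabilizer $L$ -- exhibiting $\tilde v \iso BL$ -- combined with the equivalence of formal completions of $\arth_G(C)$ along $\arth^v_G(C)$ and of $\arth_L(C)$ along $\mc N_L$ established in the preceding proposition, specialized to $C = \bb P^1$. Passing from this formal equivalence to an equivalence of sheaf categories with set-theoretic support uses that those support conditions are determined by formal thickenings of $BL$ in each side.
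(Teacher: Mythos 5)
Your proposal is correct and takes essentially the same route as the paper: use Koszul duality for $\flat_G(\bb P^1)\iso\gg[-1]/G$ to convert singular support on $\IC(\flat_G(\bb P^1))$ into ordinary support on $\QC(\gg^*/G)$ (and likewise for $L$), and then invoke Corollary~\ref{support_corollary}. The paper states the Koszul duality identification directly rather than unwinding Proposition~\ref{SS_complete_intersection_prop} explicitly as you do, but the underlying argument is the same.
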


\begin{proof}
We can identify $\IC(\flat_G(\bb P^1))[t,t^{-1}]$ with $\QC(\gg^*[2]/G)[t,t^{-1}]$ by Koszul duality (see \cite[Proposition 12.4.2]{ArinkinGaitsgory}).  Singular support conditions for ind-coherent sheaves become ordinary support conditions under this identification, so it suffices to show that
\[\QC(\gg^*[2]/G)[t,t^{-1}]_{\tilde{v}[t,t^{-1}]} \iso \QC(\mf l^*[2]/L)[t,t^{-1}]_{\mc N_L[t,t^{-1}]}\]
or equivalently that
\[\QC(\gg^*/G)_{\tilde{v}} \otimes \CC[t,t^{-1}]\text{-mod} \iso \QC(\mathfrak{l}^*/L)_{\mc N_L} \otimes \CC[t,t^{-1}]\text{-mod},\]
which is just Corollary \ref{support_corollary}. 
\end{proof}

\begin{remark}
Arinkin and Gaitsgory motivated the choice of the nilpotent singular support condition in part by proving that the category of ind-coherent sheaves with nilpotent singular support was the minimal subcategory of $\IC(\flat_G(C))$ where the pullback functors induced from the projection $P \to L$ from a parabolic subgroup to its Levi preserve compact objects.  From the point of view of the geometric Langlands program this condition is important because one expects the geometric Langlands functor to commute with the so-called \emph{geometric Eisenstein series} functors.  If $P \sub G$ is a parabolic subgroup with Levi $L$ then the (spectral) geometric Eisenstein series functor $\mr{Eis}^{\mr{spec}}_P$ is the pull-push functor $p_*q^!$ on ind-coherent sheaves induced from the span
\[\xymatrix{
&\flat_P(C) \ar[dl]_q \ar[dr]^p &\\
\flat_L(C) &&\flat_G(C).
}\]
This functor preserves the condition of having nilpotent singular support.  One can define a pull-push functor of D-modules identically on the automorphic side, and the geometric Langlands equivalence is required to intertwine these functors (possibly up to tensoring with a line bundle).

It's natural to ask what the relationship is between the gauge symmetry breaking we've discussed here and these geometric Eisenstein series functors, since they both involve a relationship between the Arinkin--Gaitsgory category at a group $G$ and a Levi subgroup $L$. Physically these two relationships have quite different origins: on the one hand by motion in the moduli of vacua and on the other by the existence of a parabolic domain wall \footnote{We should remark that the question of why the relationship of boundary conditions on either side of these parabolic domain walls should be preserved by S-duality is beyond the scope of this paper.}.  Conjecture \ref{gauge_sym_breaking_conjecture} implies that these two different notions will be compatible in the following sense.
\begin{corollary}
Let $P \sub G$ be a parabolic subgroup with Levi $L$, and let $H_L = L \cap H$ be the maximal torus in $L$ associated to a choice of maximal torus $H$ in $G$.  Choose a point $v \in \hh_L^*/W_L \sub \hh^*/W$ with stabilizer $M$ in $G$ and $M \cap L$ in $L$.  Then the geometric Eisenstein series functor
\[\mr{Eis}^{\mr{spec}}_P \colon \IC(\flat_L(C)) \to \IC(\flat_G(C))\]
preserves the full subcategory of objects compatible with the vacuum $v$.  That is, it restricts to the geometric Eisenstein series functor
\[\mr{Eis}^{\mr{spec}}_{P \cap M} \colon \IC_{\mc N_{L \cap M}}(\flat_{L \cap M}(C)) \to \IC_{\mc N_M}(\flat_M(C))\]
on the subcategory $\IC_{\mc N_{L \cap M}}(\flat_{L \cap M}(C))$ from 
\[\IC_{\arth_L^v(C) [t,t^{-1}]}(\flat_L(C))[t,t^{-1}] \iso \IC_{\mc N_{L \cap M}}(\flat_{L \cap M}(C)) [t,t^{-1}].\]
\end{corollary}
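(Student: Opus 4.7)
The plan is to break the corollary into two assertions: (a) that $\mr{Eis}^{\mr{spec}}_P$ preserves the subcategory of boundary conditions compatible with $v$; and (b) that under the identifications of Conjecture \ref{gauge_sym_breaking_conjecture} this restricted functor agrees with $\mr{Eis}^{\mr{spec}}_{P \cap M}$. Before proceeding I would verify that $P \cap M$ is indeed a parabolic of $M$ with Levi $L \cap M$: since both $L$ and $M$ are standard Levi subgroups containing the fixed maximal torus $H$, the Levi decomposition $P = L \cdot U_P$ intersects $M$ as $P \cap M = (L \cap M) \cdot (U_P \cap M)$, with $U_P \cap M$ unipotent.

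For (a) I would reduce the preservation property to a singular support computation using Theorem \ref{main_theorem}, which realizes the subcategories compatible with $v$ as $\IC_{\arth_L^v(C)(\!(t)\!)}(\flat_L(C))(\!(t)\!)$ and $\IC_{\arth_G^v(C)(\!(t)\!)}(\flat_G(C))(\!(t)\!)$ respectively. The assertion becomes that $\mr{Eis}^{\mr{spec}}_P = p_* q^!$ along the span $\flat_L(C) \xleftarrow{q} \flat_P(C) \xrightarrow{p} \flat_G(C)$ sends $\arth_L^v$-singular-support sheaves to $\arth_G^v$-singular-support sheaves. This should follow from the standard behavior of singular support under pull-push for quasi-smooth stacks, applied to the induced correspondences on schemes of singularities. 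The crucial geometric input is that if a flat $P$-bundle carries a flat section $\phi$ of $\gg_P^*$ whose projection to $\mathfrak{l}_P^*$ has semisimple part conjugate to $v$, then the semisimple part of $\phi$ in $\gg_P^*$ is itself conjugate to $v$; this is compatibility of the Chevalley maps $\mathfrak{l}^*/L \to \hh^*/W_L$ and $\gg^*/G \to \hh^*/W$ through the natural map $\hh^*/W_L \to \hh^*/W$, and is the same mechanism by which Arinkin--Gaitsgory establish preservation of nilpotent singular support in the case $v = 0$.

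For (b) I would invoke Conjecture \ref{gauge_sym_breaking_conjecture} on both the $G$ and $L$ sides, yielding vertical equivalences in the square
\[\xymatrix{
\IC_{\arth_L^v(C)(\!(t)\!)}(\flat_L(C))(\!(t)\!) \ar[r]^-{\mr{Eis}^{\mr{spec}}_P} \ar[d]_{\simeq} & \IC_{\arth_G^v(C)(\!(t)\!)}(\flat_G(C))(\!(t)\!) \ar[d]^{\simeq} \\
\IC_{\mc N_{L \cap M}}(\flat_{L \cap M}(C))(\!(t)\!) \ar[r]^-{\mr{Eis}^{\mr{spec}}_{P \cap M}} & \IC_{\mc N_M}(\flat_M(C))(\!(t)\!),
}\]
whose commutativity is the content of (b). The hard part will be establishing this commutativity. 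The vertical equivalences are built from the equivalence of formal completions $\arth_G(C)^\wedge_{\arth_G^v(C)} \iso \arth_M(C)^\wedge_{\mc N_M}$ of the preceding proposition, whose proof uses the orbit decomposition $\arth_G^{v,p}(C)/G \iso M \backslash \arth_M^{v,p}(C)$. Promoting this to a compatibility with parabolic induction requires verifying that the $P$-correspondence on the $G$ side matches the $(P \cap M)$-correspondence on the $M$ side under this decomposition; concretely, that framed $P$-bundles with flat sections defining a $v$-Arthur parameter are in natural bijection with framed $(P \cap M)$-bundles extending to $P$-bundles after reduction to $M$. While this is consistent with the string-theoretic picture of brane motion inducing gauge symmetry breaking to $M$, writing it out rigorously would require a refinement of the formal-completion argument tracking parabolic data cleanly, which lies beyond the scope of the current proof of the conjecture.
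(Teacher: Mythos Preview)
Your proposal is correct and matches the paper's approach essentially verbatim: the paper also splits the corollary into (a) preservation of $\arth^v$ singular support under $\mr{Eis}^{\mr{spec}}_P$, handled by the Arinkin--Gaitsgory argument with nilpotence replaced by the condition of mapping to $v$, and (b) identification with $\mr{Eis}^{\mr{spec}}_{P \cap M}$ via Conjecture \ref{gauge_sym_breaking_conjecture} together with the additional (unproven) claim that the conjectural equivalence intertwines the Eisenstein functors. Your proposal is in fact slightly more explicit than the paper's in checking that $P \cap M$ is parabolic in $M$ and in spelling out what the commutativity of the square would entail at the level of framed Arthur parameters.
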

The fact that the geometric Eisenstein series functor preserves the condition of having singular support in $\arth^v$ is straightforward, by an argument completely identical to that of \cite[Lemma 13.2.5, Proposition 13.2.6]{ArinkinGaitsgory}: one argues verbatim, replacing the condition of nilpotence with the condition of having image $v$ under the evaluation map $\arth_K(C) \to \hh_K^*/W_K$ (where $K = G$ or $L$) and including an invertible degree 2 parameter in order to make sense of the relevant support conditions.  One obtains the conjecture by combining this fact with Conjecture \ref{gauge_sym_breaking_conjecture}, along with the additional claim that the equivalence of Conjecture \ref{gauge_sym_breaking_conjecture} itself commutes with the geometric Eisenstein series functor.
\end{remark}

\subsection{Factorization Conjecture} \label{fact_categories_section}

As a consequence of Conjecture \ref{gauge_sym_breaking_conjecture}, the geometric Langlands categories appear to admit an additional structure: that of \emph{factorization}. For the gauge group $G=\GL_n$, since we can identify $x \in \hh^*/W$ as a point $\{x_1,\cdots,x_n\}$ in the $n$-th symmetric power $\sym^n(\CC)$ of $\CC$,  the category of boundary conditions compatible with the vacuum $x$ can be factorized as
\[\IC_{\mc N_{L_{\underline{x}_{1\cdots n}} }}(\flat_{L_{\underline{x}_{1\cdots n}}}(C) ) \simeq \IC_{\mc N_{L_{\underline{x}_{1\cdots m}}}}(\flat_{L_{\underline{x}_{1\cdots m}}}(C) ) \otimes \IC_{\mc N_{L_{\underline{x}_{m+1 \cdots n}} }}(\flat_{L_{\underline{x}_{m+1\cdots n}}} (C) ) \] 
for every decomposition $\{x_1,\cdots,x_n\}  = \{x_1,\cdots,x_m\} \sqcup \{x_{m+1} , \cdots,x_n\}$ with $x_i \neq x_j$ for $x_i \in \{x_1,\cdots,x_m\}$ and $x_j\in \{x_{m+1} ,\cdots, x_n\}$, which we denote by $\underline{x}_{1\cdots n} =\underline{x}_{1\cdots m}\sqcup \underline{x}_{m+1 \cdots n}$.

In fact, by considering a system 
\[\GL_1\hookrightarrow \GL_2 \hookrightarrow \cdots \hookrightarrow  \GL_n \hookrightarrow \cdots \]
over the base $\sym(\CC) = \coprod_n \sym^n(\CC)$, we should be able to assemble the $\GL_k$ geometric Langlands categories into a \emph{factorization category} (as introduced by Raskin \cite{RaskinChiral}: we refer the reader to that work for details on the theory of factorization categories). Note that we obtain a factorization object with an additional grading: a family depending on the whole configuration space of $\CC$ rather than depending just on the Ran space $\mr{Ran}(\CC)$, as discussed in \cite{FinkelbergSchechtman} and more recently revisited from a different perspective by Mirkovi\'c \cite{MirkovicLocal}.

\begin{remark}
As we discussed in Remark \ref{IIB_string_remark}, a certain twist of type IIB string theory on the background $T^*X \times \CC$ for a complex surface $X$ with $n$ D3 branes along $X \times \{x_1,\cdots, x_n\}$ is responsible for the B-twisted supersymmetric gauge theory with gauge group $\GL_n$. Therefore, the belief that the categorical geometric Langlands correspondence factorizes in the above sense corresponds to the motion of configurations of D3 branes in its string theoretic origin. The idea that one considers varying number of D branes is natural from the string theoretic point of view, because D branes are dynamic objects and in particular once we introduce D branes to the stage, the most natural thing to do is to let the number of them vary without any restriction, leading to our configuration. The idea that the resulting category of boundary conditions for the worldvolume theory should admit a factorization structure coming from vacuum compatibility seems new even in the physics literature, as far as we are aware.
\end{remark}

In order to formulate a precise conjecture, we'll keep the technical details manageable by first thinking of the geometric Langlands categories after composition with the map $\sym(\CC)\to \mr{Ran}(\CC)$ and second by considering an algebraic rather than categorical version of the statement by taking Hochschild cochains of the categories to obtain \textit{bona fide} factorization algebras.

\begin{conjecture} \label{factorization_conjecture}
There exists a factorization algebra $\mc C$ over $\CC$ with the stalk at a point $x\in \mr{Ran}(\CC)$ being 
\[ \mc C_{x} = \bigoplus_{n \geq 1 , \; \tilde{x} \in \sym^n(\CC) } \mr{HC}^\bullet ( \IC_{\mc N_{L_{ \tilde{x}}}}  (\flat_{L_{\tilde{x}}}(C) )),\] 
where $\tilde{x} \in \sym^n(\CC)$ is a lift of $x \in \mr{Ran}(\CC)$.
\end{conjecture}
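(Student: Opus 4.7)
The plan is to construct $\mc C$ as a prefactorization algebra indexed by configurations of points in $\CC$, and reduce the factorization axiom to an algebraic factorization at the level of categories attached to a product of Levi subgroups. The observation driving everything is that for any disjoint decomposition $\tilde x = \tilde x' \sqcup \tilde x''$ of a finite multi-set in $\CC$ (meaning no point of $\tilde x'$ coincides with a point of $\tilde x''$), the centralizer Levi factors as $L_{\tilde x} = L_{\tilde x'} \times L_{\tilde x''}$, inducing equivalences
\[\flat_{L_{\tilde x}}(C) \iso \flat_{L_{\tilde x'}}(C) \times \flat_{L_{\tilde x''}}(C), \qquad \mc N_{L_{\tilde x}} \iso \mc N_{L_{\tilde x'}} \times \mc N_{L_{\tilde x''}}\]
of quasi-smooth stacks and of the relevant closed subsets in their spaces of singularities.

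First I would establish the corresponding K\"unneth-type equivalence
\[\IC_{\mc N_{L_{\tilde x}}}(\flat_{L_{\tilde x}}(C)) \iso \IC_{\mc N_{L_{\tilde x'}}}(\flat_{L_{\tilde x'}}(C)) \otimes \IC_{\mc N_{L_{\tilde x''}}}(\flat_{L_{\tilde x''}}(C))\]
at the level of dg categories with singular support conditions, combining the compatibility of $\IC$ with products of quasi-smooth stacks with the fact that singular support conditions are preserved under external tensor products (cf.\ \cite[Section 7]{ArinkinGaitsgory}). Applying Hochschild cochains, which is symmetric monoidal on presentable stable dg categories, promotes this to a tensor product decomposition of the proposed stalks. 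I would then package this into factorization structure maps: for disjoint opens $U', U'' \subset V \subset \CC$, the disjoint-union map $\sym(U') \times \sym(U'') \hookrightarrow \sym(V)$ sending $(\tilde x', \tilde x'') \mapsto \tilde x' \sqcup \tilde x''$ induces a map $\mc C(U') \otimes \mc C(U'') \to \mc C(V)$ via the identifications above, and the associativity axiom for a prefactorization algebra follows from associativity of disjoint union of configurations.

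The hard part will be to lift this stalkwise tensor factorization to a genuine factorization algebra rather than a mere prefactorization algebra; this requires exhibiting a coherent family of Arinkin--Gaitsgory categories varying over the vacuum configuration, and in particular handling the non-flat loci in $\sym(\CC)$ where the Levi subgroup jumps. The natural strategy, following the perspective of Mirkovi\'c \cite{MirkovicLocal} and Finkelberg--Schechtman \cite{FinkelbergSchechtman}, is to first build a graded factorization algebra over the configuration space $\sym(\CC)$ whose value at $\tilde x \in \sym^n(\CC)$ is $\mr{HC}^\bullet(\IC_{\mc N_{L_{\tilde x}}}(\flat_{L_{\tilde x}}(C)))$, and then push forward along $\sym(\CC) \to \mr{Ran}(\CC)$ to recover the direct-sum stalks as stated. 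As with Conjecture \ref{gauge_sym_breaking_conjecture} on which the statement depends, the argument is moreover conditional on a satisfactory treatment of the interaction between the $\CC(\!(t)\!)$-linear structure inherent in the degree-shifting correction of Section \ref{degree_shift_section} and the factorization formalism, which is the principal technical obstacle to a complete proof.
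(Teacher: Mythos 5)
This statement is a conjecture the paper explicitly leaves open (``we hope to further investigate this conjecture and its variants in future work''); the paper offers only the motivating K\"unneth factorization for disjoint configurations and, in the remark that follows, identifies the content of the conjecture as the existence of canonical exit-path data assembling the constant sheaves on the strata of $\CC^k$ into a constructible sheaf. Your sketch reconstructs exactly this: the product decomposition $L_{\tilde x} = L_{\tilde x'} \times L_{\tilde x''}$ for disjointly supported configurations, the resulting equivalence of Arinkin--Gaitsgory categories, the push-forward along $\sym(\CC) \to \mr{Ran}(\CC)$ following Finkelberg--Schechtman and Mirkovi\'c, and the observation that the genuine obstacle is coherence across strata where the Levi jumps. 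So you are not proving anything the paper proves, but you have correctly recovered the paper's plan and, more importantly, isolated the same open problem.

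One inaccuracy worth correcting: Hochschild cochains $\mr{HC}^\bullet(-)$ is not symmetric monoidal on presentable stable dg categories; in general it is only \emph{lax} symmetric monoidal, i.e.\ there is a canonical map $\mr{HC}^\bullet(\mc C_1) \otimes \mr{HC}^\bullet(\mc C_2) \to \mr{HC}^\bullet(\mc C_1 \otimes \mc C_2)$ which need not be an equivalence (and the geometric Langlands categories are neither smooth nor proper, so one cannot invoke standard dualizability arguments to upgrade it). This does not actually undermine your plan --- lax monoidality supplies precisely the direction of map $\mc C(U') \otimes \mc C(U'') \to \mc C(V)$ needed for the prefactorization structure --- but it means you should not phrase this step as an ``equivalence.'' The rest of your identification of the hard part --- building a coherent family over $\sym(\CC)$ across the strata, and reconciling the $\CC(\!(t)\!)$-linear structure from the degree-shifting correction with the factorization formalism --- is exactly where the paper itself stops.
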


The operator product expansion here would carry fundamental information on how to relate geometric Langlands correspondences for different gauge groups. We hope to further investigate this conjecture and its variants in future work. 

\begin{remark}
A factorization algebra will define in particular a constructible sheaf of vector spaces over $\CC^k$ for each $k$. The provided stalk data determines the restriction of this conjectural sheaf to each stratum in $\CC^k$ with respect to the natural stratification by partitions of $k$, as a constant sheaf with the given value.  The conjecture says that it's possible to construct canonical exit path data assembling these constant sheaves into a constructible sheaf on each $\CC^k$.  This being done, the resulting sheaf on $\mr{Ran}(\CC)$ will automatically satisfy factorization.
\end{remark}

We'll conclude with some motivation for this conjecture coming from factorization structures appearing in a related construction: the context of cohomological Hall algebras in the work of Kontsevich and Soibelman \cite{KontsevichSoibelmanCHA}. We claim that the underlying reason why this factorization structure appears is similar to that of our setup. To explain this, let's consider a certain twisted type IIA string theory on a background of the form $\bb R^2 \times X\times  \bb C$, where $X$ is a Calabi--Yau 3-fold. Here the string theory is twisted (in the sense of Costello and Li \cite{CostelloLiSUGRA}) in such a way that it is the A-model on the background $\bb R^2$ and the B-model on $X\times \bb C$ with a linear superpotential on $\bb C$. A typical $\text{D}_{2k}$ brane of the IIA string theory in this background is of the form $\bb R \times Y \times \{0\}$, where $\bb R\subset \bb R^2$ is Lagrangian and $Y\subset X $ is a submanifold of complex dimension $k$.

We think of $X$ as defining a Calabi--Yau 3-category. On the other hand, a quiver with superpotential yields a Calabi--Yau 3-category of representations of the quiver, where an edge corresponds to a class in $\mr{Ext}^1$ and Serre duality determines the other Ext groups. Indeed, for some Calabi--Yau 3-folds, by thinking of spherical generators, one can explicitly construct such a quiver. Then the dimension vector of a given quiver is precisely the number of corresponding branes in the configuration.

\begin{example}
The Calabi--Yau 3-fold $X=\bb C^3$ corresponds to a quiver with a single vertex, three self-loops denoted by $x,y,z$, and a nontrivial superpotential $W =  xyz - xzy$. Here having a single vertex represents the fact that there is a unique compact cycle in $\bb C^3$, namely, a point. On the other hand, from the perspective of string theory, this point corresponds to a position $ \bb R\times \pt$ on which a D0 brane lives. For a fixed number of coincident D0 branes, say $N$, the Hilbert space of the 1-dimensional worldline theory is the equivariant cohomology $\mr H^\bullet_{\GL_N}(\mr{Crit}(W)_{X,Y,Z\in \mathfrak{gl}_N}, \mc P)$, where $\mc P$ is a sheaf of vanishing cycles.

Now the theorem of Kontsevich and Soibelman asserts that $\bigoplus_{N\geq 1 } \mr H_{\GL_N}^\bullet( \mr{Crit}(W) _{X,Y,Z\in \mathfrak{gl}_N}, \mc P)$ forms a factorization system over $\bigoplus_{N\geq 1} \mr H_{\GL_N}^\bullet( \pt )$. From our interpretation, we expect such a factorization structure to appear by varying the positions of the D0 branes; this is possible only in the direction of remaining B-model background which is identified as $\mr H_{\GL_N}^\bullet(\pt) \simeq \OO( \sym^N(\bb C))$. This is completely analogous to our story for D3 branes in type IIB string theory.
\end{example}

In other words, we propose a new source of factorization structures in terms of string theory. Incidentally, one can take this proposal as an answer to the question of Kontsevich and Soibelman to identify the origin of factorization system appearing in their work \cite[Section 2.5]{Soibelman}. We believe that there are many other surprising instances of factorization structures to be discovered from this perspective. 

\appendix
\appendixpage
\addappheadtotoc
\section{Singular Support for Global Complete Intersections} \label{GCI_appendix}

In this appendix we'll describe the proof of Proposition \ref{SS_complete_intersection_prop} from Section \ref{main_SS_section}.  The argument we'll give is only a slight modification of the arguments in \cite[Section 9.2]{ArinkinGaitsgory}.  We give a proof here, following Arinkin and Gaitsgory closely, in order to justify the modification where the auxiliary degree 2 parameter is included.

Recall the setup: $\mc Z$ was a global complete intersection stack, which meant that it could be written as a fiber product smooth stacks 
\[\xymatrix{
\mc Z \ar[r] \ar[d] &\mc U \ar[d] \\
\mc X\ar@<0.6ex>[r]&\mc V \ar@<0.6ex>[l]
}\]
where $\mc X \to \mc V$ is a section of a smooth schematic map $\mc V\to \mc X$.  One can associate to such a global complete intersection a pair of groupoids which we'll use heavily: namely $\mc G_{\mc Z/\mc U} := \mc Z \times_{\mc U} \mc Z$ as a derived group scheme over $\mc Z$ and $\mc G_{\mc X/\mc V} := \mc X \times_{\mc V} \mc X$, a derived group scheme over $\mc X$.

There is a canonical morphism $f \colon \sing(\mc Z) \to \sing(\mc G_{\mc X/\mc V} \times_{\mc X} \mc U) \iso V \times_{\mc X} \mc U$, obtained by first embedding $\sing(\mc Z)$ into the total space $V \times_{\mc X} \mc Z$ of the relative tangent complex $\bb T_{\mc Z /\mc X}$, where $V$ is the total space of the restriction of $\bb T_{\mc Z /\mc X}$ to $\mc X$, and then composing with the defining map $\mc Z \to \mc U$.  This definition of $V$ is equivalent to the definition of $V$ given in Section \ref{main_SS_section} by \cite[9.1.4]{ArinkinGaitsgory}.

\begin{prop}
If $Y \sub \sing(\mc Z)$ is a closed subset of the form $f^{-1}(\wt Y \times_{\mc X}\mc U)$, then there is a canonical equivalence
\[\IC_{Y[t,t^{-1}]}(\mc Z)[t,t^{-1}] \iso  \IC(\mc Z)[t,t^{-1}]  \otimes_{\mr{HC}^\bullet(\mc X / \mc V)[t,t^{-1}]^{\mr{op}}\text{-mod}}   \mr{HC}^\bullet(\mc X / \mc V)[t,t^{-1}]^{\mr{op}}\text{-mod}_{\wt Y[t,t^{-1}]} ,\]
where the category $\mr{HC}^\bullet(\mc X / \mc V)^{\mr{op}}\text{-mod}$ acts on $\IC(Z)$ by first using Koszul duality to identify it with the category $\IC(\mc G_{\mc X/\mc V})$, then using the monoidal functor
\[\IC(\mc G_{\mc X/\mc V}) \otimes_{\QC(\mc X)} \QC(\mc U) \to \IC(\mc G_{\mc Z/\mc U})\]
to define an action of $\IC(\mc G_{\mc X/\mc V})$ on $\IC(\mc Z)$. 
\end{prop}

\begin{proof}
We'll start by recalling the main ideas exploited by Arinkin and Gaitsgory, then introduce the parameter $t$.  According to \cite[Section 9.2]{ArinkinGaitsgory}, in this context we can understand singular support conditions for $\IC(\mc Z)$ in terms of an action of the monoidal category $\IC(\mc G_{\mc X/\mc V})$.  Indeed, the groupoid $\mc G_{\mc Z/\mc U}$ acts on $\mc Z$, and so the category $\IC(\mc Z)$ is a module over the monoidal category $\IC(\mc G_{\mc Z/\mc U})$.  There's a natural monoidal pullback functor 
\[\IC(\mc G_{\mc X/\mc V}) \otimes_{\QC(\mc X)} \QC(\mc U) \to \IC(\mc G_{\mc Z/\mc U})\]
which makes $\IC(\mc Z)$ into a module over $\IC(\mc G_{\mc X/\mc V}) \otimes_{\IC(\mc X)} \IC(\mc U)$, or indeed over $\IC(\mc G_{\mc X/\mc V})$.  In order to understand singular support in terms of this action, we first identify the category on the left hand side as
\begin{align*}
\IC(\mc G_{\mc X/\mc V}) \otimes_{\QC(\mc X)} \QC(\mc U) &\iso \IC(\mc G_{\mc X/\mc V}) \otimes_{\QC(\mc G_{\mc X/\mc V})} \left(\QC(\mc G_{\mc X/\mc V}) \otimes_{\QC(\mc X)} \QC(\mc U) \right) \\
&\iso \IC(\mc G_{\mc X/\mc V}) \otimes_{\QC(\mc G_{\mc X/\mc V})} \QC(\mc G_{\mc X/\mc V} \times_{\mc X} \mc U)\\
&\iso \IC(\mc G_{\mc X/\mc V} \times_{\mc X} \mc U).
\end{align*}

By \cite[Lemma 9.2.6]{ArinkinGaitsgory} the category $\IC_Y(\mc Z)$ is equivalent to the restricted category
\[\IC(\mc Z) \otimes_{\IC(\mc G_{\mc X/\mc V} \times_{\mc X} \mc U)} \IC_{f(Y)}(\mc G_{\mc X/\mc V} \times_{\mc X} \mc U).\]
We'll want a slightly modified version of this statement.  

Let us bring the auxiliary parameter $t$ into the story.  We claim that there is an equivalence
\[\IC_{f^{-1}(Y)[t,t^{-1}]}(\mc Z)[t,t^{-1}] \iso \IC(\mc Z)[t,t^{-1}] \otimes_{\IC(\mc G_{\mc X/\mc V} \times_{\mc X} \mc U)[t,t^{-1}]} \IC_{Y[t,t^{-1}]}(\mc G_{\mc X/\mc V} \times_{\mc X} \mc U)[t,t^{-1}]\]
where now $Y \sub \sing(\mc G_{\mc X/\mc V} \times_{\mc X} \mc U)$ is a Zariski-closed subset.  

This is proven in exactly the same way as \cite[Lemma 9.2.6]{ArinkinGaitsgory}.  As they argue, it suffices to prove the claim in the case where $\mc Z$ is affine.  This argument is not affected by the introduction of the parameter $t$ since the descent takes place independently of this parameter (heuristically we're performing descent for maps $Z \times \spec \CC[t,t^{-1}] \to \mc Z \times \spec \CC[t,t^{-1}]$ which are constant in the second factor).  The embedding $f \colon \sing(\mc Z) \to V \times_{\mc X} \mc U$ induces a map $f^*\colon \sym(V)[t,t^{-1}] \otimes_{\OO(\mc X)} \OO(\mc U) \to \OO(\sing(\mc Z))[t,t^{-1}]$.  We then note that the triangle
\begin{equation} \label{triangle_diagram} \xymatrix{
\sym(V)[t,t^{-1}] \otimes_{\OO(\mc X)} \OO(\mc U) \ar[rr]^{f^*} \ar[dr] &&\OO(\sing(\mc Z))[t,t^{-1}] \ar[dl] \\
&\mr{HH}^{\mr{even}}(\IC(\mc Z)[t,t^{-1}]) &
} \tag{*} \end{equation}
defined using this map and the action of $\IC(\mc G_{\mc X/\mc V} \times \mc U)[t,t^{-1}]$ on $\IC(\mc Z)[t,t^{-1}]$ commutes.  This implies the claim that we want since we can equivalently take the restriction to $f^{-1}(Y)[t,t^{-1}]$ with respect to the $\OO(\sing(\mc Z))[t,t^{-1}]$-action, or the restriction to its preimage under $f^*$, i.e. the restriction to $f(f^{-1}(Y))[t,t^{-1}] = Y[t,t^{-1}]$ with respect to the  $\sym(V)[t,t^{-1}] \otimes \OO(\mc U)$-action.

In particular, suppose $Y \sub V \times_{\mc X} \mc Z$ is a subset of the form $\wt Y \times_{\mc X} \mc Z$ for some closed subset $\wt Y \sub V$.  Then according to \cite[Proposition 8.4.14]{ArinkinGaitsgory} there is an equivalence
\[\IC_Y(\mc G_{\mc X/\mc V} \times_{\mc X} \mc U) \iso \IC_{\wt Y}(\mc G_{\mc X/\mc V}) \otimes_{\QC(\mc X)} \QC(\mc U).\]
We can make the same argument identically including the parameter $t$ in order to say that there is an equivalence
\[\IC_{Y[t,t^{-1}]}(\mc G_{\mc X/\mc V} \times_{\mc X} \mc U)[t,t^{-1}] \iso \IC_{\wt Y[t,t^{-1}]}(\mc G_{\mc X/\mc V})[t,t^{-1}] \otimes_{\QC(\mc X)[t,t^{-1}]} \QC(\mc U)[t,t^{-1}].\]

The upshot of all this is that if $Y \sub \sing(\mc Z)$ is a closed subset of the form $f^{-1}(\wt Y \times_{\mc X}\mc U)$ then we can understand the restricted category $\IC_{Y[t,t^{-1}]}(\mc Z)[t,t^{-1}]$ as the restriction along $\wt Y[t,t^{-1}]$ of $\IC(\mc Z)[t,t^{-1}]$ with respect to the action of $\IC(\mc G_{\mc X/\mc V})[t,t^{-1}]$.  Even better, by Koszul duality \cite[9.1.2]{ArinkinGaitsgory} (this is just the observation that the counit of the adjunction \ref{E_2_adjunction_prop} is an equivalence), we can identify $\IC(\mc G_{\mc X/\mc V})$ with the category $\mr{HC}^\bullet(\mc X / \mc V)^{\mr{op}}\text{-mod}$ and thus \cite[Corollary 9.1.7]{ArinkinGaitsgory} we are able to identify $\IC_{Y[t,t^{-1}]}(\mc Z)[t,t^{-1}]$ as the restriction along $\wt Y[t,t^{-1}]$ of $\IC(\mc Z)[t,t^{-1}]$ with respect to the action of $\mr{HC}^\bullet(\mc X / \mc V)[t,t^{-1}]^{\mr{op}}\text{-mod}$.  Specifically here we're using Koszul duality \emph{before} taking the tensor product with $\CC[t,t^{-1}]\text{-mod}$; it's still true that this induces an equivalence of categories
\[\mr{HC}^\bullet(\mc X / \mc V)[t,t^{-1}]^{\mr{op}}\text{-mod}_{Y[t,t^{-1}]} \iso \IC_{Y[t,t^{-1}]}(\mc G_{\mc X/\mc V})[t,t^{-1}]\]
with specified support conditions.  It suffices to check this smoothly locally, so we can assume $\mc X$ is affine in which case it follows from commutativity of the triangle (\ref{triangle_diagram}).
\end{proof}

\bibliographystyle{alpha}
\bibliography{GL-KW}

\textsc{Institut des Hautes \'Etudes Scientifiques}\\
\textsc{35 Route de Chartres, Bures-sur-Yvette, 91440, France}\\
\texttt{celliott@ihes.fr}\\
\vspace{5pt}\\
\textsc{Department of Mathematics, Yale University}\\
\textsc{10 Hillhouse Ave., New Haven, CT 06511, USA} \\
\texttt{philsang.yoo@yale.edu}
\end{document}